\documentclass[twocolumn,aps,pra,longbibliography,superscriptaddress,10pt]{revtex4-2}
\usepackage{graphicx}
\usepackage{amsmath}
\usepackage{amssymb}
\usepackage{amsfonts}
\usepackage{amsthm}
\usepackage{mathtools}
\usepackage{color}
\usepackage{hyperref}
\usepackage[T1]{fontenc}
\usepackage[english]{babel}
\usepackage{microtype}
\usepackage{braket}
\usepackage{enumitem}
\usepackage{tikz}
\usepackage{bm}
\usepackage{bbm}
\usepackage{bbold}
\usepackage{physics}
\usepackage{cleveref}
\usepackage{svg}
\usepackage{subcaption}
\usepackage{float}

\usepackage{dsfont}
\usepackage{makecell}
\usepackage[ruled,vlined,linesnumbered]{algorithm2e}
\usepackage{algorithmic}

\usepackage{times}

\hypersetup{colorlinks=true,urlcolor=[rgb]{0,0.6,0.6},citecolor=[rgb]{0,0,0.8},linkcolor=[rgb]{0.5,0,0}}

\usepackage[bbgreekl]{mathbbol}
\DeclareSymbolFontAlphabet{\mathbb}{AMSb}
\DeclareSymbolFontAlphabet{\mathbbl}{bbold}

\newtheorem{theorem}{Theorem}
\newtheorem{definition}{Definition}
\newtheorem{lemma}{Lemma}

\newtheorem{proposition}{Proposition}

\newtheorem{corollary}{Corollary}

\newtheorem{problem}{Problem}
\newtheorem{assumption}{Assumption}

\renewcommand{\epsilon}{\varepsilon}

\allowdisplaybreaks
\newcommand{\vect}[1]{\ensuremath{\mathbf{#1}}}

\newcommand{\E}{\mathbb{E}}
\newcommand{\Prb}{\mathbb{P}}

\DeclareMathOperator{\supp}{supp}

\renewcommand{\Tr}{\operatorname{Tr}}

\renewcommand{\u}{\vect{u}}

\renewcommand{\emptyset}{\varnothing}

\newcommand{\V}{\mathcal V}

\newcommand{\Lop}{\mathcal L}
\newcommand{\Dop}{\mathcal D}

\newcommand{\Id}{\operatorname{Id}}
\newcommand{\one}{\mathbf 1}
\newcommand{\eps}{\varepsilon}
\newcommand{\wt}{\operatorname{wt}}

\begin{document}

\title{Characterizing Arbitrary Lindbladian Dynamics with a Few Pauli Measurements}
\author{Taiqi Zhou}
\email{1155242363@link.cuhk.edu.hk}
\affiliation{Department of Information Engineering, The Chinese University of Hong Kong, Hong Kong, China}
\author{Weiyuan Gong}
\email{wgong@g.harvard.edu}
\affiliation{School of Engineering and Applied Sciences, Harvard University, Allston, MA 02134, USA}

\begin{abstract}
Quantum devices are open systems whose dynamics interleave coherent evolution with dissipation, and benchmarking, error mitigation, and error correction all rest on a faithful model of both. 
Existing characterization protocols either assume prior knowledge of the interaction and noise structure, or demand ancillas, entangled probes, or mid-circuit control, or capture only the Pauli-diagonal part of the noise. 
Here, we present a protocol that reconstructs an arbitrary sparse Markovian generator, including every Hamiltonian together with the jump operator coefficients, using only product Pauli state preparation, single uninterrupted forward evolutions, and product Pauli measurements. 
Given a sparsity budget $M_0$ and a strength bound $\Gamma$ of the Lindbladian, every coefficient is learned to precision $\epsilon$ from $\widetilde{O}(\Gamma^2M_0^2/\epsilon^4)$ experiments and $\widetilde{O}(\Gamma M_0^2/\epsilon^2)$ total evolution time, with both supports identified from data without locality assumptions. 
The protocol runs at a logarithmic number of positive evolution times on a hardware clock lattice and is provably robust to calibrated state-preparation and measurement errors.
\end{abstract}

\maketitle

Quantum devices in labs can be regarded as open systems: their registers evolve under an engineered Hamiltonian while unavoidably exchanging energy and information with an environment.
It is natural to assume that the interaction is Markovian, and the dynamics are thus generated by a Lindbladian~\cite{lindblad1976generators,gorini1976completely,breuer2002theory}, whose coherent part is the Hamiltonian and dissipative part encodes the noise processes~\cite{wolf2008assessing,cubitt2012complexity,rivas2014quantum}.
Characterizing a quantitative model of Lindbladian dynamics serves as a key component across quantum technology~\cite{preskill2018quantum}: it calibrates and certifies analog and digital simulators~\cite{lloyd1996universal,georgescu2014quantum,altman2021quantum,kraft2025bounded}, measures the sensitivity of quantum sensors~\cite{degen2017quantum}, formulate error-correction thresholds and decoder priors~\cite{terhal2015quantum,google2023suppressing,google2025quantum}, and is employed directly by error mitigation—probabilistic error cancellation removes noise by inverting a learned sparse Pauli-Lindblad model~\cite{temme2017error,li2017efficient,cai2023quantum,kim2023evidence,van2023probabilistic}.
In these tasks, the coherent and dissipative contributions are entangled in any measured signal, and mistaking one for the other yields miscalibrated gates, biased mitigation, and misleading benchmarks.
The key question is therefore to characterize the Lindbladian using no more than what the near-term quantum device natively offers: preparing product states, evolving the system, and reading out qubit by qubit.

Despite the practicality requirement, the intriguing structure of real-world quantum dynamics motivates the need for ansatz-free dynamics characterization~\cite{hu2025ansatz}.
While characterizing Lindbladian dynamics without assumptions is possible in principle, it may be prohibitively expensive: process and gate-set tomography reconstruct the full dynamical map at a cost exponential in the number of qubits~\cite{chuang1997prescription,poyatos1997complete,merkel2013self,blume2017demonstration,nielsen2021gate,odonell2016efficient,haah2016sample}, and scalable certification and randomized-measurement toolboxes trade completeness for targeted guarantees~\cite{eisert2020quantum,elben2023randomized}.
Learning many-body dynamics efficiently first became possible by assuming structure. 
Given an interaction ansatz, typically geometric locality, Hamiltonian parameters can be recovered with polynomial resources from local measurements, steady and thermal states, or short-time quenches~\cite{da2011practical,granade2012robust,wiebe2014hamiltonian,wang2015hamiltonian,wang2017experimental,bairey2019learning,qi2019determining,evans2019scalable,anshu2021sample,zubida2021optimal,bakshi2024learning,haah2022optimal,gu2024practical,bakshi2024structure,ma2024learning}, with robustness to imperfections~\cite{francca2024efficient,rouze2024learning,yu2023robust,hangleiter2024robustly,guo2025hamiltonian} and, when interleaved control pulses reshape the evolution, at metrological optimal Heisenberg-limited precision~\cite{huang2023learning,giovannetti2006quantum,giovannetti2011advances,demkowicz2012elusive,zhou2018achieving}.
A line of recent works then removes the ansatz assumption, proposing structure-learning algorithms that discover which interactions are present while estimating their strengths~\cite{hu2025ansatz,zhao2025learning,sinha2025improved,castaneda2025hamiltonian}, reaching Heisenberg scaling without any prior support knowledge~\cite{hu2025ansatz}.
The required quantum resources, however, reach well beyond the capabilities of near-term quantum devices: deep circuits with interleaved probe pulses and extremely fine time resolution~\cite{hu2025ansatz}, engineered inverse evolutions with ancilla assistance~\cite{zhao2025learning}, entangled Bell-basis measurements~\cite{sinha2025improved}, or coherent access to Choi-type encodings~\cite{castaneda2025hamiltonian}.

Under the experimental-friendly restrictions without intermediate control and coherent state preparation or measurements, the protocols are provably weaker~\cite{liu2025optimal} as the total evolution time obeys a standard-quantum-limit floor of order $\epsilon^{-2}$ instead of the Heisenberg scaling $\epsilon^{-1}$~\cite{dutkiewicz2024advantage,chen2025lower,shin2026heisenberg}.
For closed systems, this floor was recently saturated: an ansatz-free Hamiltonian of strength $\Gamma$ can be learned in situ with optimal total evolution time $\Theta((\Gamma/\epsilon^2)\log(\Gamma/\epsilon))$ using only product Pauli preparations and measurements~\cite{zhou2026optimal}.
Ansatz-free characterization for open systems in situ efficiently, however, remains unsolved.
Beyond exponential-cost Lindblad tomography of few-qubit devices~\cite{bouland2003robust,howard2006quantum,samach2022lindblad}, scalable Lindbladian dynamics learning exists almost exclusively for the Pauli-diagonal sector: twirling and randomized compiling enforce an effective Pauli channel~\cite{emerson2005scalable,knill2008randomized,megesan2011scalable,emerson2007symmetrized,wallman2016noise,erhard2019characterizing,helsen2022general} whose error rates, the diagonal of the dissipation term, are efficiently learnable~\cite{flammia2020efficient,flammia2021pauli,harper2021fast,odonnell2026spam} and feed probabilistic error cancellation~\cite{van2023probabilistic}. 
Beyond the diagonal, protocols either assume a local Lindblad ansatz or structural priors~\cite{francca2024efficient,francca2025learning,bairey2020learning,pastori2022characterization,olsacher2025hamiltonian,dobrynin2025compressed,vandenberg2025large} or demand resources far beyond bare devices such as entangled Bell-basis sampling~\cite{sinha2026efficient}, or recursive error-correcting encodings that make QEC itself the learning primitive~\cite{romanov2026learning}
The first ansatz-free in-situ Lindbladian protocol~\cite{ivashkov2026ansatz} is ancilla-free, yet its sample complexity is governed by the conditioning of a linear system that can be exponential in the system size. 
Concurrent structure-learning schemes remain restricted to (quasi-)local generators~\cite{arad2026near,lewis2026learning,mobus2026robust} in in-situ settings, and even detecting dissipation optimally is subtle~\cite{cai2026optimal}.

In this work, we propose a protocol that closes this gap and efficiently reconstructs an arbitrary sparse Lindbladian, including every Hamiltonian and dissipation term coefficient, from product Pauli state preparation, single uninterrupted forward evolutions, and product Pauli measurements available on near-term devices.
Given a sparsity budget $M_0$ and a strength bound $\Gamma$, all coefficients are learned to precision $\eps$ from $\widetilde{O}(\Gamma^2M_0^2/\epsilon^4)$ experiments and $\widetilde{O}(\Gamma M_0^2/\epsilon^2)$ total evolution time, whose $\eps$ and $\Gamma$ dependence saturates the control-free lower bound established for the Hamiltonian special case~\cite{zhou2026optimal}.
Passing from closed to open systems breaks the closed-system toolkit~\cite{zhou2026optimal} in three places: the measurement signals lose the band-limited and trigonometric structure that closed-system protocols exploit, dissipation can damp coherent signatures, and Hamiltonian and dissipative terms collide on shared Pauli labels. 
We resolve these, respectively, with a one-sided polynomial differentiation scheme operating at a logarithmic number of strictly positive times, a positivity (no-jump) argument showing that decoherence can never hide a large Hamiltonian coefficient, and an exact phase-orthogonality identity that separates the two contributions label by label. 
Unlike Refs.~\cite{romanov2026learning,ivashkov2026ansatz}, the guarantees are unconditional, requiring neither structural assumptions nor error-correcting encodings, and they persist under hardware-lattice evolution times and calibrated state-preparation and measurement errors.

\textit{Notations.---}An $n$-qubit state is a density matrix $\rho$.
Pauli strings are labeled by the symplectic space $\V:=\mathbb{F}_2^{2n}$: a label $a=(x|z)$ records the $X$- and $Z$-parts of a Hermitian Pauli string $E_a$, the phases are fixed so that $E_aE_b=\omega(a,b)E_{a+b}$ with $\omega(a,b)\in\{\pm1,\pm i\}$, and $a+b$ denotes the bitwise sum of labels.
Whether two strings commute is recorded by the symplectic bit $[a,b]\in\mathbb{F}_2$ through $E_aE_b=(-1)^{[a,b]}E_bE_a$.
A product Pauli preparation prepares an eigenstate of one chosen Pauli observable on each qubit, and a product Pauli measurement reads out one chosen Pauli observable on each qubit.
A shot whose preparation and measurement bases coincide qubit by qubit is called a \emph{same-Pauli measurement}, and is called a \emph{cross-Pauli measurement} otherwise.
Throughout our protocol, same-Pauli measurements will locate the supports, and cross-Pauli measurements will estimate the coefficient values.

The unknown Lindbladian is expressed in the Pauli-Kossakowski form
\begin{equation}\label{eq:lindbladian}
\begin{split}
\Lop(\rho)&=-i\sum_{s\in\V\setminus\{0\}}h_s[E_s,\rho]\\
&\quad+\sum_{u,v\in\V\setminus\{0\}}A_{uv}\Big(E_u\rho E_v-\tfrac12\{E_vE_u,\rho\}\Big),
\end{split}
\end{equation}
where the real coefficients $h_s$ specify the Hamiltonian $H=\sum_s h_sE_s$, the Hermitian positive semidefinite Kossakowski matrix $A=(A_{uv})$, satisfying $A=A^\dagger\succeq0$, specifies the dissipation, and $\{\cdot,\cdot\}$ represents the anti-commutator. 
The diagonal entry $A_{u,u}$ is the noise rate of the Pauli direction $E_u$, and the possibly complex off-diagonal entries record correlations between different jump directions.
Positive semidefiniteness gives $|A_{uv}|^2\le A_{u,u}A_{v,v}$, so every entry involving a weak diagonal is itself small, a fact used repeatedly below.
We define the supports $S_H:=\{s:h_s\ne0\}$ and $S_D:=\{u:A_{u,u}>0\}$ such that every nonzero entry of $A$ lies inside the block $S_D\times S_D$, and the sparsity is $M:=|S_H|+|S_D|^2$.
The learner is given a budget $M_0\ge M$, the normalization $|h_s|\le1$ and $|A_{uv}|\le1$, and a strength bound $\Gamma\ge\Gamma_*$ on the physical scale
\begin{equation}\label{eq:gammastar}
\Gamma_*:=2\|H\|_\infty+2\!\!\sum_{u,v\in S_D}\!\!|A_{uv}|,
\end{equation}
which bounds every time derivative of the measured responses.
Since $\Gamma_*\le2M_0$, the choice $\Gamma=2M_0$ is always sound.

The access model contains exactly what a near-term quantum device natively offers.
Each experimental shot (i) prepares a product Pauli eigenstate, (ii) evolves the system once under $e^{t\Lop}$ for a chosen time $t\ge0$, (iii) performs a product Pauli measurement, and (iv) postprocesses the record classically.
There are no ancillas, no inverse evolutions, no coherent controls inserted during the evolution, and no mid-circuit operations.
The learning task is to output estimates $(\widehat h,\widehat A)$ such that, with probability at least $1-\delta$, $\max_{s\ne0}|\widehat h_s-h_s|\le\eps$ and $\max_{u,v\ne0}|\widehat A_{uv}-A_{uv}|\le\eps$, while neither $S_H$ nor $S_D$ is known in advance.

\begin{algorithm}[htbp]
\SetAlgoLined
\caption{Control-free Lindbladian reconstruction.}
\label{alg:lindblad}
\KwIn{Parameters $M_0$, $\Gamma\ge\Gamma_*$, $\eps$, and $\delta$}
\KwOut{Estimations $\widehat h=\{\widehat h_s\}$, $\widehat A=\{\widehat A_{uv}\}$}
\tcc{Subroutines}
\begin{enumerate}[label=(\roman*),leftmargin=2.5mm]
\item Fit constants $c,c_0,c_\xi,C_0$ based on the input $(M_0,\Gamma,\eps,\delta)$.
\item A \emph{trace shot} $E_q\to E_{q+s}$ with time $t$ prepares a random $E_q$-eigenstate (sign $r$), evolves $e^{t\Lop}$, measures $E_{q+s}$, and returns $Z=rm$.
\item For a Pauli string label $c$ and $s$, define $(c_i,a_i,b_i)$ cyclic in $(X,Y,Z)$. 
Define Pauli string labels $q_{c,s},o_{c,s}$ as follows: if $s_i=a_i$ or $s_i=c_i$, take $q_{c,s}=a_i$, otherwise take $(q_{c,s})_i=I$. 
Take $o_{c,s}$ such that $q_{c,s}+o_{c,s}=s$
\item Decoder $\texttt{dec}$: $(0,0)\to I$, $(0,1)\to X$, $(1,0)\to Y$,  and $(1,1)\to Z$.
\end{enumerate}
\tcc{The main body}
\lIf{$M_0=0$}{\Return $\widehat h\equiv0,\ \widehat A\equiv0$}
\nl Set $\xi\leftarrow c_\xi\eps^2/M_0$, $\tau_D\leftarrow c_0\xi/\Gamma^2$, $p_\xi\leftarrow\tau_D\xi$, $\tau\leftarrow1/(2\Gamma)$, $R\leftarrow\lceil C_0\log(e{+}\Gamma/\eps)\rceil$\;
\nl Build the distribution $k_R$ on $[0,1]$ by the Legendre kernel, set $B_R\leftarrow\int_0^1|k_R(u)|du$ and distribution $\pi_R\propto|k_R|$, set weight $\kappa(u)\leftarrow\tfrac{B_R}{\tau}\operatorname{sgn}k_R(u)$\;
\tcc{Stage 1: Dissipation structure}
\For{$B\in\{X^n,Y^n,Z^n\}$}{
  Over $N_1\leftarrow\lceil C_0p_\xi^{-1}\log(6M_0/\delta)\rceil$ shots (random $B$-eigenstate signs $x$; evolve $\tau_D$; measure $B\to z$), count each flip pattern $y_i=\one\{z_i\ne x_i\}$\;
  Set $\mathcal Y_B\leftarrow\{y\ne0:\mathrm{count}(y)\ge N_1p_\xi/4\}\cup\{0\}$\;
}
Set $C\leftarrow\{\texttt{dec}(y_X,y_Y):y_X,y_Y,y_X+y_Y\in\mathcal Y_{X^n},\mathcal Y_{Y^n}$,\ $\mathcal Y_{Z^n}\}\setminus\{0\}$, $\Sigma_C\leftarrow\{u{+}v:u,v\in C\}$, $\Delta_C\leftarrow\Sigma_C\setminus\{0\}$\;
\tcc{Stage 2: Hamiltonian structure}
Set $p_*\leftarrow c_0\eps^2/(\Gamma^2R^6)$ and $N_2\leftarrow\lceil C_0p_*^{-1}\log(4M_0/\delta)\rceil$\;
Initialize candidate $X$- and $Z$-projection sets $\widehat\Pi_x,\widehat\Pi_z\leftarrow\varnothing$\;
\For{$(W,\Pi)\in\{(Z,\widehat\Pi_x),(X,\widehat\Pi_z)\}$}{
  Over $N_2$ shots (draw $U\sim\pi_R$; prepare a uniformly random product $W$-eigenstate $\ket{a}_W$; evolve $\tau U$; measure $W\to b$), and add every $a{+}b\ne0$ to $\Pi$\;
}
Set $\widehat U_2\leftarrow\big((\widehat\Pi_x{\cup}\{0\})\times(\widehat\Pi_z{\cup}\{0\})\big)\setminus\{0\}$, $\widehat U_3\leftarrow\widehat U_2\setminus\Delta_C$\;
\tcc{Stage 3: Coefficients}
Set $N_3=\lceil C_0\Gamma^2R^6\eps^{-2}\log(4M_0/\delta)\rceil$\;
\For{$s\in\Sigma_C$}{
  $I_s\leftarrow\{a\in C:a{+}s\in C\}$\;
  \lIf{$s\ne\!0$}{$I_s \leftarrow I_s\cup\{0,s\}$}
  Over $N_3$ shots (draw $U \sim \pi_R$, $q \sim \mathrm{Unif}(\V)$; trace shot $Z$ of $E_q \to E_{q+s}$ at $\tau U$), for all $a \in I_s$: 
  $\widehat R_{a,a+s} \leftarrow \mathrm{mean}[\kappa(U)\,\overline{\gamma_{a,a+s}(q)}\,Z]$, where $\gamma_{a,a+s}(q) = 2^{-n}\Tr(E_{q+s}E_aE_qE_{a+s})$\;
}
Over $N_3$ shots (draw $U \sim \pi_R$, Pauli string with label $c \sim \mathrm{Unif}\{X,Y,Z\}^n$; prepare $a_i$-eigenstates signs $s_i$; evolve $\tau U$; measure $b_i \to m_i$): for each $s \in \widehat U_3$ with an odd number of differences from $c$, $\widehat h_s^{\mathrm{lin}} \leftarrow \mathrm{mean} \big[\tfrac{\kappa(U)Z_s}{2\sigma_c(s)q(s)}\big]$, $Z_s = \prod_{\supp q_{c,s}}  s_i  \prod_{\supp o_{c,s}}  m_i$\;
\tcc{Output}
$\widehat A_{uv} \leftarrow \tfrac12(\widehat R_{u,v}{+}\overline{\widehat R_{v,u}})$ for $u,v \in C$, else $0$\;
$\widehat h_s \leftarrow \operatorname{Re}\tfrac{\widehat R_{0,s}-\widehat R_{s,0}}{2i}$ on $\Delta_C$;\ \ $\widehat h_s \leftarrow \widehat h_s^{\mathrm{lin}}$ on $\widehat U_3$;\ \ else $0$\;
\Return{$\widehat h,\ \widehat A$}\;
\end{algorithm}

\textit{The protocol and performance guarantee.---}Our protocol is summarized in Algorithm~\ref{alg:lindblad}. 
Its performance is guaranteed as follows, as our main result.
\begin{theorem}[Informal, see Theorem 4 in the Supplementary Materials~\cite{supplement}]\label{thm:main}
Given $M_0$, $\Gamma\ge\Gamma_*$, $\eps$, and $\delta$, Algorithm~\ref{alg:lindblad} outputs estimates $(\widehat h,\widehat A)$ satisfying $\max_{s\ne0}|\widehat h_s-h_s|\le\eps$ and $\max_{u,v\ne0}|\widehat A_{uv}-A_{uv}|\le\eps$ with probability at least $1-\delta$, using $\widetilde{O}(\Gamma^2M_0^2/\eps^4)$ experimental shots and $\widetilde{O}(\Gamma M_0^2/\eps^2)$ total evolution time.
\end{theorem}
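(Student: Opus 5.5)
We prove the theorem by following the three stages of Algorithm~\ref{alg:lindblad}. Each stage gets its own correctness guarantee, with failure probability a constant fraction of $\delta$, and a union bound combines them. The backbone is a derivative-extraction lemma. For any observable response $f(t)=2^{-n}\Tr(E_{q+s}\,e^{t\Lop}(E_q))$, all derivatives obey $|f^{(k)}(t)|\le\Gamma_*^k\le\Gamma^k$ by Eq.~\eqref{eq:gammastar}. With $\tau=1/(2\Gamma)$, the Taylor remainder after degree $R$ is therefore at most $2^{-R}$. The Legendre kernel $k_R$ on $[0,1]$ is chosen to integrate polynomials of degree $\le R$ to their derivative at $0$, scaled by $1/\tau$, so that
\[
\Big|\tfrac{1}{\tau}\int_0^1 k_R(u)f(\tau u)\,du-f'(0)\Big|\le B_R\,2^{-R}/\tau .
\]
Taking $R=\lceil C_0\log(e+\Gamma/\eps)\rceil$ makes this bias $\le\eps/4$, since $B_R=O(R^3)$ grows only polynomially. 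Importance sampling $U\sim\pi_R$ with weight $\kappa(U)$ then yields an unbiased estimator of the smoothed derivative with range $O(\Gamma R^3)$. Hoeffding's inequality with $N_3=O(\Gamma^2R^6\eps^{-2}\log(M_0/\delta))$ samples gives precision $\eps/4$ simultaneously over all $O(M_0)$ labels.

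\textbf{Stage 1 (dissipation support).} We expand $e^{\tau_D\Lop}$ to first order. The probability of a flip pattern $y$ in basis $B$ equals $\tau_D$ times a sum of diagonal rates $A_{u,u}$ over jumps $u$ whose $B$-anticommutation pattern is $y$, up to $O(\tau_D^2\Gamma^2)$. Off-diagonal terms cancel because of the random input signs. Any $u$ with $A_{u,u}\ge\xi$ then produces a pattern of frequency $\ge p_\xi$, and a Chernoff bound ensures the threshold $N_1p_\xi/4$ keeps it. Conversely, every kept pattern has true weight $\gtrsim p_\xi/8$, so at most $O(M_0/\xi)$ patterns survive. The label $u$ is recovered from its patterns in the $X^n$ and $Y^n$ bases via \texttt{dec}, with the $Z^n$ pattern as a consistency check. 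The missed directions have $A_{u,u}<\xi$. By $|A_{uv}|^2\le A_{u,u}A_{v,v}\le\xi$, every entry involving them is $\le\sqrt{\xi}\lesssim\eps$ when $\xi=c_\xi\eps^2/M_0$. Setting them to zero therefore costs at most $\eps$.

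\textbf{Stage 2 (Hamiltonian support).} We apply the same first-order analysis to $W$-basis flips. We show that every $s$ with $|h_s|\ge\eps/2$ yields its $X$-projection (resp.\ $Z$-projection) with probability $\gtrsim p_*$ in some shot. This is the \emph{no-jump positivity} step, which we expect to be the main obstacle, because dissipation could a priori damp the coherent transition amplitude. We plan to write $e^{t\Lop}$ as the no-jump evolution $e^{-itK}\rho e^{itK^\dagger}$, with $K=H-\tfrac i2\sum A_{uv}E_vE_u$, plus a completely positive remainder. The remainder only adds probability, so the flip probability is at least the squared no-jump amplitude, which is $t^2|h_s|^2(1-O(t\Gamma))$. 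The kernel-weighted average over $U$ keeps this $\gtrsim\eps^2/(\Gamma^2R^6)$. The product $\widehat U_2$ of projections then contains $S_H$ with $|\widehat U_2|=O(M_0^2)$. Removing $\Delta_C$ leaves labels that cannot collide with any dissipative contribution.

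\textbf{Stage 3 (coefficients) and accounting.} For $s\in\Sigma_C$ and $q$ uniform, we compute $f'(0)$ for the trace shot. The Pauli character orthogonality of $\gamma_{a,a+s}(q)$ over $q$ isolates $R_{a,a+s}=A_{a,a+s}$, together with the Hamiltonian terms $\pm ih_s$ on the entries $(0,s),(s,0)$. The exact phase-orthogonality identity then gives $\widehat A_{uv}$ and $\widehat h_s=\operatorname{Re}\frac{R_{0,s}-R_{s,0}}{2i}$ with errors $\le\eps$. On $\widehat U_3$ only the Hamiltonian contributes at linear order, so the cross-Pauli estimator $\widehat h^{\mathrm{lin}}_s$ is unbiased up to the kernel bias. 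A random $c$ hits an admissible pattern for $s$ with constant-per-qubit probability, absorbed into $\sigma_c(s)q(s)$.

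For the totals, the shot counts are $N_1=\widetilde O(\Gamma^2M_0^2/\eps^4)$, $N_2=\widetilde O(\Gamma^2/\eps^2)$, and $3N_3$ per label over $O(M_0^2)$ labels. This gives $\widetilde O(\Gamma^2M_0^2/\eps^4)$ shots. For evolution time, each Stage 1 shot lasts $\tau_D=O(\eps^2/(\Gamma^2M_0))$, contributing $N_1\tau_D=\widetilde O(M_0/\xi)=\widetilde O(M_0^2/\eps^2)$. Stages 2--3 shots last $\le\tau=1/(2\Gamma)$, contributing $\widetilde O(\Gamma M_0^2/\eps^2)$.
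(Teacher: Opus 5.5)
Your architecture matches the paper's: the Legendre-kernel endpoint derivative, no-jump domination, Walsh inversion of fixed-sum blocks, and a final union bound. The complexity claim, however, has a real gap. The $\widetilde O(\Gamma M_0^2/\eps^2)$ time comes from running one fixed-sum block per $s\in\Sigma_C$. So you need $|\Sigma_C|\le|C|^2=O(M_0^2)$, i.e.\ $|C|=O(M_0)$. You use this in the accounting but never prove it. Your Stage~1 count (``at most $O(M_0/\xi)$ patterns survive'') only gives $|C|\le|\mathcal Y_{X^n}|\cdot|\mathcal Y_{Y^n}|=O(M_0^2/\xi^2)$, which would inflate Stage~3 by a large polynomial in $M_0/\eps$. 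Your per-pattern claim that every kept pattern has true weight $\gtrsim p_\xi/8$ would also need a union bound over up to $2^n$ patterns. The missing step is structural. The first-order flip mass lives only on $\mathcal S_B=\{\alpha_B(u):u\in S_D\}\cup\{0\}$, and the second-order leakage outside $\mathcal S_B$ has \emph{total} probability at most $C_1\Gamma^2\tau_D^2\le p_\xi/16$. One Chernoff bound on the aggregate count outside $\mathcal S_B$ then shows no such pattern reaches $N_1p_\xi/4$. Hence $\mathcal Y_B\subseteq\mathcal S_B$, $|\mathcal Y_B|\le M_D+1$, and $|C|\le(M_D+1)^2\le4M_D^2\le4M_0$.

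Two analytic steps are also wrong as stated. In Stage~2, the no-jump amplitude for a fixed input is not $th_s(1+O(t\Gamma))$. The quantity $\langle a+d|K_0|a\rangle_Z=\sum_z g_{d,z}\theta_Z(d,z)(-1)^{z\cdot a}$ mixes every Hamiltonian and $J$ term with $x$-projection $d$, and it can vanish (e.g.\ $H=h(X_1+X_1Z_2)$ with $a_2=1$). Moreover, the remainder is $O(t^2\Gamma^2)$ in absolute size, which swamps $t\eps$ at $t\sim1/\Gamma$. The paper instead averages over the random input and uses Parseval, $\|(F_d^Z)'(0)\|_2^2=\sum_z|g_{d,z}|^2\ge|g_s|^2$, with $g_s=-ih_s-j_s/2$ and $h_s,j_s$ real (by Hermiticity of $H$ and $J$), so dissipation cannot cancel $h_s$. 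The bound $\E_U\|F_d^Z(\tau U)\|^2\gtrsim\eps^2/(\Gamma^2R^6)$ then follows from the kernel identity plus Cauchy--Schwarz. That step needs $\|(F_d^Z)^{(\ell)}\|_2\le\Gamma^\ell$, i.e.\ the contraction $\|e^{tK_0}\|_\infty\le1$ from $K_0+K_0^*=-J\preceq0$.

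In Stage~3, it is false that only the Hamiltonian contributes at linear order on $\widehat U_3$. Pairs $u+v=s$ with a light row in $S_D\setminus C$ survive, and you must bound their sum: $|b_{c,s}|\le2\beta_C(s)\le2\sqrt{M_0\xi}=2\sqrt{c_\xi}\,\eps$, using $|A_{uv}|\le\sqrt\xi$ and at most $|S_D|\le\sqrt{M_0}$ pairs per $s$. This aggregate bound, not $\sqrt\xi\lesssim\eps$, is why $\xi$ carries the factor $1/M_0$.

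Minor slips:
\begin{itemize}
\item $\widehat U_2$ is only guaranteed to contain the labels outside $\Delta_C$ with $|h_s|>\eps$, not all of $S_H$; the rest are output as $0$ with error $\le\eps$.
\item $|\widehat U_2|$ can be as large as $\widetilde O(\Gamma^4/\eps^4)$ rather than $O(M_0^2)$. This is harmless, since it enters only logarithmically.
\item The Stage~1 time is $N_1\tau_D=\widetilde O(1/\xi)=\widetilde O(M_0/\eps^2)$.
\end{itemize}
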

At the sound scale $\Gamma=2M_0$, the total evolution time reads $\widetilde{O}(M_0^3/\eps^2)$.
Three remarks position the guarantee.
First, the $\eps$ and $\Gamma$ dependence cannot be improved: the per-coefficient evolution time matches the control-free lower bound for the Hamiltonian special case~\cite{zhou2026optimal}, and the $\eps^{-2}$ scaling of the dissipative sector is unavoidable even for adaptive protocols equipped with ancillas~\cite{arad2026near}.
Second, the protocol queries only $O(\log(\Gamma/\eps))$ distinct evolution times, all strictly positive and of order $\Gamma^{-1}$ up to logarithmic factors, and the continuous draws can be snapped onto a hardware clock lattice without changing any scaling (see Section IV. B of the Supplementary Materials~\cite{supplement} for details).
Third, calibrated Pauli-diagonal state-preparation and measurement errors only rescale the measured responses by known factors and are divided out with constant overhead (see Section V of the Supplementary Materials~\cite{supplement} for details).

The algorithm consumes two data types and the constants $c,c_0,c_\xi,C_0$ fitted in subroutine~(i) are absolute constants fixed once by the analysis in the Supplemental Materials~\cite{supplement}.
A trace shot $E_q\to E_{q+s}$ corresponding ot a cross-Pauli measurement or same-Pauli measurement at $s=0$ at time $t$, defined in subroutine~(ii), prepares a uniformly random eigenstate of $E_q$ with recorded sign $r\in\{\pm1\}$, evolves once under a random time drawn below, measures $E_{q+s}$ with outcome $m\in\{\pm1\}$, and returns $Z=rm$, whose average is the two-sided Pauli response
\begin{equation}\label{eq:tracerule}
\E[Z]=2^{-n}\Tr\!\big(E_{q+s}\,e^{t\Lop}(E_q)\big).
\end{equation}
Every coefficient of $\Lop$ therefore lives in the first derivatives of such responses at $t=0$.
A same-Pauli shot instead records the flip pattern $y\in\mathbb{F}_2^n$ between the prepared and measured strings, as collected in line~5.
Line~2 fixes the working scales: the heavy-row threshold $\xi$, the short time $\tau_D$ and its signal scale $p_\xi$ used by Stage~1, the probe window $\tau=1/(2\Gamma)$, and the polynomial degree $R=\lceil C_0\log(e+\Gamma/\eps)\rceil$.
Line~3 builds the signed Legendre kernel $k_R$ on $[0,1]$, its mass $B_R$, the sampling density $\pi_R\propto|k_R|$, and the reweighting $\kappa(u)=(B_R/\tau)\operatorname{sgn}k_R(u)$. 
A random evolution time is then drawn as $t=\tau U$ with $U\sim\pi_R$.

Stage~1 learns the structure of the dissipation terms.
Averaging over the random preparation signs $x$ in line~5 twirls the channel, so each nonzero flip pattern $y$ is sampled with probability given by a diagonal entry of the Pauli-$\chi$ matrix of $e^{\tau_D\Lop}$, whose derivative at $t=0$ is exactly the noise rate $A_{u,u}$~\cite{ivashkov2026ansatz,romanov2026learning}.
At the short time $\tau_D$ of line~2, every heavy row with $A_{u,u}\ge\xi=c_\xi\eps^2/M_0$ therefore produces its pattern with probability at least $p_\xi$, while at most $O(1/p_\xi)$ patterns can pass the counting threshold $N_1p_\xi/4$ of line~6 and enter the list $\mathcal Y_B$, in the spirit of population recovery~\cite{flammia2021pauli}.
Since a row flips the basis $B$ exactly on the sites where its Pauli matrix on the site anticommutes with that of $B$, the three global bases $X^n,Y^n,Z^n$ determine every local Pauli: the pattern pair $(y_X,y_Y)$ fixes the local Paulis through the decoder $\texttt{dec}$ of subroutine~(iv), and their sum must reappear among the $Z^n$ patterns, which is the consistency check in line~8.
The output $C$ contains every heavy dissipative row and satisfies $|C|\le4M_0$. Hence the dissipative structure costs the three measurement settings.

The next step is to learn the structure of the Hamiltonian.
At the first order, the trace response toward a label $s$ receives dissipative contributions only from pairs $u+v=s$ with heavy rows $u,v$, so line~8 also records the collision set $\Sigma_C=\{u+v:u,v\in C\}$ and $\Delta_C=\Sigma_C\setminus\{0\}$. 
Outside $\Delta_C$, the positive semidefiniteness $|A_{uv}|\le(A_{u,u}A_{v,v})^{1/2}$ forces every nonzero $A_{uv}$ of amplitude at most $\eps$ and we can thus output $0$.
Stage~2 then finds the Hamiltonian labels outside $\Delta_C$.
In a same-Pauli shot in the $W=Z$ (respectively $X$) basis in line~12, the displacement $a+b$ of the measured string equals the $x$- (respectively $z$-) projection of the Pauli labels that acted, so nonzero displacements reveal the candidate projections $\widehat\Pi_x$ and $\widehat\Pi_z$ initialized in line~10.
Dissipation could in principle damp this coherent signature.
However, the resolution is the no-jump domination
\begin{equation}\label{eq:nojump}
e^{t\Lop}(\rho)\succeq e^{tK_0}\rho\,e^{tK_0^\dagger},\qquad K_0:=-iH-J/2,
\end{equation}
where $J=\sum_{u,v}A_{uv}E_vE_u$, which lower bounds every displacement probability by the corresponding no-jump amplitude.
Since $J$ is Hermitian, the Hamiltonian enters $K_0$ along the imaginary direction, orthogonal to the real Pauli coefficients of $J$. 
Therefore, any coefficient $|h_s|>\eps$ creates its displacement with probability at least $p_*=c_0\eps^2/(\Gamma^2R^6)$ at a random time $\tau U$, and the $N_2$ shots of line~9 can thus catch it.
The Cartesian product $\widehat U_2$ of line~14, minus the collision set, gives the candidate list $\widehat U_3$ that contains every label with $|h_s|\ge\eps$ outside $\Delta_C$.

The random times drawn from $\pi_R$ are the key reason why no intermediate control is required in Stages~2 and~3.
Without inverse evolutions, no symmetric difference through $t=0$ is available. 
To address this issue, we observe that the scale $\Gamma$ bounds every derivative of a trace response $f$ by $|f^{(\ell)}|\le\Gamma^\ell$, so $f$ is $2^{-R}$-close to a degree-$R$ polynomial on the window $[0,\tau]$.
The signed Legendre kernel of line~3 differentiates degree-$R$ polynomials at the origin exactly, and a single shot at time $\tau U$, reweighted by $\kappa(U)$, returns an estimator with mean $f'(0)$ up to a bias $O(\Gamma R^32^{-R})$ and range $O(\Gamma R^3)$~\cite{trefethen2019approximation}.
The choice of $R$ in line~2 pushes the estimation error below the target and yields $\widetilde{O}(\Gamma^2/\eps^2)$ shots and $\widetilde{O}(\Gamma/\eps^2)$ evolution time per derivative.
The band-limited time sampling of the closed-system protocol~\cite{zhou2026optimal} is unavailable here, because $e^{t\Lop}$ carries no bounded frequency support. 
The endpoint differentiation above is its open-system replacement, and the continuous draws can further be restricted to $R+1$ Chebyshev-Lobatto times (see Section I. F of the Supplementary Materials~\cite{supplement}).

Stage~3 estimates the coefficients with the shot budget $N_3$ of line~15, through two estimators.
On the collision set $\Delta_C$, we expand the generator in the two-sided Pauli basis,
\begin{equation}\label{eq:PL}
\Lop(X)=\sum_{a,b\in\V}R_{a,b}\,E_aXE_b,
\end{equation}
so that $A_{uv}=R_{u,v}$ for nonzero $u,v$. While the Hamiltonian sits on the boundary of the block as $h_s=\operatorname{Re}[(R_{0,s}-R_{s,0})/(2i)]$, the anticommutator contributions cancel in the difference.
For a fixed sum label $s\in\Sigma_C$, the derivative of the trace shot $E_q\to E_{q+s}$ equals $\sum_aR_{a,a+s}\gamma_{a,a+s}(q)$ with the known phases $\gamma_{a,a+s}(q)=2^{-n}\Tr(E_{q+s}E_aE_qE_{a+s})$ defined in line~19.
Over a uniformly random $q$, these phases average orthogonally, so the reweighted mean of line~19 isolates every $\widehat R_{a,a+s}$ in the block from one data set.
Outside the collision set, line~21 runs a randomized linear response: a random configuration $c\in\{X,Y,Z\}^n$ fixes on each site the cyclic triple $(c_i,a_i,b_i)$ of subroutine~(iii), the shot prepares eigenstates along the $a_i$ axes and measures along the $b_i$ axes, and each candidate $s\in\widehat U_3$ splits as $q_{c,s}+o_{c,s}=s$.
Whenever $s$ has an odd number of differences from $c$, the response derivative equals $2\sigma_c(s)h_s$ up to an $O(\eps)$ dissipative bias, where $\sigma_c(s)\in\{\pm1\}$ is a sign computable from $c$ and $s$. 
The parity $Z_s$ multiplies the recorded preparation signs on the support of $q_{c,s}$ with the outcomes on the support of $o_{c,s}$, and dividing by the probability $q(s)\ge1/3$ of the odd-difference event makes $\widehat h_s^{\mathrm{lin}}$ unbiased.
One shot can thus be used to estimate the coefficient of all candidates in $\widehat U_3$ simultaneously.
Lines~22 and~23 assemble the outputs: $\widehat A$ is the Hermitian-symmetrized $\widehat R$ on $C\times C$, and $\widehat h$ takes the boundary formula on $\Delta_C$ and the linear-response value on $\widehat U_3$.

Every coefficient returned as zero in lines~22 and~23 is provably at most $\eps$ with high probability for dissipative entries outside $C\times C$ by positivity, and for Hamiltonian labels outside $\Delta_C\cup\widehat U_3$ by the Stage-2 covering guarantee.
A union bound over the four randomized stages then gives \Cref{thm:main}, with the shot count dominated by Stage~1 and the evolution time dominated by Stage~3, whose $|\Sigma_C|\le16M_0^2$ fixed-sum blocks account for the $M_0^2$ overhead.
Section V of the Supplementary Materials~\cite{supplement} further verifies that snapping all evolution times onto a hardware lattice of spacing $\Delta t=\widetilde{O}(\eps^2/(M_0\Gamma^2))$ and dividing calibrated state-preparation and measurement factors out of every response preserve the scalings of the number of shots and the total evolution time.
\begin{figure}[t!]
\centering
\includegraphics[width=0.49\textwidth]{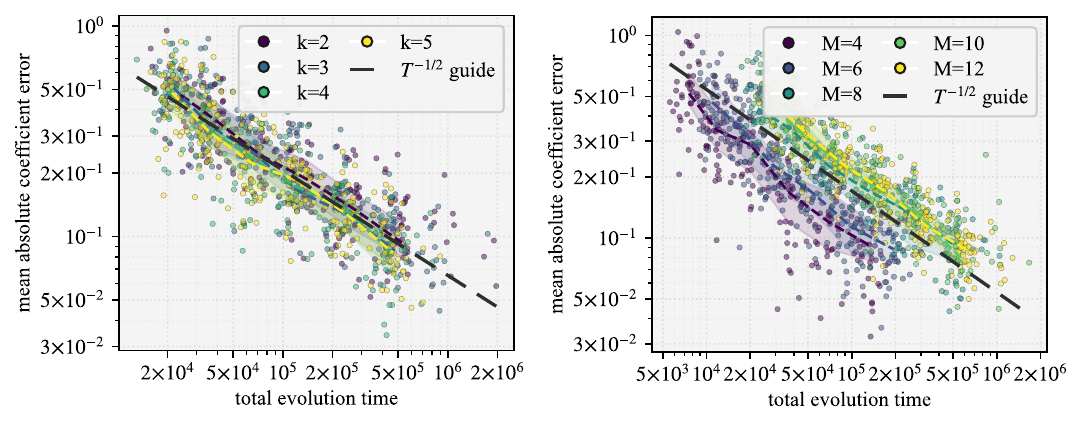}
\caption{(Left) Learning $k$-local Lindbladians. (Right) Learning random sparse Lindbladians.}
\vspace{-2em}
\label{fig:numerics-end-to-end}
\end{figure}

\textit{Numerical simulations.---}We validate the complete Lindbladian reconstruction protocol by numerical simulations in Figure~\ref{fig:numerics-end-to-end}. 
The plotted error is the mean-averaged error
\begin{equation}
\operatorname{MAE}_{S}=\frac{\sum_{s\in S_H}|\widehat h_s-h_s|+\sum_{u,v\in S_D}|\widehat A_{uv}-A_{uv}|}{|S_H|+|S_D|^2},
\end{equation}
where missed coefficients are assigned zero, so both support and estimation errors are included.

For the $k$-local Lindbladian learning case, we set $n=6$ and $k\in\{2,3,4,5\}$, and sample $M_{\mathrm{loc}}=8$ $k$-local Pauli terms and assign them randomly to Hamiltonian and dissipation terms.
For the sparse panel, we set $n=5$ and sample  $M\in\{4,6,8,10,12\}$ Pauli terms and assign them randomly to Hamiltonian and dissipation terms. 
We perform $1200$ and $960$ reconstructions for the $k$-local and the sparse random cases, respectively.
Both cases exhibit the expected SQL scaling $T^{-1/2}$, and the error dependence on $M$ is below the predicted $M^{-3}$ setting $\Gamma=2M_0$.

\textit{Conclusions.---}We have presented a control-free protocol that efficiently reconstructs an arbitrary sparse Lindbladian, including every Hamiltonian and dissipation coefficient, from product Pauli preparations, one uninterrupted forward evolution, and product Pauli measurements, with unconditional guarantees whose precision scaling saturates the control-free limits~\cite{zhou2026optimal,arad2026near} and survives hardware clock lattices and calibrated state-preparation and measurement errors.
Generator-level Lindbladian noise models thus become directly accessible on near-term devices, ready to feed error mitigation beyond Pauli-twirled channels and certified quantum simulation~\cite{van2023probabilistic,kraft2025bounded}.
Two questions remain open.
First, the $M_0^2$ overhead in the total evolution time stems from estimating each fixed-sum block with its own data set, and whether a control-free protocol can approach the sparsity-independent $\widetilde{O}(\Gamma/\eps^2)$ cost of the closed-system case~\cite{zhou2026optimal} is unknown.
Second, extending the present guarantee to time-dependent or weakly non-Markovian dynamics would broaden the reach of in-situ open-system learning.

\textit{Acknowledgements.---}We thank Matthias Caro, Senrui Chen, Sitan Chen, Hsin-Yuan Huang, Zhiding Liang, Nikita Romanov, and Sisi Zhou for the insightful discussions and feedback during the preparation of this work.
A part of this work was carried out while W.G. was visiting the California Institute of Technology.
We acknowledge ChatGPT and Claude for assistance in discussing and refining proof ideas, as well as improving the presentation. 
The authors are solely responsible for the proofs and the results.


\bibliography{control_free_ham_lind_learning_ref}

\clearpage
\newpage
\makeatletter
\setcounter{figure}{0}
\setcounter{equation}{0}
\renewcommand{\thefigure}{S\@arabic\c@figure}
\renewcommand \thetable{S\@arabic\c@table}
\renewcommand \theequation{S\@arabic\c@equation}

\begin{center} 
{\large \bf Supplementary Materials for: Characterizing Arbitrary Lindbladian Dynamics with a Few Pauli Measurements}
\end{center}

\section{Learning Lindbladian from dynamics}
\label{sec:learning-lindbladian-from-dynamics}

\subsection{Basic notations}
\label{subsec:basic-objects}

\noindent
This subsection gives a formal description of the learning problem and collects the basic results needed in the later sections.  
A quantum state of $n$ qubits is described by a density matrix $\rho$, which is a $2^n\times 2^n$ matrix satisfying $\rho\succeq0$ and $\Tr(\rho)=1$.
A pure state $|\psi\rangle$ is represented by the density matrix $\rho=|\psi\rangle\langle\psi|$. 

For an open quantum system, let $\{\Phi_t\}_{t\ge0}$ be a family of continuous-time evolution maps such that $\rho(t)=\Phi_t(\rho(0))$.
The corresponding generator $\Lop$ determines the complete evolution through
\begin{equation}
\Phi_t=e^{t\Lop},\quad
\frac{d}{dt}\rho(t)=\Lop(\rho(t)),\quad
\rho(t)=e^{t\Lop}(\rho(0)).
\label{eq:semigroup-basic}
\end{equation}
In finite dimensions, the generator of a Markovian quantum dynamical semigroup can be written as
\begin{equation}
\Lop(\rho)=-i[H,\rho]+\sum_\alpha\left(L_\alpha\rho L_\alpha^\dagger-\frac{1}{2}\{L_\alpha^\dagger L_\alpha,\rho\}\right),
\label{eq:lindblad-form-basic}
\end{equation}
where $\{A,B\}:=AB+BA$ denotes the anticommutator.
The first term in Eq.~\eqref{eq:lindblad-form-basic} retains the Hamiltonian evolution of a closed system and describes coherent and reversible dynamics.
The second term describes the influence of the environment, and the operators $L_\alpha$ are called jump operators. 

The short-time expansion of the Lindbladian evolution is
\begin{equation}
e^{t\Lop}=\Id+t\Lop+O(t^2).
\label{eq:first-order-channel}
\end{equation}
It follows that the coefficients in the Lindbladian correspond to the first-order changes, at $t=0$, of suitable experimental responses.  
The goal of this work is to use positive-time quantum experiments to recover these generator coefficients that determine the open-system dynamics.

\subsection{Pauli strings and the symplectic form}
To connect this description with the representation and experimental procedures used in this work, we now introduce the Pauli operators.  
Both the unknown Lindbladian and the allowed state-preparation and measurement procedures will be expressed in the Pauli basis.
For one qubit, the basic Pauli matrices are
\begin{equation}
\begin{aligned}
I&=\begin{pmatrix}
1&0\\
0&1
\end{pmatrix},&\qquad
X&=\begin{pmatrix}
0&1\\
1&0
\end{pmatrix},\\
Y&=\begin{pmatrix}
0&-i\\
i&0
\end{pmatrix},&\qquad
Z&=\begin{pmatrix}
1&0\\
0&-1
\end{pmatrix}.
\end{aligned}
\label{eq:pauli-matrices-basic}
\end{equation}
A Pauli measurement means measuring one of $X$, $Y$, or $Z$ on a qubit, with outcome $+1$ or $-1$. 
A product Pauli preparation means preparing a product
state that is an eigenstate of one chosen Pauli observable on each qubit.  
A product Pauli measurement means measuring one chosen single-qubit Pauli observable on each qubit at the end of the evolution.

For a multi-qubit system, we use the symplectic form to record whether two Pauli strings commute or anticommute.
\begin{center}
\begin{tabular}{c|l}
notation & meaning\\ \hline
$E_a$ & Pauli string with binary label $a$\\
$a+b$ & label of $E_aE_b$, ignoring the phase\\
$[a,b]$ & commute-or-anticommute bit\\
$\chi_a(b)=(-1)^{[a,b]}$ & Pauli Walsh character\\
\end{tabular}
\end{center}
Let $\V:=\mathbb F_2^{2n}=\mathbb F_2^n\times\mathbb F_2^n$, a label $a=(x|z)\in\V$ consists of two $n$-bit strings which record the $X$- and $Z$-parts of a fixed Hermitian Pauli string $E_a$.  
We fix the phase of every $E_a$ so Pauli multiplication has the form $E_aE_b=\omega(a,b)E_{a+b}$ where $\omega(a,b)\in\{\pm1,\pm i\}$.
The Pauli strings are orthogonal as $\Tr(E_aE_b)=2^n\one\{a=b\}$.
Multiplying Pauli strings may introduce a phase, but the later arguments only need to know whether reversing their order changes the sign, which is recorded by the symplectic form
\begin{align}
[a,b]=x\cdot z'+z\cdot x' \in\mathbb F_2,\quad a=(x|z),\ b=(x'|z').
\end{align}
In particular, we have $E_aE_b=(-1)^{[a,b]}E_bE_a$, and thus $[a,b]=0$ means commute and $[a,b]=1$ means anticommute. 
The associated Walsh character is $\chi_a(b):=(-1)^{[a,b]}$ which satisfies
\begin{equation}
4^{-n}\sum_{q\in\V}\chi_a(q)\chi_{a'}(q)=\one\{a=a'\}. \label{eq:walsh-orthog}
\end{equation}

\subsection{The Pauli-basis Lindblad model}

We now express the unknown generator in the Pauli basis.  
The real numbers $h_s$ specify the Hamiltonian part, and the Hermitian positive semidefinite matrix $A=(A_{uv})$, satisfying $A=A^\dagger\succeq0$, called the Kossakowski matrix, specifies how different Pauli directions enter the dissipative part. 
Decomposed into the Pauli basis, we have the Pauli-Kossakowski form of the Lindbladian as
\begin{equation}
\begin{aligned}
\Lop(\rho)
&= -i\sum_{s\in\V\setminus\{0\}}h_s[E_s,\rho] \\
&\quad +\sum_{u,v\in\V\setminus\{0\}}A_{uv}
\left(E_u\rho E_v-\frac{1}{2}\{E_vE_u,\rho\}\right).
\end{aligned}
\label{eq:model-L}
\end{equation}
The first sum in Eq.~\eqref{eq:model-L} is the coherent Hamiltonian contribution, and the second sum is the dissipative contribution.  
Its diagonal entries $A_{u,u}$ act as individual Pauli noise rates, while its possibly complex off-diagonal entries describe correlations between Pauli directions.  
For every pair of indices $u,v$, we have according to the fact that $A\succeq 0$
\begin{equation}
|A_{uv}|^2\le A_{u,u}A_{v,v}. 
\label{eq:psd-cauchy}
\end{equation}
Consequently, $A_{u,u}=0$ implies $A_{uv}=A_{vu}=0$ for every $v$.
For every $t\ge0$, the finite-time physical channel is $e^{t\Lop}$.

The support of a coefficient family is the set of labels where its entries are nonzero.  
We define the Hamiltonian support and the dissipative row support by
\begin{align}
S_H:=\{s:h_s\ne 0\},\qquad
S_D:=\{u:A_{u,u} >0\}.
\end{align}
Every nonzero entry $A_{uv}$ lies inside the principal block
$S_D\times S_D$. 
Let $M_H:=|S_H|$, $M_D:=|S_D|$, and define the sparsity as $M:=M_H+M_D^2$.
The known sparsity budget satisfies $M_0\ge M$.
The coefficient normalization is $|h_s|\le 1$ and $|A_{uv}|\le 1$.
For a nonzero instance with $M_0\ge1$ and $0<\eps\le1$, we define
\begin{equation}
\xi:=c_\xi\frac{\eps^2}{M_0}, 
\label{eq:rho-threshold}
\end{equation} 
where $c_\xi>0$ is a sufficiently small universal constant, and set
\begin{equation}
S_{D,\xi}:=\{u\in S_D:A_{u,u}\ge\xi\}. 
\label{eq:heavy-row-set}
\end{equation} 
The reconstruction uses a set $C\supseteq S_{D,\xi}$. Eq.~\eqref{eq:psd-cauchy} controls the entries involving rows in $S_D\setminus C$.

\subsection{Formulation of the learning task}

The algorithm also require a known upper bound $\Gamma\geq\Gamma_*$ of the true physical scale
\begin{equation}
\Gamma_*:=2\|H\|_\infty+2\sum_{u,v\in S_D}|A_{uv}|,\quad H:=\sum_{s\in S_H}h_sE_s. 
\label{eq:true-scale}
\end{equation}
We then have
\begin{align}
\begin{aligned}
\Gamma_*&\le 2\sum_{s\in S_H}|h_s|+2\sum_{u,v\in S_D}|A_{uv}| \\
&\le 2M_H+2M_D^2=2M\le 2M_0.
\end{aligned}
\end{align}
Thus $\Gamma=2M_0$ is always safe if no sharper physical scale is known. 
If $M_0=0$ or $\Gamma=0$, the algorithm returns the zero generator and stops. 
We are then ready to provide the definition of the learning task.

\begin{problem}[Learning sparse Pauli Lindbladian coefficients without control]
Let $\Lop$ be an unknown generator of the form Eq.~\eqref{eq:model-L}. 
The learner is given a sparsity budget $M_0$, a scale bound $\Gamma$, an accuracy parameter $\eps$, and the control-free access model described below.  
The task is to output estimates $(\widehat h,\widehat A)$ such that, with high probability,
\begin{align}
\max_{s\ne0}|\widehat h_s-h_s|\le \eps,\qquad
\max_{u,v\ne0}|\widehat A_{uv}-A_{uv}|\le \eps.
\end{align}
Neither the Hamiltonian support nor the dissipative support is known in advance.
\end{problem}
We call a row $u$ \emph{heavy} when $A_{u,u}\ge\xi$, with $\xi=c_\xi\eps^2/M_0$. 
The algorithm identifies all heavy rows explicitly while entries involving the remaining rows are controlled by $|A_{uv}|^2\le A_{u,u}A_{v,v}$.

We now state exactly what one experimental shot may do.  
Each shot is restricted to the following operations:
\begin{enumerate}[label=(\arabic*)]
\item prepare a product eigenstate of single-qubit Pauli observables;
\item evolve once under $e^{t\Lop}$, with $t\ge0$;
\item measure single-qubit Pauli observables at the end;
\item postprocessing with classical (randomized) computation.
\end{enumerate}
There are no ancillas, no inverse evolution, no coherent controls inserted during the evolution, and no mid-circuit operations.

The protocol extracts two kinds of classical data from these shots:
\begin{enumerate}[label=\Roman*]
\item Cross-Pauli measurements are used to estimate coefficient values. 
\item Same-Pauli measurements are used to locate possible support labels.
\end{enumerate}

\subsection{Cross-Pauli measurements}

Coefficient estimation will repeatedly use the normalized Pauli response $2^{-n}\Tr(Pe^{t\Lop}(Q))$.  
The trace rule below, which is a Lindbladian analogue of~\cite[Proposition 2.3]{zhou2026optimal}, obtains the same quantity by averaging over product eigenstates of $Q$ and multiplying each measurement outcome by the known input eigenvalue.

\begin{proposition}[Product-state data give the normalized Pauli trace response]
Fix Hermitian Pauli strings $P$ and $Q$. 
Sample uniformly from a product eigenbasis of $Q$. 
If the prepared state has $Q$-eigenvalue $r\in\{\pm1\}$, evolve by $e^{t\Lop}$, and we measure $P$ and obtain $m\in\{\pm1\}$, then
\begin{equation}
\E[rm]=2^{-n}\Tr\bigl(Pe^{t\Lop}(Q)\bigr). 
\label{eq:trace-rule}
\end{equation}
\end{proposition}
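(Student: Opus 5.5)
The argument is a direct computation with the averaged input state. We write the Hermitian Pauli string $Q$ as a tensor product $Q=\lambda\,\sigma_1\otimes\cdots\otimes\sigma_n$, where each $\sigma_i\in\{I,X,Y,Z\}$ and $\lambda\in\{\pm1\}$ is the global sign fixed by the phase convention. A product eigenbasis of $Q$ is built as follows. On each site with $\sigma_i\ne I$ we take the two eigenstates $|\pm\rangle_i$ of $\sigma_i$. On each site with $\sigma_i=I$ we take any fixed orthonormal basis, for instance the $Z$ eigenbasis. The resulting product states $|\mathbf{b}\rangle$ form an orthonormal basis of $2^n$ vectors, each an eigenvector of $Q$ with eigenvalue $r(\mathbf{b})=\lambda\prod_{i:\sigma_i\ne I}b_i\in\{\pm1\}$.

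The main step is to average $r$ times the prepared projector over the uniform choice of basis vector:
\begin{equation*}
\E_{\mathbf b}\big[r(\mathbf b)\,|\mathbf b\rangle\langle\mathbf b|\big]=2^{-n}\sum_{\mathbf b}r(\mathbf b)|\mathbf b\rangle\langle\mathbf b|=2^{-n}Q .
\end{equation*}
This is just the spectral decomposition of $Q$ in its own eigenbasis. It can also be checked site by site: $\tfrac12(|+\rangle\langle+|-|-\rangle\langle-|)=\tfrac12\sigma_i$ on nontrivial sites, and $\tfrac12\sum_b|b\rangle\langle b|=\tfrac12 I$ on identity sites, with the global sign $\lambda$ carried along.

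Next we condition on the prepared state $\rho_{\mathbf b}=|\mathbf b\rangle\langle\mathbf b|$. Measuring the Pauli observable $P$ (via its product measurement, with the outcome multiplied into $m$ including its global sign) gives $\E[m\mid\mathbf b]=\Tr\big(P\,e^{t\Lop}(\rho_{\mathbf b})\big)$. This holds because $m$ is the $\pm1$ eigenvalue of $P$ on the outcome, and the product-basis measurement resolves the spectral projectors of $P$. By the tower rule and linearity of $e^{t\Lop}$ and of the trace,
\begin{equation*}
\E[rm]=\Tr\Big(P\,e^{t\Lop}\big(\E_{\mathbf b}[r\rho_{\mathbf b}]\big)\Big)=2^{-n}\Tr\big(P\,e^{t\Lop}(Q)\big).
\end{equation*}
Here $e^{t\Lop}$ is extended linearly to the non-state operator $Q$, which is legitimate because it is a linear superoperator.

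The only real care point is the sign bookkeeping, which I expect to be the main obstacle, though a mild one. The recorded $r$ and $m$ must include the global phases $\lambda_Q,\lambda_P$ of the chosen Hermitian representatives $E_q,E_{q+s}$. Identity sites of $P$ must be handled by discarding those outcomes, that is, assigning them the factor $1$. Once $r$ and $m$ are defined as the eigenvalues of the full strings $Q$ and $P$, both identities above hold verbatim.
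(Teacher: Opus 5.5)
Your proposal is correct and follows the paper's proof: both use the spectral identity $\sum_\psi r(\psi)\rho_\psi=Q$ over the product eigenbasis, then pull the average inside $e^{t\Lop}$ and the trace by linearity to get $2^{-n}\Tr\bigl(P e^{t\Lop}(Q)\bigr)$. Your explicit handling of global signs and identity sites is a faithful elaboration of bookkeeping the paper leaves implicit.
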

\noindent Thus the normalized trace response is obtained entirely from ordinary product-state preparations and a classical eigenvalue sign.

\begin{proof}
Let $d=2^n$, and let $\{\psi\}$ be the product eigenbasis from which the preparation is sampled uniformly. 
Write $\rho_\psi:=|\psi\rangle\langle\psi|$ and $Q|\psi\rangle=r(\psi)|\psi\rangle$ for $r(\psi)\in\{\pm1\}$.
For a fixed prepared eigenstate and the channel $e^{t\Lop}$, we have $\E[m\mid \psi]=\Tr\!\left(Pe^{t\Lop}(\rho_\psi)\right)$.
Multiplying by the known preparation eigenvalue and averaging over the $d$ equally likely basis states gives
\begin{align}
\begin{aligned}
\E[rm]&= d^{-1}\sum_\psi r(\psi)\Tr\!\left(Pe^{t\Lop}(\rho_\psi)\right) \\
&= \Tr\Big(Pe^{t\Lop}\!\Big(d^{-1}\sum_\psi r(\psi)\rho_\psi\Big)\Big),
\end{aligned}
\end{align}
where we used linearity of the channel and of the trace. 
The spectral decomposition of $Q$ in this
eigenbasis is
\begin{align}
 Q=\sum_\psi r(\psi)\rho_\psi.
\end{align}
Substituting this identity and using $d=2^n$ yields
\begin{align}
\E[rm]=2^{-n}\Tr\!\left(Pe^{t\Lop}(Q)\right),
\end{align}
which proves the trace identity.
\end{proof}

\subsection{Estimating derivatives from positive times}

The generator appears in the first-order derivative of the channel at $t=0$.  
Since the experiment uses only nonnegative times, we estimate the first-order derivative from weighted measurements at short positive times.

\paragraph{Derivative bounds.}We denote the normalized Pauli response as
\begin{align}
f_{P,Q}(t):=2^{-n}\Tr\bigl(Pe^{t\Lop}(Q)\bigr).
\end{align}
The next lemma shows that the scale $\Gamma$ bounds the derivative.

\begin{lemma}[Upper bound of the derivative for the response function]\label{lem:response-derivative-bound}
For every $\ell\ge 0$, we have the $\ell$-th order derivative $f_{P,Q}^{(\ell)}(t)$ of $f_{P,Q}(t)$ satisfying
\begin{align}
|f_{P,Q}^{(\ell)}(t)|\le \Gamma^\ell.
\end{align}
More generally, if $f_j$ are Pauli responses and $F(t)=\sum_j c_j f_j(t)$ with $\sum_j |c_j|\le 1$, then $|F^{(\ell)}(t)|\le\Gamma^\ell$. 
\end{lemma}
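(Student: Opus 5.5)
The plan is to write every derivative of the response as a single trace against a power of the generator, and then bound that trace by iterating a one-step norm estimate for $\Lop$. Since $\frac{d}{dt}e^{t\Lop}=\Lop e^{t\Lop}=e^{t\Lop}\Lop$, we have
\begin{align}
f_{P,Q}^{(\ell)}(t)=2^{-n}\Tr\bigl(P\,e^{t\Lop}(\Lop^\ell(Q))\bigr).
\end{align}
We work with the normalized trace norm $\|X\|_{\star}:=2^{-n}\|X\|_1$. By Hölder's inequality and $\|P\|_\infty=1$,
\begin{align}
|f^{(\ell)}_{P,Q}(t)|\le \|e^{t\Lop}(\Lop^\ell(Q))\|_\star .
\end{align}
The proof then reduces to three facts.

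\textbf{Step 1: contractivity of the channel.} The map $e^{t\Lop}$ is CPTP for $t\ge0$, hence trace-norm contractive on Hermitian inputs. The operator $\Lop^\ell(Q)$ is Hermitian, because $Q$ is Hermitian and $\Lop$ is Hermiticity-preserving: the $h_s$ are real and $A=A^\dagger$. Therefore $\|e^{t\Lop}(\Lop^\ell(Q))\|_\star\le\|\Lop^\ell(Q)\|_\star$. This is the one place where some care is needed. General trace-norm contraction for non-Hermitian inputs holds only up to a constant, so the Hermiticity of the input must be checked; it follows from $\overline{A_{uv}}=A_{vu}$ by swapping $u$ and $v$ in Eq.~\eqref{eq:model-L}.

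\textbf{Step 2: one-step bound $\|\Lop(X)\|_\star\le\Gamma_*\|X\|_\star$ for every $X$.} We split $\Lop$ as in Eq.~\eqref{eq:model-L} and use Hölder's inequality on each piece.
\begin{itemize}
\item The coherent part satisfies $\|[H,X]\|_1\le 2\|H\|_\infty\|X\|_1$.
\item For each dissipative pair $(u,v)$, unitarity of the Pauli strings gives $\|E_uXE_v\|_1=\|X\|_1$.
\item Likewise $\|\tfrac12\{E_vE_u,X\}\|_1\le\|X\|_1$, so each pair contributes at most $2|A_{uv}|\,\|X\|_1$.
\item Only pairs in $S_D\times S_D$ have $A_{uv}\ne0$, by Eq.~\eqref{eq:psd-cauchy}.
\end{itemize}
Summing these contributions gives exactly $\Gamma_*$ from Eq.~\eqref{eq:true-scale}, and $\Gamma_*\le\Gamma$.

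\textbf{Step 3: iterate.} Iterating Step 2 gives $\|\Lop^\ell(Q)\|_\star\le\Gamma^\ell\|Q\|_\star$. Since $Q$ is a Hermitian Pauli string, $Q^2=I$ and its singular values are all one, so $\|Q\|_\star=1$. Combining with Step 1 yields $|f^{(\ell)}_{P,Q}(t)|\le\Gamma^\ell$.

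\textbf{The general statement.} For $F=\sum_j c_jf_j$, differentiate term by term and apply the triangle inequality: $|F^{(\ell)}|\le\sum_j|c_j|\Gamma^\ell\le\Gamma^\ell$.
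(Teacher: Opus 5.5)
Your proof is correct and is essentially the paper's argument in the dual (Schr\"odinger) picture. The paper moves $\Lop^\ell$ onto the observable, bounds $\|(\Lop^\dagger)^\ell(P)\|_\infty\le\Gamma^\ell$ by the same termwise estimate in operator norm, and applies trace-norm contraction of $e^{t\Lop}$ directly to the Hermitian operator $2^{-n}Q$. Keeping $\Lop^\ell$ on the input side is what forces your extra Hermiticity-preservation check. That check is valid but, contrary to your remark, not needed: a CPTP map is trace-norm contractive on every input, since its Heisenberg adjoint is unital and completely positive and hence has operator norm one.
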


\begin{proof}
We work in the Heisenberg picture. 
We define the adjoint generator as
\begin{align}
\Lop^\dagger(\rho)=i[H,\rho]+
\sum_{u,v\in S_D}A_{uv}\left(E_v\rho E_u-\frac{1}{2}\{E_vE_u,\rho\}\right).
\end{align}
Here $\overline{A_{uv}}=A_{vu}$ is used when taking the Hilbert--Schmidt adjoint and relabeling the indices $u,v$.
For the Hamiltonian term, we note that
\begin{align}
\|i[H,\rho]\|_\infty\le \|H\rho\|_\infty+\|\rho H\|_\infty\le 2\|H\|_\infty\|\rho\|_\infty.
\end{align}
For one dissipative summand, left and right multiplication by Pauli strings are operator-norm
isometries, so
\begin{align}
\begin{aligned}
&\left\|E_v\rho E_u-\frac{1}{2}E_vE_u\rho -\frac{1}{2}\rho E_vE_u\right\|_\infty\le \|E_v\rho E_u\|_\infty\\
&\quad+\frac{1}{2}\|E_vE_u\rho \|_\infty+\frac{1}{2}\|\rho E_vE_u\|_\infty\le 2\|\rho \|_\infty.
\end{aligned}
\end{align}
After multiplying by $|A_{uv}|$ and summing, we obtain
\begin{align}
\begin{aligned}
\|\Lop^\dagger(\rho )\|_\infty&\le \left(2\|H\|_\infty+2\sum_{u,v\in S_D}|A_{uv}|\right)\|\rho \|_\infty \\
&=\Gamma_*\|\rho \|_\infty\le \Gamma\|\rho \|_\infty.
\end{aligned}
\end{align}
Iterating this bound gives
\begin{align}
\| (\Lop^\dagger)^\ell(P)\|_\infty\le \Gamma^\ell\|P\|_\infty=\Gamma^\ell.
\end{align}
Differentiating the response and moving
$\Lop^\ell$ to the adjoint side gives
\begin{align}
\begin{aligned}
f_{P,Q}^{(\ell)}(t)&=2^{-n}\Tr\!\left(P\Lop^\ell e^{t\Lop}(Q)\right) \\
&=2^{-n}\Tr\!\left((\Lop^\dagger)^\ell(P)e^{t\Lop}(Q)\right).
\end{aligned}
\end{align}
For Hermitian inputs, a CPTP map satisfies $\|e^{t\Lop}(X)\|_1\leq\|X\|_1$.
Since $2^{-n}Q$ is Hermitian and has trace norm one, H\"older's inequality for the trace gives
\begin{align}
\begin{aligned}
|f_{P,Q}^{(\ell)}(t)|&=\left|\Tr\!\left((\Lop^\dagger)^\ell(P)e^{t\Lop}(2^{-n}Q)\right)\right| \\
&\le \|(\Lop^\dagger)^\ell(P)\|_\infty\|e^{t\Lop}(2^{-n}Q)\|_1\le \Gamma^\ell,
\end{aligned}
\end{align}
which proves the claimed result.
\end{proof}

\paragraph{One-sided derivative estimator.}We now estimate $f'(0)$ based on Legendre kernel, which is a similar technique employed in Ref.~\cite{zhou2026optimal}.  
On the interval $0\le t\le(2\Gamma)^{-1}$, the derivative bounds make $f$ well approximated by a low-degree polynomial. 
The Legendre kernel below differentiates that polynomial exactly at the left endpoint, and the approximation error is controlled by the degree.

Set $\tau:=\frac{1}{2\Gamma}$ and $u=t/\tau\in[0,1]$.
Let
\begin{align}
\phi_j(u):=\sqrt{2j+1}\,\varphi_j(2u-1),\qquad j=0,1,2,\ldots,
\end{align}
where $\varphi_j$ is the usual Legendre polynomial. 
Define
\begin{align}
\begin{aligned}
&K_R(x,u)&:=\sum_{j=0}^R\phi_j(x)\phi_j(u),\\
&k_R(u)&:=\left.\partial_xK_R(x,u)\right|_{x=0}.
\end{aligned}
\end{align}
Let
\begin{align}
B_R:=\int_0^1 |k_R(u)|\,du,\quad
\pi_R(du):=\frac{|k_R(u)|}{B_R}\,du.
\end{align}
We have the following lemma.

\begin{lemma}[The Legendre kernel returns the derivative at the left endpoint]
For every polynomial $p$ of degree at most $R$,
\begin{align}
\int_0^1 k_R(u)p(u)\,du=p'(0).
\end{align}
Moreover,
\begin{align}
B_R\le \|k_R\|_2\le CR^3,\qquad
\int_0^1k_R(u)\,du=0.
\end{align}
\end{lemma}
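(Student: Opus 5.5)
The plan is to exploit that $\{\phi_j\}_{j\ge0}$ is an orthonormal basis of $L^2[0,1]$: the substitution $x=2u-1$ maps $[0,1]$ to $[-1,1]$, and $\int_{-1}^1\varphi_j\varphi_k\,dx=\tfrac{2}{2j+1}\delta_{jk}$ gives $\int_0^1\phi_j\phi_k\,du=\delta_{jk}$. Since $\phi_0,\dots,\phi_R$ span the polynomials of degree at most $R$, $K_R(x,\cdot)$ is the reproducing kernel of that space. Thus for $\deg p\le R$ we write $p=\sum_{j\le R}c_j\phi_j$ with $c_j=\int_0^1 p\phi_j\,du$, and get $\int_0^1K_R(x,u)p(u)\,du=p(x)$ identically in $x$. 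Differentiating in $x$ under the integral (a finite sum of polynomials) and setting $x=0$ yields $\int_0^1k_R(u)p(u)\,du=p'(0)$.

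The zero-mean property is the special case $p\equiv1$, whose derivative vanishes. For the bounds, first apply Cauchy--Schwarz on $[0,1]$ (unit length) to get $B_R=\|k_R\|_1\le\|k_R\|_2$. Next, $k_R(u)=\sum_{j\le R}\phi_j'(0)\phi_j(u)$, so orthonormality gives $\|k_R\|_2^2=\sum_{j=0}^R\phi_j'(0)^2$. The classical endpoint derivative $\varphi_j'(\pm1)=(\pm1)^{j-1}j(j+1)/2$, combined with the chain-rule factor $2$, gives $|\phi_j'(0)|=\sqrt{2j+1}\,j(j+1)$. Hence
\begin{align}
\|k_R\|_2^2=\sum_{j=0}^R(2j+1)j^2(j+1)^2\le C'R^6,
\end{align}
since each term is $O(j^5)$ and there are $R+1$ of them. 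This yields $\|k_R\|_2\le CR^3$.

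The only nontrivial input is the endpoint derivative formula $\varphi_j'(1)=j(j+1)/2$. I would cite it, or derive it from the Legendre ODE $(1-x^2)\varphi_j''-2x\varphi_j'+j(j+1)\varphi_j=0$ evaluated at $x=1$ with $\varphi_j(1)=1$, then use parity $\varphi_j(-x)=(-1)^j\varphi_j(x)$ to transfer it to $x=-1$. No step is a real obstacle; the care needed is bookkeeping the factor $2$ from the affine change of variables and the normalization $\sqrt{2j+1}$.
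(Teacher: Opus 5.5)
Your proposal is correct and follows the paper's proof essentially step for step. Both use the reproducing-kernel property of the orthonormal shifted Legendre system, differentiated at $x=0$, and the case $p\equiv1$ for the zero mean. Both then apply Parseval to get $\|k_R\|_2^2=\sum_j|\phi_j'(0)|^2$ with $|\phi_j'(0)|=\sqrt{2j+1}\,j(j+1)$, and Cauchy--Schwarz for $B_R\le\|k_R\|_2$. Your one addition is deriving $\varphi_j'(\pm1)=(\pm1)^{j-1}j(j+1)/2$ from the Legendre ODE and parity, which the paper simply quotes.
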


\begin{proof}
The functions $\phi_j(u)=\sqrt{2j+1}\,\varphi_j(2u-1)$ are orthonormal in $L^2[0,1]$. 
Hence, we have
\begin{align}
K_R(x,u)=\sum_{j=0}^R\phi_j(x)\phi_j(u)
\end{align}
is the integral kernel of the orthogonal projection onto the space $\mathcal P_R$ of polynomials of degree at most $R$. 
Thus, for every $p\in\mathcal P_R$, one has
\begin{align}
\int_0^1K_R(x,u)p(u)\,du=p(x).
\end{align}
Both sides are polynomials in $x$, so differentiating at $x=0$ is legitimate and gives
\begin{align}
\int_0^1k_R(u)p(u)\,du=\int_0^1\partial_xK_R(x,u)|_{x=0}p(u)\,du=p'(0).
\end{align}
Taking $p\equiv1$ in this identity gives
\begin{align}
\int_0^1k_R(u)\,du=0.
\end{align}
It remains to bound the size of the kernel. 
Since $k_R(u)=\sum_{j=0}^R\phi_j'(0)\phi_j(u)$, the orthonormality gives
\begin{align}
\|k_R\|_2^2=\sum_{j=0}^R|\phi_j'(0)|^2.
\end{align}
Recall that the derivative of the shifted Legendre polynomial is $\phi_j'(0)=2\sqrt{2j+1}\,\varphi_j'(-1)$.
Using the endpoint identity $\varphi_j'(-1)=(-1)^{j+1}\frac{j(j+1)}{2}$, we obtain
\begin{align}
|\phi_j'(0)|=\sqrt{2j+1}\,j(j+1)\le Cj^{5/2}\qquad (j\ge1).
\end{align}
Therefore, we reach
\begin{align}
\|k_R\|_2^2\le C\sum_{j=1}^R j^5\le CR^6,\qquad
\|k_R\|_2\le CR^3.
\end{align}
Finally, because the interval $[0,1]$ has unit measure,
\begin{align}
B_R=\|k_R\|_1\le \|k_R\|_2\le CR^3,
\end{align}
which finishes the proof
\end{proof}

The kernel identity becomes an experimental estimator after randomizing the time intervals.  
Define
\begin{align}
 D_Rf:=\frac{1}{\tau}\int_0^1 k_R(u)f(\tau u)\,du.
\end{align}
To sample this quantity, we draw $U\sim\pi_R$ and output
\begin{align}
Y_R:=\frac{B_R}{\tau}\operatorname{sgn}(k_R(U))Z_{\tau U},\quad \E[Z_t]=f(t)
\end{align}
at $|Z_t|\le 1$.
We then reach the following lemma as the first subroutine of our protocol.

\begin{lemma}[Positive-time data estimate the derivative at zero] \label{thm:one-sided-derivative}
Assume $|f^{(\ell)}(t)|\le \Gamma^\ell$ for all $\ell\ge 0$ and $t\in[0,(2\Gamma)^{-1}]$. 
Then one can estimate $f'(0)$ to additive error $\alpha$ and failure probability $\delta$ using only positive times $t\le(2\Gamma)^{-1}$ and
\begin{align}
O\left(\frac{\Gamma^2R^6}{\alpha^2}\log\frac{1}{\delta}\right)=\widetilde O\left(\frac{\Gamma^2}{\alpha^2}\right)
\end{align}
queries, where $R=C\lceil\log(e+\Gamma/\alpha)\rceil$. 
The total evolution time is $\widetilde O(\Gamma/\alpha^2)$.
\end{lemma}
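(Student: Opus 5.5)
The estimator is $Y_R=\frac{B_R}{\tau}\operatorname{sgn}(k_R(U))Z_{\tau U}$ with $U\sim\pi_R$. I will split its error into a deterministic bias part, handled by polynomial approximation and the Legendre kernel lemma, and a stochastic part, handled by Hoeffding's inequality. The cost accounting then follows by counting queries and multiplying by the time per query.

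\textbf{Step 1: the mean of the estimator.} By construction of $\pi_R$,
\begin{align}
\E[Y_R]=\frac{B_R}{\tau}\int_0^1\operatorname{sgn}(k_R(u))f(\tau u)\frac{|k_R(u)|}{B_R}\,du=\frac{1}{\tau}\int_0^1k_R(u)g(u)\,du,
\end{align}
where $g(u):=f(\tau u)$. So $\E[Y_R]=D_Rf$, and $g'(0)=\tau f'(0)$.

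\textbf{Step 2: polynomial approximation of $g$.} The hypothesis gives, for $u\in[0,1]$,
\begin{align}
|g^{(\ell)}(u)|=\tau^\ell|f^{(\ell)}(\tau u)|\le(\tau\Gamma)^\ell=2^{-\ell}.
\end{align}
Take $p$ to be the degree-$R$ Taylor polynomial of $g$ at $0$. Then $p'(0)=g'(0)$ exactly. The Lagrange remainder gives
\begin{align}
\|g-p\|_\infty\le\frac{2^{-(R+1)}}{(R+1)!}\le 2^{-R}.
\end{align}

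\textbf{Step 3: the bias bound.} By the Legendre kernel lemma, $\int_0^1 k_Rp\,du=p'(0)=g'(0)$. Hence
\begin{align}
|D_Rf-f'(0)|=\frac{1}{\tau}\left|\int_0^1 k_R(u)\,(g-p)(u)\,du\right|\le\frac{B_R}{\tau}2^{-R}\le 2C\Gamma R^32^{-R}.
\end{align}
The main obstacle is only checking that this bias term is below $\alpha/2$ for the stated $R$. With $R=C\lceil\log(e+\Gamma/\alpha)\rceil$ and $C$ a large enough absolute constant, $2^{-R}\le(\alpha/\Gamma)^{C'}$ for a large $C'$. The factor $R^3$ is only polylogarithmic in $\Gamma/\alpha$, so it is absorbed once $C'\ge 2$, and the bias is at most $\alpha/2$.

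\textbf{Step 4: concentration.} Since $|Z_t|\le1$, each sample satisfies $|Y_R|\le B_R/\tau\le 2C\Gamma R^3$. Hoeffding's inequality then shows that the empirical mean of
\begin{align}
N=O\!\left(\frac{\Gamma^2R^6}{\alpha^2}\log\frac{1}{\delta}\right)
\end{align}
i.i.d.\ samples is within $\alpha/2$ of $D_Rf$ with probability at least $1-\delta$. Combined with Step 3, this gives total error at most $\alpha$.

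\textbf{Step 5: cost.} Every query uses a time $t=\tau U\le\tau=(2\Gamma)^{-1}$. The total evolution time is therefore at most
\begin{align}
N\tau=O\!\left(\frac{\Gamma R^6}{\alpha^2}\log\frac{1}{\delta}\right)=\widetilde O\!\left(\frac{\Gamma}{\alpha^2}\right).
\end{align}
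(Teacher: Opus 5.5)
Your proposal is correct and follows the paper's proof essentially step for step: the same signed-kernel estimator $Y_R$ with mean $D_Rf$, the endpoint Taylor polynomial with remainder at most $2^{-R}$, the bias bound $B_R2^{-R}/\tau\le C\Gamma R^32^{-R}$ from exactness of $k_R$ on degree-$R$ polynomials, Hoeffding with sample range $O(\Gamma R^3)$, and total time $N\tau$. The paper additionally remarks that complex-valued $f$ is handled by treating real and imaginary parts separately.
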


\begin{proof}
Set $g(u)=f(\tau u)$ on $[0,1]$. 
Since $\tau=1/(2\Gamma)$, the assumed derivative bounds imply
\begin{align}
|g^{(m)}(u)|=\tau^m|f^{(m)}(\tau u)|\le (\tau\Gamma)^m\le 2^{-m}.
\end{align}
Let $T_Rg$ be the Taylor polynomial of $g$ at the endpoint $0$ through degree $R$. 
Then for every $u\in[0,1]$, we have
\begin{align}
\begin{aligned}
|g(u)-T_Rg(u)|&\le \frac{\sup_{0\le v\le1}|g^{(R+1)}(v)|}{(R+1)!}\,u^{R+1} \\
&\le \frac{2^{-(R+1)}}{(R+1)!}\le 2^{-R}.
\end{aligned}
\end{align}
The endpoint kernel is exact on $T_Rg$, so
\begin{align}
\int_0^1k_R(u)T_Rg(u)\,du=(T_Rg)'(0)=g'(0).
\end{align}
Consequently
\begin{align}
\begin{aligned}
&\left|\int_0^1k_R(u)g(u)\,du-g'(0)\right| \\
&\quad=\left|\int_0^1k_R(u)(g(u)-T_Rg(u))\,du\right| \\
&\quad\le B_R2^{-R}\le CR^3 2^{-R}.
\end{aligned}
\end{align}
Since $D_Rf=\tau^{-1}\int_0^1k_R(u)g(u)\,du$ and $g'(0)=\tau f'(0)$, we have
\begin{align}
|D_Rf-f'(0)|\le \tau^{-1}CR^3 2^{-R}=C\Gamma R^3 2^{-R}.
\end{align}
Choosing $R=C_0\lceil\log(e+\Gamma/\alpha)\rceil$ with $C_0$ sufficiently large makes this bias at most $\alpha/2$.

Now draw $U\sim\pi_R$ and, conditional on $U$, run one trace-rule experiment producing $Z_{\tau U}$ with $\E[Z_{\tau U}\mid U]=f(\tau U)$ and $|Z_{\tau U}|\le1$. 
The weighted sample
\begin{align}
Y_R=\frac{B_R}{\tau}\operatorname{sgn}(k_R(U))Z_{\tau U}
\end{align}
satisfies $|Y_R|\le \frac{B_R}{\tau}\le C\Gamma R^3$ and
\begin{align}
\begin{aligned}
\E[Y_R]&=\frac{B_R}{\tau}\int_0^1\operatorname{sgn}(k_R(u))f(\tau u)\frac{|k_R(u)|}{B_R}\,du \\
&=\frac{1}{\tau}\int_0^1k_R(u)f(\tau u)\,du=D_Rf.
\end{aligned}
\end{align}
Hoeffding's inequality applied to the average of $N$ independent copies gives sampling error at most $\alpha/2$ with probability at least $1-\delta$ provided
\begin{align}
N\ge C\frac{\Gamma^2R^6}{\alpha^2}\log\frac{1}{\delta}.
\end{align}
Combining this sampling error with the deterministic bias proves the stated additive accuracy. 
If $f$ is complex-valued, the same argument is applied separately to the real and imaginary parts, and the failure probabilities are split by a constant factor. 
Each implemented evolution time is at most $\tau=O(1/\Gamma)$, so the total evolution time is $N\tau=\widetilde O(\Gamma/\alpha^2)$.
\end{proof}

The structure-learning stage uses a vector of transition amplitudes rather than one scalar response.  
The same kernel argument gives the following vector-valued statement.

\begin{lemma}[A large initial derivative produces a detectable positive-time signal]\label{lem:vector-observability}
Let $F:[0,(2\Gamma)^{-1}]\to\mathcal H$ be a finite-dimensional Hilbert-space-valued function with $\|F^{(\ell)}(t)\|\le \Gamma^\ell$. 
If $\|F'(0)\|\ge \eta$, then with $R=C\lceil\log(e+\Gamma/\eta)\rceil$, one have
\begin{align}
\E_{U\sim\pi_R}\|F(\tau U)\|^2\ge \frac{c\eta^2}{\Gamma^2R^6}.
\end{align}
\end{lemma}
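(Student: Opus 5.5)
Reduce the vector statement to the scalar kernel identity of Lemma~\ref{thm:one-sided-derivative}, then turn a lower bound on a weighted first moment into a lower bound on the second moment under $\pi_R$ via Cauchy--Schwarz.

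\textbf{Step 1: a vector derivative estimate.} Set $G(u):=F(\tau u)$ on $[0,1]$, so $\|G^{(m)}(u)\|\le 2^{-m}$. Repeat the Taylor-remainder argument of Lemma~\ref{thm:one-sided-derivative} coordinatewise; the vector Taylor remainder bound holds in any Hilbert space, for instance by pairing with a unit vector $e$ and applying the scalar bound to $\langle e,G\rangle$. This gives
\begin{align}
\Big\|\int_0^1 k_R(u)G(u)\,du-G'(0)\Big\|\le B_R2^{-R}\le CR^32^{-R}.
\end{align}
Since $G'(0)=\tau F'(0)$ and $\tau=1/(2\Gamma)$, we have $\|G'(0)\|\ge \eta/(2\Gamma)$. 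Choosing $R=C\lceil\log(e+\Gamma/\eta)\rceil$ with $C$ large makes $CR^32^{-R}\le \eta/(4\Gamma)$. Hence
\begin{align}
\Big\|\int_0^1k_R(u)G(u)\,du\Big\|\ge \frac{\eta}{4\Gamma}.
\end{align}

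\textbf{Step 2: passing to the second moment.} Write $k_R=B_R\,\operatorname{sgn}(k_R)\,\frac{|k_R|}{B_R}$, so that
\begin{align}
\int_0^1 k_R G\,du=B_R\,\E_{U\sim\pi_R}\big[\operatorname{sgn}(k_R(U))G(U)\big].
\end{align}
By the triangle inequality followed by Jensen (Cauchy--Schwarz) for the probability measure $\pi_R$,
\begin{align}
\frac{\eta}{4\Gamma}\le B_R\,\E_{\pi_R}\|G(U)\|\le B_R\big(\E_{\pi_R}\|G(U)\|^2\big)^{1/2}.
\end{align}
Squaring and using $B_R\le CR^3$ from the Legendre kernel lemma gives
\begin{align}
\E_{U\sim\pi_R}\|F(\tau U)\|^2\ge \frac{\eta^2}{16\Gamma^2B_R^2}\ge\frac{c\,\eta^2}{\Gamma^2R^6}.
\end{align}

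\textbf{Main obstacle.} The only delicate point is Step 1 in the vector setting. The vector Taylor remainder must not pick up a dimension-dependent factor, which the pairing with a unit vector avoids, and the constant $C$ in $R$ must be large enough to absorb the $R^3$ factor in the bias. Both issues are handled exactly as in the scalar lemma, because $R^32^{-R}\lesssim \eta/\Gamma$ once $R\gtrsim\log(\Gamma/\eta)$.
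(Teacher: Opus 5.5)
Your proof is correct and is essentially the paper's own argument. The paper also rescales to $G(u)=F(\tau u)$ and bounds the kernel bias in norm via the Taylor remainder. It then uses the reverse triangle inequality, rewrites the kernel integral as $\frac{B_R}{\tau}\E_{U\sim\pi_R}[\operatorname{sgn}(k_R(U))F(\tau U)]$, and applies Jensen/Cauchy--Schwarz together with $B_R\le CR^3$; the only difference is that you keep the factor $\tau$ inside (getting $\eta/(4\Gamma)$ instead of $\eta/2$), which is purely cosmetic.
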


\begin{proof}
Let $G(u)=F(\tau u)$. 
The Taylor-remainder argument in the proof of the scalar derivative estimator uses only the triangle inequality in the target space. 
The same proof therefore applies in the Hilbert space $\mathcal H$. 
Thus, for the same choice of $R$, we have
\begin{align}
\left\|\frac{1}{\tau}\int_0^1 k_R(u)F(\tau u)\,du-F'(0)\right\|\le C\Gamma R^3 2^{-R}\le \frac{\eta}{2}.
\end{align}
Since $\|F'(0)\|\ge\eta$, the reverse triangle inequality gives
\begin{align}
\left\|\frac{1}{\tau}\int_0^1 k_R(u)F(\tau u)\,du\right\|\ge \frac{\eta}{2}.
\end{align}
Write the same vector as an expectation under $\pi_R$:
\begin{align}
\frac{1}{\tau}\int_0^1 k_R(u)F(\tau u)\,du=\frac{B_R}{\tau}\E_{U\sim\pi_R}\left[\operatorname{sgn}(k_R(U))F(\tau U)\right].
\end{align}
By Jensen or Cauchy-Schwarz,
\begin{align}
\begin{aligned}
\left\|\frac{1}{\tau}\int_0^1 k_R(u)F(\tau u)\,du\right\|^2
&\le \left(\frac{B_R}{\tau}\right)^2\E_{U\sim\pi_R}\|F(\tau U)\|^2.
\end{aligned}
\end{align}
Using $B_R\le CR^3$ and $\tau^{-1}=2\Gamma$, we have $(\tfrac{B_R}{\tau})^2\le C\Gamma^2R^6$.
Combining this with the lower bound by $\eta/2$ gives
\begin{align}
\E_{U\sim\pi_R}\|F(\tau U)\|^2\ge \frac{(\eta/2)^2}{C\Gamma^2R^6}\ge \frac{c\eta^2}{\Gamma^2R^6},
\end{align}
after absorbing universal constants into $c$.
\end{proof}

\section{Lindbladian structure learning}
\label{sec:lindbladian-structure-learning}

Before estimating coefficient values, the algorithm first reduces the $4^n-1$ possible Pauli labels to small candidate sets. 
Same-Pauli measurements find a set $C$ containing every heavy dissipative row. 
Pairwise sums of labels in $C$ form the collision set $\Delta_C=\{u+v:u,v\in C\}\setminus\{0\}$, where dissipative and Hamiltonian first-order terms may share the same label.  Same-Pauli measurements then find the Hamiltonian candidates outside this set.

\subsection{Same-Pauli measurements and diagonal Pauli probabilities}

Write an $n$-qubit channel $\Phi$ in its
Pauli-$\chi$ representation as
\begin{align}
\Phi(\rho)=\sum_{a,b\in\V}\chi_{a,b}(\Phi)E_a\rho E_b.
\end{align}
The matrix $\chi(\Phi)$ is a process matrix in the Pauli basis.  
Its diagonal entries become Pauli-error probabilities after Pauli twirling.  
In particular, $\chi_{u,u}(\Phi)$ is the probability of the error $\rho\mapsto E_u\rho E_u$ in the twirled channel.  
When $\Phi=e^{t\Lop}$, the derivative of this probability at $t=0$ is the rate at which that Pauli error is created. 
The original channel itself need not be Pauli diagonal.

A Pauli twirl is the mathematical average obtained by conjugating the input and output by all Pauli strings.  
We use it only to describe the measurement statistics. 
The experiment does not physically apply these conjugations. 
Define
\begin{align}
\Phi^\circ(X):=4^{-n}\sum_{r\in\V}E_r\,\Phi(E_rXE_r)\,E_r.
\end{align}
If $\Phi(X)=\sum_{a,b}\chi_{a,b}(\Phi)E_aXE_b$, then the twirl multiplies the term $E_aXE_b$ by the average of $(-1)^{[r,a+b]}$ over $r\in\V$. 
Walsh orthogonality removes all terms with $a\ne b$, and gives
\begin{align}
\Phi^\circ(X)=\sum_{u\in\V}\chi_{u,u}(\Phi)E_uXE_u.
\end{align}
The diagonal $\chi$-entries are therefore the Pauli-error probabilities of the Pauli-twirled map. 
Since $\Phi^\circ$ is a trace-preserving Pauli channel,
\begin{align}
\chi_{u,u}(\Phi)\ge 0,\qquad
\sum_{u\in\V}\chi_{u,u}(\Phi)=1.
\end{align}
We write this classical probability vector as $p_u=\chi_{u,u}(\Phi)$.  
We will use population recovery to recover the entries of this probability vector~\cite{flammia2021pauli}.

A same-Pauli measurement uses the same local Pauli basis before and after the evolution, the goal of which is to record only which local eigenvalue signs changed.

\begin{definition}[How a same-Pauli measurement records local sign flips]\label{def:pauli-pr-probe}
Fix a basis string $a=(a_1,\ldots,a_n)\in\{X,Y,Z\}^n$.
For each sign string $x\in\{\pm1\}^n$, let $\Pi_x^a$ be the rank-one product projector onto the simultaneous eigenspace with local eigenvalue $x_i$ for $a_i$. 
The same-Pauli measurement for $\Phi$ is defined as sampling $x\sim{\rm Unif}(\{\pm1\}^n)$, preparing the product state $\Pi_x^a$, applying $\Phi$, and measuring every qubit again in the same local bases $a_i$. 
If the measured sign string is $z\in\{\pm1\}^n$, record the binary string $U\in\mathbb F_2^n$ defined by
\begin{align}
U_i =\begin{cases}
0, & z_i=x_i,\\
1, & z_i=-x_i.
\end{cases}
\end{align}
For a Pauli label $u\in\V$, define its flip pattern in the same-Pauli measurement relative to $a$ by
\begin{align}
\alpha_a(u)_i:=\one\{E_{u_i}a_i=-a_iE_{u_i}\},\qquad i=1,\ldots,n.
\end{align}
Equivalently, $\alpha_a(u)_i=1$ exactly when the local Pauli letter $u_i$ anticommutes with the measured local basis $a_i$.
\end{definition}

The uniformly random input sign is essential: after averaging over it, all off-diagonal Pauli-$\chi$ terms cancel as a standard subroutine of Pauli twirling~\cite{emerson2007symmetrized,wallman2016noise,van2023probabilistic} as in the following lemma (see, e.g., the above-mentioned references for the proof). 
The recorded flip string is therefore a coarse-grained sample from the diagonal Pauli-error distribution.

\begin{lemma}[Same-Pauli measurements depend on diagonal Pauli-$\chi$ probabilities]
\label{lem:product-pauli-diagonal-statistics}
Let $\Phi(\rho)=\sum_{r,s\in\V}\chi_{r,s}(\Phi)E_r\rho E_s$ be the Pauli-$\chi$ expansion of an arbitrary CPTP channel. In a same-Pauli measurement with basis string $a$, the recorded flip string satisfies
\begin{align}
\Prb_\Phi[U=y\mid a]=\sum_{u:\,\alpha_a(u)=y}\chi_{u,u}(\Phi).
\end{align}
Then the same-Pauli measurement data of $\Phi$ have the same distribution as the data of the Pauli channel
\begin{align}
\Phi^\circ(\rho)=\sum_{u\in\V}\chi_{u,u}(\Phi)E_u\rho E_u.
\end{align}
\end{lemma}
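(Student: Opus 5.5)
The plan is to compute the probability of each flip string directly from the Pauli-$\chi$ expansion, averaging over the random input signs $x$. Write the prepared state in the Pauli basis. Let $a$ also denote the label of the Hermitian Pauli letters $a_i$, and for $w\in\mathbb F_2^n$ write $a^w$ for the label of $\bigotimes_i a_i^{w_i}$. The product projector is
\begin{align}
\Pi^a_x=2^{-n}\sum_{w\in\mathbb F_2^n}x^w E_{a^w},\qquad x^w:=\textstyle\prod_i x_i^{w_i}.
\end{align}
The outcome projector is $\Pi^a_z$ with $z=x\odot(-1)^y$ coordinatewise. So
\begin{align}
\Prb[U=y\mid a]=2^{-n}\sum_x\sum_{r,s}\chi_{r,s}(\Phi)\Tr\!\big(\Pi^a_{x\odot(-1)^y}E_r\Pi^a_xE_s\big).
\end{align}

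The first step uses the fact that every $E_r$ is, up to phase, a product of local Paulis. Conjugation therefore acts on the stabilizer-type state by $E_r\Pi^a_xE_r^\dagger=\Pi^a_{x\odot(-1)^{\alpha_a(r)}}$. More generally, $E_r\Pi^a_x$ maps the $x$-eigenspace to the $x\odot(-1)^{\alpha_a(r)}$-eigenspace. The trace $\Tr(\Pi^a_{z}E_r\Pi^a_xE_s)=\langle x|E_s|z\rangle\langle z|E_r|x\rangle$ is nonzero only if $\alpha_a(r)=\alpha_a(s)=y$. Hence $r$ and $s$ differ only by letters that commute with $a$ on every site, i.e.\ $r+s=a^w$ for some $w$.

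The second step, which is the key point and where the average over $x$ matters, expands $E_s=E_rE_{a^w}$ up to a phase. The factor $E_{a^w}$ acting on $|x\rangle$ returns the sign $x^w$ times a fixed phase. The matrix element is therefore $\propto x^w$ times an $x$-independent quantity. To see that this quantity is $x$-independent, write $E_r$ as the local flips times the diagonal part relative to basis $a$. Its matrix elements between $a$-eigenstates are then $x^{w'}$ times constants, where $w'$ records the $a$-letter components of $r$. The conjugate pair contributes $x^{w'}\overline{x^{w'}}=1$, leaving only the net $x^w$.

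Averaging $2^{-n}\sum_x x^w=\one\{w=0\}$ kills every term with $r\ne s$. For $r=s=u$ the trace equals $\one\{\alpha_a(u)=y\}$, because $E_u|x\rangle$ is a unit-modulus multiple of $|x\odot(-1)^{\alpha_a(u)}\rangle$. Summing gives $\Prb[U=y\mid a]=\sum_{u:\alpha_a(u)=y}\chi_{u,u}(\Phi)$.

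The final claim is immediate. Applying the same formula to $\Phi^\circ$, whose $\chi$-matrix is $\operatorname{diag}(\chi_{u,u}(\Phi))$ by the twirl computation above, gives identical probabilities for every $a$ and $y$. The only delicate point is the phase bookkeeping in the second step, namely checking that all $x$-dependence of the cross term reduces to the single character $x^w$. I would handle it cleanly by working qubit by qubit: in the local $a_i$-eigenbasis each single-qubit Pauli is a signed permutation, and the tensor structure then factorizes the computation.
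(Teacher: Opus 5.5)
Your proof is correct. The paper does not prove this lemma itself; it defers to the standard Pauli-twirling argument. In that argument the uniformly random input signs twirl the channel, and the Walsh-orthogonality computation of $\Phi^\circ(X)=4^{-n}\sum_{r}E_r\Phi(E_rXE_r)E_r$, stated just before the lemma, removes every off-diagonal $\chi_{r,s}$.

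Spelled out, that route has three steps. First, write $\Pi^a_x$ and $\Pi^a_{x\odot(-1)^y}$ as conjugates of fixed projectors by a flip string, so the sign average becomes a twirl over the $2^n$ flip strings. Second, use the invariance of the projectors under conjugation by the diagonal strings $E_{a^w}$ to extend this to the full $4^n$-element Pauli twirl, so the statistics are those of $\Phi^\circ$. Third, evaluate the resulting Pauli channel, which is trivial.

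You instead work directly with matrix elements. The flip-pattern selection forces $\alpha_a(r)=\alpha_a(s)=y$, hence $r+s=a^w$. The diagonal string $E_{a^w}$ then contributes a character of the sign group, so the $2^n$-sign average alone kills the cross terms, without invoking the full twirl.

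What each route buys: yours is self-contained and shows that the equality in distribution concerns the flip string $U$ only after averaging over $x$; conditionally on a fixed $x$ the cross terms need not vanish (amplitude damping is an example). The twirl route is more conceptual and reuses an identity already in the text.

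The step you flag as delicate is in fact immediate. Pulling $E_{a^w}$ onto $|z\rangle$ gives a fixed phase times $z^w=(-1)^{y\cdot w}x^w$. The remaining factor is $\langle z|E_r|x\rangle\langle x|E_r|z\rangle=|\langle z|E_r|x\rangle|^2=1$ by Hermiticity of $E_r$, so no site-by-site phase bookkeeping is needed.
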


The next lemma, which is Theorem 2 of Ref.~\cite{flammia2021pauli}, serves as the second component of our protocol, and we include the construction because the later row-support proof uses its uniform error bound over all Pauli labels.

\begin{lemma}[Product-Pauli data recover all large Pauli-error probabilities, Theorem 2 of Ref.~\cite{flammia2021pauli}]\label{thm:population-recovery}
For any $n$-qubit channel $\Phi$, there is an ancilla-free procedure with product-Pauli input and output, and the following output guarantee. 
Let $p_u:=\chi_{u,u}(\Phi)$ for $u\in\V$.
The procedure returns a sparse hypothesis list
\begin{align}
\widehat p=\{(u,\widehat p_u):u\in\widehat{\mathcal S}\},
\qquad |\widehat{\mathcal S}|\le C_p/\epsilon,
\end{align}
and uses the convention $\widehat p_u:=0\qquad (u\notin\widehat{\mathcal S})$.
With probability at least $1-\delta$, we have $\sup_{u\in\V}|\widehat p_u-p_u|\le \epsilon$.
The number of uses of $\Phi$ is
\begin{align}
m=O\!\left(\epsilon^{-2}\log\frac{n}{\epsilon\delta}\right),
\end{align}
and the classical postprocessing time is polynomial in $n$ and $1/\epsilon$. 
More precisely, the standard sparse-tree implementation has postprocessing time $O(mn/\epsilon)$ up to absolute constants.
\end{lemma}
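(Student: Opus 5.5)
Since this is Theorem~2 of Ref.~\cite{flammia2021pauli}, the plan is to reconstruct its argument in our notation rather than invent a new one. First, I would reduce to a Pauli channel. By Lemma~\ref{lem:product-pauli-diagonal-statistics}, every same-Pauli measurement on $\Phi$ has the same law as on $\Phi^\circ=\sum_u p_uE_u\rho E_u$. So the data are samples of a hidden label $u\sim p$, seen only through the flip pattern $\alpha_a(u)$ for the chosen basis string $a$. I would pick $a$ uniformly from $\{X,Y,Z\}^n$ in each shot, so every shot yields the pair $(a,\alpha_a(u))$. From now on the task is classical: population recovery of a distribution on $\{I,X,Y,Z\}^n$ from these lossy per-site observations.

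Second, I would build unbiased per-site indicator estimators. On site $i$ with basis $b$, a no-flip event has conditional probability $\one\{u_i\in\{I,b\}\}$. Averaging over the uniform basis choice and inverting the resulting $4\times4$ linear system gives, for each letter $P\in\{I,X,Y,Z\}$, a bounded function $g_P(a_i,U_i)$ with $\E[g_P\mid u_i]=\one\{u_i=P\}$. Explicitly, $\one\{u_i=I\}=\tfrac12(\sum_b\one\{\text{no flip in }b\}-1)$ and $\one\{u_i=b\}=\one\{\text{no flip in }b\}-\one\{u_i=I\}$, each rescaled by $3$ to undo the random basis selection. The basis choice is independent across sites, so for a prefix $w\in\{I,X,Y,Z\}^k$ the product $\prod_{i\le k}g_{w_i}$ is an unbiased estimator of the marginal $p(w):=\sum_{u:\,u_{1..k}=w}p_u$.

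Third, I would run the heavy-hitter tree. Level by level, keep only those prefixes whose estimated marginal exceeds $\epsilon/2$, and extend each survivor by the four letters. Marginals sum to one, so at most $O(1/\epsilon)$ prefixes survive per level; this gives the list bound $|\widehat{\mathcal S}|\le C_p/\epsilon$ and the $O(mn/\epsilon)$ postprocessing time, since each of the $m$ samples is evaluated on $O(1/\epsilon)$ nodes at each of $n$ levels. A union bound over the $O(n/\epsilon)$ nodes of the pruned tree, combined with a Hoeffding or Bernstein bound on each estimated marginal, gives the uniform error $\sup_u|\widehat p_u-p_u|\le\epsilon$ with probability $1-\delta$ from $m=O(\epsilon^{-2}\log(n/(\epsilon\delta)))$ samples. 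Any label never reaching a leaf has every ancestor below $\epsilon$, hence $p_u\le\epsilon$, and the convention $\widehat p_u=0$ is within error.

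The main obstacle is the concentration step: the naive product estimator has range and variance as large as $c^k$ at depth $k$, which would destroy the $\epsilon^{-2}$ rate. What I would try first is to follow Ref.~\cite{flammia2021pauli}. There, prefix marginals are estimated through the equivalent lossy channel on $\{I,X,Y,Z\}^k$, and exponential blow-up is avoided by estimating only the conditional one-letter splits of surviving heavy prefixes. Each extension then uses a single-site inversion with $O(1)$ range applied to samples whose prefix pattern is consistent with $w$, and the errors telescope additively instead of multiplying. If this bookkeeping becomes delicate, I would simply invoke their Theorem~2 for the statistical core and use our argument only for the reduction and the list-size and runtime accounting.
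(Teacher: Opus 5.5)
\textbf{What the paper does.} The paper gives no proof of this lemma: it imports it verbatim as Theorem~2 of Ref.~\cite{flammia2021pauli}. Your fallback of invoking that theorem is therefore exactly what the paper does. Several parts of your reconstruction are fine:
\begin{itemize}
\item the reduction to $\Phi^\circ$;
\item the uniformly random basis string $a$;
\item per-site independence given $u$;
\item the heavy-hitter tree accounting (list size, $O(n/\eps)$ nodes, $O(mn/\eps)$ runtime, pruned labels having $p_u\le\eps$).
\end{itemize}
As a self-contained proof, however, there is a genuine gap at exactly the step you flag. The repair you sketch would not close it.

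\textbf{Why your estimator fails.} The trouble is the choice of per-site estimator, not the tree. Your $g_b=3\,\one\{a_i=b\}\one\{\text{no flip}\}-g_I$ is unbiased, but it takes the value $2$ when $a_i=b$ and there is no flip. So $\prod_{i\le k}g_{w_i}$ ranges up to $2^k$ and has exponentially large second moment.

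\textbf{Why the conditional-split fix fails.} A lossy observation never certifies $u_{\le k}=w$. On a site where $u_i\neq w_i$, the observed flip bit agrees with the one $w_i$ would produce with probability $1/3$. Restricting to samples ``consistent with $w$'' therefore mixes in every prefix at distance $d$ with weight $3^{-d}$. Undoing that mixing brings back the product estimator you were trying to avoid, so the errors do not telescope.

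\textbf{The missing idea.} The unbiased per-site estimator is not unique: you may add any mean-zero function of $a_i$, since $a_i$ is independent of $u_i$. There is a choice bounded by one. Let $c_i(P)\in\{0,1\}$ indicate that the flip bit observed on site $i$ equals the one letter $P$ would produce in basis $a_i$, and set
\[
g_P=\tfrac12\bigl(3c_i(P)-1\bigr)\in\{1,-\tfrac12\}.
\]
Equivalently, subtract the mean-zero term $(3\,\one\{a_i=P\}-1)/2$ from your $g_P$ for $P\ne I$.

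Unbiasedness follows directly:
\begin{itemize}
\item If $u_i=P$, then $c_i=1$ always.
\item If $u_i\ne P$, then $c_i=1$ only when $a_i$ commutes with the nonidentity local Pauli $E_{u_i}E_P$. Exactly one of $X,Y,Z$ does, so this happens with probability $1/3$, and the mean is $\tfrac13-\tfrac23\cdot\tfrac12=0$.
\end{itemize}
Hence
\[
Y_w=\prod_{i\le k}g_{w_i}=(-\tfrac12)^{K_w},
\]
where $K_w$ is the number of inconsistent sites. It satisfies $\E[Y_w]=p(w)$ and $|Y_w|\le 1$. This is the lossy population-recovery estimator $(-\mu/(1-\mu))^{K}$ at effective erasure rate $\mu=1/3\le 1/2$, which is the core of Ref.~\cite{flammia2021pauli}. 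The same computation explains their tolerance of heralded failures up to $1/4$: that is where the effective rate reaches $1/2$.

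\textbf{Finishing the argument.} With this estimator, plain Hoeffding suffices, together with a union bound over a deterministic set of $O(n/\eps)$ nodes: the prefixes of true mass at least $\eps/4$ and their children. On the good event these are the only nodes the search visits. This gives $m=O(\eps^{-2}\log(n/(\eps\delta)))$ directly, with no conditional bookkeeping.
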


\subsection{Diagonal slopes reveal dissipative rows}

The diagonal Pauli-$\chi$ probability distribution is useful because it equals the corresponding diagonal entry of the Kossakowski matrix, i.e., $\chi'_{u,u}(0)=A_{u,u}$~\cite{ivashkov2026ansatz,romanov2026learning}. 
The row-finding algorithm estimates this distribution from positive times, so it needs uniform derivative bounds for $\chi_{u,u}(t)$.  
The next two lemmas are variants of Lemma C.1 of Ref.~\cite{ivashkov2026ansatz} with the physical scale $\Gamma$.

\begin{lemma}[The physical scale bounds the Hilbert-Schmidt norm of the generator, Lemma C.1 of Ref.~\cite{ivashkov2026ansatz}]\label{lem:hs-generator-bound}
With $\Gamma_*$ defined in Eq.~\eqref{eq:true-scale}, we have
\begin{align}
\|\Lop\|_{2\to2}\le \Gamma_*\le \Gamma.
\end{align}
Here, $\|\cdot\|_{2\to2}$ is the induced Hilbert-Schmidt norm on operators.
\end{lemma}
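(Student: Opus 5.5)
The plan is to bound $\Lop$ term by term on the Hilbert–Schmidt space, using the triangle inequality and the fact that left and right multiplication by a Pauli string is an isometry in every Schatten norm, in particular in $\|\cdot\|_2$. We write $\Lop=\Lop_H+\Lop_D$, with $\Lop_H(X)=-i[H,X]$ and $\Lop_D(X)=\sum_{u,v\in S_D}A_{uv}\bigl(E_uXE_v-\tfrac12 E_vE_uX-\tfrac12 XE_vE_u\bigr)$. The restriction of the double sum to $S_D\times S_D$ is justified by Eq.~\eqref{eq:psd-cauchy}, since every entry outside that block vanishes.

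\textbf{Hamiltonian part.} Use $\|YX\|_2\le\|Y\|_\infty\|X\|_2$ and $\|XY\|_2\le\|X\|_2\|Y\|_\infty$, both of which follow from $\|YX\|_2^2=\Tr(X^\dagger Y^\dagger YX)\le\|Y\|_\infty^2\|X\|_2^2$. Applied to both products in the commutator, this gives
\begin{align}
\|[H,X]\|_2\le 2\|H\|_\infty\|X\|_2 .
\end{align}

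\textbf{Dissipative part.} For each pair $(u,v)$, the maps $X\mapsto E_uXE_v$, $X\mapsto E_vE_uX$ and $X\mapsto XE_vE_u$ preserve $\|\cdot\|_2$, because $E_vE_u$ is unitary up to a phase. Each summand therefore has $2\to2$ norm at most $1+\tfrac12+\tfrac12=2$. Multiplying by $|A_{uv}|$ and summing gives $\|\Lop_D\|_{2\to2}\le 2\sum_{u,v\in S_D}|A_{uv}|$.

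\textbf{Conclusion.} Adding the two bounds yields $\|\Lop\|_{2\to2}\le\Gamma_*$, and $\Gamma_*\le\Gamma$ holds by assumption.

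There is no real obstacle here. The only point needing care is to use the operator-norm bound on $H$ rather than $\sum_s|h_s|$, so that the constant matches the definition of $\Gamma_*$ in Eq.~\eqref{eq:true-scale}. This is exactly the Hilbert–Schmidt analogue of the operator-norm computation in the proof of Lemma~\ref{lem:response-derivative-bound}.
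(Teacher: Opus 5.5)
Your proof is correct and takes essentially the same approach as the paper. The paper does not prove this lemma itself; it cites it from Ref.~\cite{ivashkov2026ansatz}. It does, however, carry out the same termwise triangle-inequality estimate for $\Lop^\dagger$ in operator norm in Lemma~\ref{lem:response-derivative-bound}: the commutator is bounded by $2\|H\|_\infty$, and each Pauli-Kossakowski summand by $2|A_{uv}|$ via unitary invariance. It then invokes the same estimate for the diamond-norm bound, and your Hilbert--Schmidt version is the direct analogue.
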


The short-time probability bound used below must remain valid after any allowed preparation and measurement.  
For this purpose we use the diamond norm, which controls the action of a channel even when an auxiliary system is present.  
The same termwise estimate gives
\begin{align}
\|\Lop\|_\diamond\le 2\|H\|_\infty+2\sum_{u,v\in S_D}|A_{uv}|=\Gamma_*\le\Gamma.
\end{align}
Indeed, the diamond norm of the two-sided multiplication map $X\mapsto AXB$ is at most $\|A\|_\infty\|B\|_\infty$.  
Hence, Pauli left and right multiplication have diamond norm one, the commutator part has diamond norm at most $2\|H\|_\infty$, and each dissipative basis map has diamond norm at most two.

\begin{lemma}[The physical scale bounds every derivative of a Pauli-$\chi$ coefficient, Lemma C.1 of Ref.~\cite{ivashkov2026ansatz}]
For every $k\ge 0$, we have
\begin{align}
\left|\frac{d^k}{dt^k}\chi_{a,b}(t)\right|\le \Gamma^k.
\end{align}
\end{lemma}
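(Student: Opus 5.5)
We will express each Pauli-$\chi$ coefficient as a normalized trace of the channel $e^{t\Lop}$ against a fixed pair of Pauli strings, and then reduce the derivative bound to the operator-norm bound already used in Lemma~\ref{lem:response-derivative-bound}. Concretely, Pauli orthogonality gives the inversion formula
\begin{align}
\chi_{a,b}(\Phi)=4^{-n}\sum_{q\in\V}2^{-n}\Tr\bigl(E_aE_q\,\Phi(E_q)\,E_b\bigr)\cdot\overline{\omega},
\end{align}
with a unimodular phase determined by the labels. Equivalently, we can write $\chi_{a,b}(\Phi)=4^{-n}\Tr\bigl((E_a\otimes\overline{E_b})^\dagger\,\hat\Phi\bigr)$, where $\hat\Phi$ is the natural (Liouville) representation. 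The version we will actually use is $\chi_{a,b}(t)=4^{-n}\sum_q 2^{-n}\Tr\bigl(E_bE_a\,e^{t\Lop}(E_q)\,E_q\bigr)$ up to phase. This follows because $\sum_q E_q X E_q=2^n\Tr(X)\one$ extracts the coefficient of $E_aXE_b$.

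Step one is to verify this identity. For any term $E_rXE_s$, the average over $q$ of $2^{-n}\Tr(E_s'E_r'E_rE_qE_sE_q)$-type expressions produces $\one\{r=a,s=b\}$ by the Walsh orthogonality~\eqref{eq:walsh-orthog}. Hence $\chi_{a,b}(t)$ is an average $F(t)=\sum_q c_q f_q(t)$ of $4^n$ Pauli responses of the form $2^{-n}\Tr\bigl(P_q\,e^{t\Lop}(E_q)\bigr)$, where $P_q\propto E_qE_bE_a$ (times a phase) is again a Pauli string and the weights satisfy $|c_q|=4^{-n}$. The weights therefore obey $\sum_q|c_q|=1$.

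Step two is to differentiate $k$ times. We have $F^{(k)}(t)=\sum_q c_q\,2^{-n}\Tr\bigl(P_q\Lop^ke^{t\Lop}(E_q)\bigr)$, and the general clause of Lemma~\ref{lem:response-derivative-bound} (for combinations with $\sum_q|c_q|\le1$) gives $|F^{(k)}(t)|\le\Gamma^k$ directly. That clause was stated for Hermitian $P,Q$. If a phase makes $P_q$ non-Hermitian, we will rerun its proof with $\|P_q\|_\infty=1$: the adjoint bound $\|(\Lop^\dagger)^k(P_q)\|_\infty\le\Gamma^k$ only uses operator norms, and $\|e^{t\Lop}(2^{-n}E_q)\|_1\le1$ holds since $E_q$ is Hermitian. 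Hölder's inequality then closes the bound.

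An alternative route, if the bookkeeping of phases gets messy, is to bound the functional directly. The map $\Phi\mapsto\chi_{a,b}(\Phi)$ is a linear functional with $|\chi_{a,b}(\Psi)|\le\|\Psi\|_{2\to2}$, because it is $4^{-n}$ times a Hilbert–Schmidt inner product with a unit-norm rank-one superoperator in the orthonormal Liouville basis. Then $\frac{d^k}{dt^k}\chi_{a,b}(t)=\chi_{a,b}(\Lop^ke^{t\Lop})$, and Lemma~\ref{lem:hs-generator-bound} together with contractivity of $e^{t\Lop}$ in the relevant norm gives the bound $\Gamma^k$. However, $e^{t\Lop}$ need not be a $2\to2$ contraction for non-unital dynamics, so we prefer the Pauli-response route. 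The main obstacle is exactly this point: making sure the norm used for the propagator is one in which $e^{t\Lop}$ is contractive (the trace norm), while the generator powers are bounded in the dual operator norm. The response decomposition with $\ell^1$-normalized weights handles this cleanly.
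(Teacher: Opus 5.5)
There is a genuine error in Step one: the inversion formula you say you will actually use is false, and the twirl identity you cite is the reason. Write $e^{t\Lop}(E_q)=\sum_{r,s}\chi_{r,s}(t)E_rE_qE_s$ and use $E_qE_sE_q=(-1)^{[q,s]}E_s$. Then $4^{-n}\sum_q e^{t\Lop}(E_q)E_q=\sum_r\chi_{r,0}(t)E_r$, so
\[
4^{-n}\sum_{q\in\V}2^{-n}\Tr\bigl(E_bE_a\,e^{t\Lop}(E_q)\,E_q\bigr)=\omega(b,a)\,\chi_{a+b,0}(t).
\]
Putting $E_q$ directly next to the output twirls away the adjacent index instead of selecting $b$. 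Correspondingly, your Step-one average $4^{-n}\sum_q2^{-n}\Tr(E_bE_aE_rE_qE_sE_q)$ yields $\omega(b,a)\one\{s=0,\ r=a+b\}$, not $\one\{r=a,s=b\}$. Your first display, with a $q$-independent phase, likewise returns $\chi_{0,a+b}$. As a concrete check, at $t=0$ with $a=b\ne0$ your formula gives $1$, while $\chi_{a,a}(\Id)=0$, so no global phase repairs it.

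The fix is to separate $E_q$ from the output by $E_b$:
\[
\chi_{a,b}(t)=4^{-n}\sum_{q\in\V}2^{-n}\Tr\bigl(E_a\,e^{t\Lop}(E_q)\,E_bE_q\bigr).
\]
Since $E_bE_qE_a=\overline{\gamma_{a,b}(q)}\,E_{q+a+b}$, this is a combination of the responses $E_q\to E_{q+a+b}$ with weights $4^{-n}\overline{\gamma_{a,b}(q)}$. It differs from your version termwise by the sign $(-1)^{[b,q]}$, which is exactly the Walsh character that does the selecting. This is Eq.~\eqref{eq:R-inversion} applied to the linear map $e^{t\Lop}$ in place of $\Lop$; its proof uses only the two-sided Pauli expansion. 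In other words, $\chi_{a,b}(t)=F_{a,b}(t)$. Your Step two uses only that the weights have total modulus one and that each measured string is a unit-norm Pauli string, so it goes through unchanged with this formula. It is then the same argument the paper uses to bound $F^{(\ell)}_{a,b}$.

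For comparison, the paper gives no proof of this lemma beyond the citation, placing $\|\Lop\|_{2\to2}\le\Gamma$ (\Cref{lem:hs-generator-bound}) immediately before it. Your objection that $e^{t\Lop}$ need not be a $2\to2$ contraction is correct, but that bound still suffices when applied to the process matrix rather than the channel. Take $|\Omega\rangle=\sum_i|i\rangle\otimes|i\rangle$ and $J(t)=(e^{t\Lop}\otimes\Id)(|\Omega\rangle\langle\Omega|)$, so $J(t)\succeq0$ and $\Tr J(t)=2^n$. The matrix $\chi(t)$ is $2^{-n}J(t)$ written in the orthonormal basis $2^{-n/2}(E_a\otimes I)|\Omega\rangle$, and $\frac{d^k}{dt^k}J(t)=(\Lop^k\otimes\Id)(J(t))$. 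Hence $|\chi^{(k)}_{a,b}(t)|\le2^{-n}\|(\Lop^k\otimes\Id)(J(t))\|_F\le2^{-n}\|\Lop\|_{2\to2}^k\Tr J(t)\le\Gamma^k$, with no phase bookkeeping at all.
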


\subsection{Finding heavy dissipative rows from three Pauli bases}

We now convert the diagonal distribution computation subroutine into an explicit row-finding procedure. 
A local Pauli matrix is determined by whether it commutes or anticommutes with $X$ and with $Y$.  
Therefore, the global $X^n$ and $Y^n$ flip patterns identify the label coordinate by coordinate, while the $Z^n$ pattern checks that the two decoded patterns are consistent.

For $B\in\{X^n,Y^n,Z^n\}$ and a Pauli label $u$, define a vector $\alpha_B(u)\in\mathbb{F}_2^n$
\begin{align}
\alpha_B(u)_i:=\one\{E_{u_i}B_i=-B_iE_{u_i}\}\in\mathbb F_2.
\end{align}
For $y\in\mathbb F_2^n$, let
\begin{align}
\lambda_B(y):= \sum_{u\in S_D:\,\alpha_B(u)=y}A_{u,u}.
\end{align}
\Cref{lem:product-pauli-diagonal-statistics} gives
\begin{align}
\Prb[U_B=y]=\sum_{u:\alpha_B(u)=y}\chi_{u,u}(t).
\end{align} 
For $y\ne0$, Taylor expansion at $t=0$ yields
\begin{equation}
\begin{split}
&\Prb[U_B=y]=t\lambda_B(y)+R_B(y),\\
&\sum_{y\ne0}|R_B(y)|\le C_1\Gamma^2t^2. 
\end{split}
\label{eq:bucket-short-time}
\end{equation} 
The first-order derivative is thus $\lambda_B(y)$, and the second equality follows from the fact that, by the diamond-norm estimate above and submultiplicativity, for $\Gamma t\le1$, we have
\begin{align}
\|e^{t\Lop}-\Id-t\Lop\|_\diamond\le e^{\Gamma t}-1-\Gamma t\le C_1\Gamma^2t^2,
\end{align}
and applying the preparation and measurement maps and taking the classical $\ell_1$ norm gives Eq.~\eqref{eq:bucket-short-time}.

A row with $A_{u,u}\ge\xi$ creates a first-order flip probability in each basis where its pattern is nonzero.  
Repeating the three same-Pauli measurements therefore finds every such row. 
The procedure is allowed to return extra labels, but the list remains limited-size.
Quantitatively, we have the following key lemma.

\begin{lemma}[Three global Pauli bases find every heavy dissipative row] \label{thm:diss-row}
Let $0<\eps\le1$ and let $\xi$ be defined by Eq.~\eqref{eq:rho-threshold}. 
A control-free product-Pauli procedure outputs $C\subseteq\V\setminus\{0\}$ such that, with probability at least $1-\delta$, we have $S_{D,\xi}\subseteq C$ and $|C|\le 4M_0$.
Moreover, it uses
\begin{align}
N_D=\widetilde O\!\left(\frac{\Gamma^2}{\xi^2}\right)=\widetilde O\!\left(\frac{\Gamma^2M_0^2}{\eps^4}\right)
\end{align}
shots and total evolution time
\begin{align}
T_D=\widetilde O\!\left(\frac{1}{\xi}\right)=\widetilde O\!\left(\frac{M_0}{\eps^2}\right).
\end{align}
\end{lemma}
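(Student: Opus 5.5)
The plan is to analyze Stage~1 of Algorithm~\ref{alg:lindblad} directly, with $\tau_D=c_0\xi/\Gamma^2$, $p_\xi=\tau_D\xi$, and $N_1=\lceil C_0p_\xi^{-1}\log(6M_0/\delta)\rceil$ shots per basis. These parameters already give the claimed costs: $3N_1=\widetilde O(\Gamma^2/\xi^2)$ shots and $3N_1\tau_D=\widetilde O(1/\xi)$ evolution time. For each basis $B\in\{X^n,Y^n,Z^n\}$, Lemma~\ref{lem:product-pauli-diagonal-statistics} together with Eq.~\eqref{eq:bucket-short-time} gives, for $y\neq 0$,
\begin{align}
\Prb[U_B=y]=\tau_D\lambda_B(y)+R_B(y),\qquad \sum_{y\ne0}|R_B(y)|\le C_1\Gamma^2\tau_D^2=C_1c_0\,p_\xi .
\end{align}
Choosing $c_0$ small, say $C_1c_0\le 1/16$, makes the total remainder mass a small fraction of $p_\xi$. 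Everything else follows from this single estimate.

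\emph{Completeness.} Fix a heavy row $u\in S_{D,\xi}$. Every summand of $\lambda_B$ is a diagonal entry $A_{u',u'}\ge0$, so $\lambda_B(\alpha_B(u))\ge A_{u,u}\ge\xi$ whenever $\alpha_B(u)\neq0$. The displayed estimate then gives $\Prb[U_B=\alpha_B(u)]\ge p_\xi-p_\xi/16\ge p_\xi/2$. A multiplicative Chernoff bound shows that the count of this pattern is at least $N_1p_\xi/4$ except with probability $\exp(-\Omega(C_0\log(6M_0/\delta)))$. The case $\alpha_B(u)=0$ needs no argument, since $0\in\mathcal Y_B$ by construction. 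There are at most $|S_D|\le\sqrt{M_0}\le M_0$ heavy rows and three bases, so a union bound over these at most $3M_0$ events costs at most $\delta/2$. On the good event, $\alpha_X(u)\in\mathcal Y_{X^n}$, $\alpha_Y(u)\in\mathcal Y_{Y^n}$, and $\alpha_Z(u)=\alpha_X(u)+\alpha_Y(u)\in\mathcal Y_{Z^n}$. The last identity holds because a local Pauli anticommutes with $Z$ exactly when it anticommutes with precisely one of $X,Y$. Since a local letter is determined by its anticommutation bits with $X$ and $Y$, $\texttt{dec}(\alpha_X(u),\alpha_Y(u))=u$, and therefore $u\in C$.

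\emph{Size bound.} This is the step I expect to be the main obstacle. The flip patterns range over $2^n$ strings, so a naive union bound over spurious patterns is not available. The way around it is to merge all ``spurious'' patterns $y\neq0$ with $\lambda_B(y)=0$ into one bucket. Their total probability is at most $\sum_y|R_B(y)|\le p_\xi/16$. Hence the total count of that entire bucket exceeds $N_1p_\xi/4$ with probability only $\exp(-\Omega(N_1p_\xi))$, by a single Chernoff bound per basis, and this costs at most $\delta/2$ over the three bases. On this event no spurious pattern passes the threshold, so every element of $\mathcal Y_B\setminus\{0\}$ lies in $\alpha_B(S_D)$, and $|\mathcal Y_B|\le M_D+1$. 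Then $|C|\le|\mathcal Y_{X^n}||\mathcal Y_{Y^n}|\le(M_D+1)^2\le4M_D^2\le4M_0$ when $M_D\ge1$. When $M_D=0$, every $\mathcal Y_B=\{0\}$ and $C=\varnothing$.

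Combining the two good events gives success probability at least $1-\delta$. The remaining care points are: checking that the constants $c_0,C_0$ are compatible in both Chernoff bounds, and confirming that Eq.~\eqref{eq:bucket-short-time} applies because $\Gamma\tau_D=c_0\xi/\Gamma\le1$.
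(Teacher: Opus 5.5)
Your proof is correct and follows the paper's proof step for step. It uses the same $\tau_D$, $p_\xi$ and counting threshold. Completeness comes from $\lambda_B(\alpha_B(u))\ge A_{u,u}\ge\xi$ plus a Chernoff bound and a union bound over at most $3M_D$ events. The size bound comes from a single Chernoff bound on the merged probability of all patterns outside $\alpha_B(S_D)\cup\{0\}$ (the paper's $\mathcal S_B$). Decoding uses $\alpha_{Z^n}=\alpha_{X^n}+\alpha_{Y^n}$.

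The one loose end you flag, $\Gamma\tau_D=c_0\xi/\Gamma\le1$, does not hold automatically. The paper closes it by first handling $\Gamma<2\xi$ separately: then $A_{u,u}\le\Gamma_*/2<\xi$, so $S_{D,\xi}=\varnothing$ and returning $C=\varnothing$ suffices. In the remaining case $\Gamma\tau_D\le c_0/2$.
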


\begin{proof}
If $\Gamma<2\xi$, then
\begin{align}
A_{u,u}\le \sum_{v,w\in S_D}|A_{v,w}|\le \Gamma/2 <\xi
\end{align}
for every $u$, so $S_{D,\xi}=\emptyset$ and the procedure returns $C=\emptyset$. 
If $\Gamma\ge2\xi$, we set
\begin{align}
\tau_D:=c_0\frac{\xi}{\Gamma^2},\qquad
p_\xi:=\tau_D\xi=c_0\frac{\xi^2}{\Gamma^2},
\end{align}
with $c_0>0$ sufficiently small. 
In each basis $B\in\{X^n,Y^n,Z^n\}$, we perform
\begin{align}
N_B=Cp_\xi^{-1}\log\frac{6(M_0)}{\delta}
\end{align}
independent shots at time $\tau_D$ assuming $M_0\geq1$. 
Assuming $C_B(y)$ is the number of times the pattern $y$ appears, we set
\begin{align}
\mathcal Y_B:=\{y:C_B(y)\ge N_Bp_\xi/4\}\cup\{0\}.
\end{align}
For $u\in S_{D,\xi}$ with $y=\alpha_B(u)\ne0$, Eq.~\eqref{eq:bucket-short-time} and $\lambda_B(y)\ge A_{u,u}\ge\xi$ imply
\begin{align}
\Prb[U_B=y]\ge \tau_D\xi-C_1\Gamma^2\tau_D^2\ge\frac{15}{16}p_\xi.
\end{align}
A Chernoff bound, followed by a union bound over the three bases and at most $M_D$ heavy rows, shows that every heavy-row pattern belongs to the corresponding list $\mathcal Y_B$.
 
Let $\mathcal S_B:=\{\alpha_B(u):u\in S_D\}\cup\{0\}$.
Then $|\mathcal S_B|\le M_D+1$. 
For $y\notin\mathcal S_B$, the first-order mass vanishes, and
\eqref{eq:bucket-short-time} gives
\begin{align}
\Prb[U_B\notin\mathcal S_B]\le C_1\Gamma^2 \tau _D^2\le p_\xi/16.
\end{align}
The number of samples outside $\mathcal S_B$ is then smaller than $N_Bp_\xi/4$ with probability at least $1-\delta/6$, after increasing $C$. 
Conditioned on this event, no pattern outside $\mathcal S_B$ is retained, so $|\mathcal Y_B|\le M_D+1$.

For every single-qubit Pauli matrix, we have
\begin{align}
(\alpha_X,\alpha_Y,\alpha_Z)\in\{(0,0,0),(0,1,1),(1,0,1),(1,1,0)\},
\end{align}
and hence
\begin{align}
\alpha_{Z^n}(u)=\alpha_{X^n}(u)+\alpha_{Y^n}(u).
\end{align}
The pair $(\alpha_{X^n}(u),\alpha_{Y^n}(u))$ then determines each local Pauli letter through
\begin{align}
(0,0)\mapsto I,\ (0,1)\mapsto X,\ (1,0)\mapsto Y,\ (1,1)\mapsto Z.
\end{align} 
Form $C$ from all pairs $(y_X,y_Y)\in\mathcal Y_{X^n}\times\mathcal Y_{Y^n}$ satisfying $y_X+y_Y\in\mathcal Y_{Z^n}$, and discard the identity label. 
The preceding inclusion places every $u\in S_{D,\xi}$ in $C$. 
If $M_D=0$, the decoded set is empty. 
If $M_D\ge1$, then we have
\begin{align}
|C|\le(M_D+1)^2\le4M_D^2\le4M_0.
\end{align} 
Finally, $N_D=3N_B$ and $T_D=3N_B\tau_D$, which give the stated complexity.
\end{proof}

\subsection{The collision set \texorpdfstring{$\Delta_C$}{Delta C}}

An off-diagonal coefficient $A_{uv}$ contributes one-sided terms with Pauli label $u+v$.  
A Hamiltonian coefficient $h_s$ also contributes at the one-sided label $s$. 
Thus a Hamiltonian and a dissipative term can share a label only when that label is a sum of two dissipative rows.  Since the true rows are not known exactly, we form the corresponding candidate sum set from $C$.

Conditioned on the success case of \Cref{thm:diss-row}, we define
\begin{equation}
\Sigma_C:=\{u+v:u,v\in C\},\qquad
\Delta_C:=\Sigma_C\setminus\{0\}. 
\label{eq:Delta-C}
\end{equation} 
For $s\ne0$, set
\begin{align}
\beta_C(s):=\sum_{\substack{u+v=s\\u,v\in S_D}}|A_{uv}|.
\end{align}
We have the following lemma.

\begin{lemma}[Outside the candidate sum set, the total dissipative contribution is small]\label{lem:weak-tail}
Assume $S_{D,\xi}\subseteq C$. Then for any $u\in S_D\setminus C$, we have $A_{u,u}<\xi$ and $|A_{uv}|\le\sqrt\xi$ for any $v\in S_D$.
For every $s\notin\Delta_C$,
\begin{equation}
\beta_C(s)\le\sqrt{M_0\xi}=\sqrt{c_\xi}\,\eps. \label{eq:weak-tail-bound}
\end{equation}
In particular, $c_\xi$ may be chosen so that $\beta_C(s)\le\eps/100$.
\end{lemma}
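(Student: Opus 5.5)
The plan is to derive everything from the positive-semidefinite Cauchy--Schwarz bound \eqref{eq:psd-cauchy}, the normalization $|A_{uv}|\le1$, the inclusion $S_{D,\xi}\subseteq C$, and a counting argument on pairs with a fixed sum label.

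\emph{Light rows.} Take $u\in S_D\setminus C$. If $A_{u,u}\ge\xi$ held, then $u\in S_{D,\xi}\subseteq C$, which is a contradiction. Hence $A_{u,u}<\xi$. For any $v\in S_D$ the normalization gives $A_{v,v}\le1$, so \eqref{eq:psd-cauchy} yields
\begin{align}
|A_{uv}|\le\sqrt{A_{u,u}A_{v,v}}<\sqrt{\xi}.
\end{align}
The same bound holds for $|A_{vu}|=|A_{uv}|$, by Hermiticity of $A$.

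\emph{Fixed-sum tail.} Fix $s\ne0$ with $s\notin\Delta_C$. Consider any pair $(u,v)$ with $u,v\in S_D$ and $u+v=s$.
\begin{itemize}
\item If both $u$ and $v$ lay in $C$, then $s=u+v\in\Sigma_C$. Since $s\ne0$, this would give $s\in\Delta_C$, which is excluded.
\item Therefore at least one of $u,v$ lies in $S_D\setminus C$. By the previous step, every term in $\beta_C(s)$ is then at most $\sqrt{\xi}$.
\end{itemize}
It remains to count the terms. For fixed $s$, the second index is determined by the first through $v=u+s$. So the number of pairs is at most $|S_D|=M_D$.

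Since $M_D^2\le M\le M_0$, we have $M_D\le\sqrt{M_0}$. This gives
\begin{align}
\beta_C(s)\le M_D\sqrt{\xi}\le\sqrt{M_0\xi}=\sqrt{c_\xi}\,\eps,
\end{align}
using $\xi=c_\xi\eps^2/M_0$ from \eqref{eq:rho-threshold}. Choosing $c_\xi\le10^{-4}$ then gives $\beta_C(s)\le\eps/100$.

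\emph{Main obstacle.} The only step that needs care is the counting. A naive bound on the number of pairs would be $M_D^2\le M_0$, which yields only $M_0\sqrt{\xi}$ and loses the square root. The key observation is that fixing the sum label $s$ leaves a single free index, so the count drops to $M_D\le\sqrt{M_0}$. It also matters that $s\ne0$ is built into the definition of $\beta_C$, so the diagonal pairs $u=v$, which sum to $0$, never arise.
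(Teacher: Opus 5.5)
Your proof is correct and matches the paper's argument: light rows satisfy $A_{u,u}<\xi$ by the contrapositive of $S_{D,\xi}\subseteq C$, and the PSD Cauchy--Schwarz bound then gives $|A_{uv}|\le\sqrt\xi$. For $s\notin\Delta_C$ every contributing pair contains a light row, and fixing $v=u+s$ leaves at most $M_D\le\sqrt{M_0}$ terms. You also spell out points the paper leaves implicit: the use of $A_{v,v}\le1$, Hermiticity when the light row is the second index, and the restriction to $s\ne0$, where $\beta_C$ is defined.
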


\begin{proof}
The first assertion follows from the definition of $S_{D,\xi}$ and Eq.~\eqref{eq:psd-cauchy}. 
If $s\notin\Delta_C$, no pair $u,v\in S_D$ with $u+v=s$ has both $u$ and $v$ in $C$. 
Each contributing pair therefore has at least one of $u$ and $v$ in $S_D\setminus C$ and satisfies $|A_{uv}|\le\sqrt\xi$. 
For a fixed $s$, the choice of $u$ determines $v=s+u$, so there are at most $|S_D|\le\sqrt{M_0}$ contributing ordered pairs. 
Summation gives Eq.~\eqref{eq:weak-tail-bound}.
\end{proof}

The conclusion is that every dissipative term at a label outside $\Delta_C$ contains at least one non-heavy row.  
Positive semidefiniteness then bounds the sum of all such terms by $O(\eps)$.

\subsection{Comparing the physical evolution with the no-jump evolution}

Define the true dissipative sum set as
\begin{equation}
\Delta_D:=\{u+v:u,v\in S_D\}\setminus\{0\}. 
\label{eq:DeltaD}
\end{equation}
Here, $u+v$ is the Pauli label obtained by multiplying $E_u$ and $E_v$, up to phase. 
The zero label may occur when $u=v$, but it is irrelevant for Hamiltonian support because there is no learned coefficient $h_0$.
The set $\Delta_D$ lists every label that can appear before cancellations are taken into account.
It may therefore be larger than the actual set of nonzero dissipative one-sided terms.

Let
\begin{align}
J:=\sum_{u,v\in S_D}A_{uv}E_vE_u,\qquad
K_0:=-iH-\frac{1}{2}J.
\end{align}
The operator $K_0$ describes the branch in which no jump occurs.  
The experiment never implements $e^{tK_0}$ as a separate physical evolution, and it appears only in the proof as a lower bound on the actual dynamics.  
Choose a factorization $A_{uv}=\sum_\alpha c_{\alpha u}\overline{c_{\alpha v}}$ and set $L_\alpha=\sum_u c_{\alpha u}E_u$.
With this notation, we have the following forms
\begin{align}
\Lop(\rho)=K_0\rho+\rho K_0^*+\Phi(\rho),\
\Phi(\rho)=\sum_\alpha L_\alpha\rho L_\alpha^*.
\end{align}
The map $\Phi$ is completely positive. 
The map $\rho\mapsto e^{tK_0}\rho e^{tK_0^*}$ is the $r=0$ completely positive term in the Dyson expansion of the full semigroup, which is a sum of completely positive contributions.  
The no-jump evolution is the term with no inserted jump map. 
Because all remaining terms are positive, every physical transition probability is at least its no-jump contribution.

\begin{lemma}[The physical evolution contains the no-jump contribution]
For every $t\ge0$ and every $\rho\succeq0$, we have
\begin{equation}
 e^{t\Lop}(\rho)\succeq e^{tK_0}\rho e^{tK_0^*}. \label{eq:no-jump-domination}
\end{equation}
\end{lemma}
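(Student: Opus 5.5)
We would prove the domination through a Duhamel (variation-of-constants) expansion of the semigroup around the no-jump part. Write $\Lop=\mathcal K+\Phi$, where $\mathcal K(\rho):=K_0\rho+\rho K_0^*$ and $\Phi(\rho)=\sum_\alpha L_\alpha\rho L_\alpha^*$, as in the decomposition just given. The map $\mathcal K$ generates the semigroup $\mathcal S_t(\rho):=e^{tK_0}\rho e^{tK_0^*}$. This is immediate, since $\frac{d}{dt}\mathcal S_t(\rho)=K_0\mathcal S_t(\rho)+\mathcal S_t(\rho)K_0^*$ and $\mathcal S_0=\Id$. Each $\mathcal S_t$ is completely positive because it is a single Kraus conjugation, and $\Phi$ is completely positive by construction.

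Everything is finite-dimensional, so all these superoperators are bounded and the Duhamel identity holds:
\begin{align}
e^{t\Lop}=\mathcal S_t+\int_0^t e^{(t-s)\Lop}\,\Phi\,\mathcal S_s\,ds .
\end{align}
To verify it, differentiate $s\mapsto e^{(t-s)\Lop}\mathcal S_s$ and integrate from $0$ to $t$. Iterating the identity $r$ times produces the Dyson series
\begin{align}
e^{t\Lop}=\sum_{r\ge0}\int_{0\le s_1\le\cdots\le s_r\le t}\mathcal S_{t-s_r}\Phi\mathcal S_{s_r-s_{r-1}}\cdots\Phi\mathcal S_{s_1}\,ds_1\cdots ds_r .
\end{align}
The series converges absolutely in the $2\to2$ norm. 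The $r$-th term is bounded by $(t\|\Phi\|)^r e^{rt\|\mathcal K\|}/r!$ up to an overall factor $e^{t\|\mathcal K\|}$, so the terms are summable for every $t$.

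Given the series, the lemma follows directly. Each integrand with $r\ge1$ is a composition of completely positive maps, so it sends $\rho\succeq0$ to a positive semidefinite operator. Integrals and convergent sums of positive semidefinite operators are positive semidefinite. Hence $e^{t\Lop}(\rho)-\mathcal S_t(\rho)$ equals the sum of the $r\ge1$ terms and is $\succeq0$, which is exactly Eq.~\eqref{eq:no-jump-domination}.

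To avoid the series altogether, we can instead use the single Duhamel identity with the roles of the two semigroups arranged so that $e^{t\Lop}$ appears inside the integral:
\begin{align}
e^{t\Lop}(\rho)-\mathcal S_t(\rho)=\int_0^t e^{(t-s)\Lop}\big(\Phi(\mathcal S_s(\rho))\big)\,ds .
\end{align}
Here $e^{(t-s)\Lop}$ is CPTP, being a Lindblad semigroup, and $\Phi(\mathcal S_s(\rho))\succeq0$. The integrand is therefore positive semidefinite for every $s$, and no convergence argument is needed.

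The only step requiring care is this Duhamel identity. One must check the derivative $\frac{d}{ds}\big[e^{(t-s)\Lop}\mathcal S_s\big]=e^{(t-s)\Lop}(-\Lop+\mathcal K)\mathcal S_s=-e^{(t-s)\Lop}\Phi\mathcal S_s$, which uses $\Lop=\mathcal K+\Phi$ exactly. The identity $\Lop=\mathcal K+\Phi$ in turn needs $J=\sum_\alpha L_\alpha^*L_\alpha$. With the factorization $A_{uv}=\sum_\alpha c_{\alpha u}\overline{c_{\alpha v}}$ we have $\sum_\alpha L_\alpha^*L_\alpha=\sum_{u,v}\overline{c_{\alpha v}}c_{\alpha u}E_vE_u=J$, which matches the anticommutator term $-\tfrac12\{E_vE_u,\rho\}$ in Eq.~\eqref{eq:model-L}. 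The jump term $E_u\rho E_v$ likewise matches $L_\alpha\rho L_\alpha^*$.
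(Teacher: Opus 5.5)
Your proof is correct and follows the paper's argument: you split $\Lop$ into the no-jump generator $\rho\mapsto K_0\rho+\rho K_0^*$ plus the completely positive jump map $\Phi$, expand $e^{t\Lop}$ in the Dyson series, and note that every $r\ge1$ term is a composition of completely positive maps. Your one-step Duhamel variant is also valid and avoids the convergence bookkeeping, although the positivity of $e^{(t-s)\Lop}$ that it invokes is the GKLS theorem, whose standard proof is itself an expansion of this same kind.
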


\begin{proof}
Write $\Lop=\Lop_0+\Phi$ with $\Lop_0(X)=K_0X+XK_0^*$ and $\Phi(X)=\sum_\alpha L_\alpha X L_\alpha^*$.
The map $\Phi$ is completely positive, and for every $s\ge0$, the semigroup generated by $\Lop_0$ is
\begin{align}
 e^{s\Lop_0}(X)=e^{sK_0}Xe^{sK_0^*},
\end{align}
so it is also completely positive. 
Assuming finite dimension, the Dyson expansion is given by
\begin{align}
\begin{aligned}
e^{t(\Lop_0+\Phi)}&=\sum_{r=0}^\infty
\int_{0\le t_1\le\cdots\le t_r\le t}
e^{(t-t_r)\Lop_0}\Phi\cdot \\
&\quad e^{(t_r-t_{r-1})\Lop_0}\cdots\Phi e^{t_1\Lop_0}\,dt_1\cdots dt_r,
\end{aligned}
\end{align}
where the $r=0$ term is interpreted as $e^{t\Lop_0}$. 
Each integrand is a composition of completely positive maps, hence is completely positive. 
Therefore, for every $\rho\succeq0$, every term in the series is positive semidefinite when applied to $\rho$. 
The full output is the sum of the $r=0$ term, which is the no-jump contribution, and the positive semidefinite sum of all $r\ge1$ contributions $e^{t\Lop}(\rho)-e^{t\Lop_0}(\rho)\succeq0$.
Substituting the explicit form of $e^{t\Lop_0}$ gives the claimed result.
\end{proof}

The protocol uses the fact that their positive contribution is contained in the measured transition probability.
We also need to control how quickly the no-jump amplitudes change.  
The next lemma proves both the required contraction and the derivative bound.

\begin{lemma}[No-jump amplitudes contract and have bounded derivatives]\label{lem:no-jump-derivative-bound}
Let $K=K_0$ and let $B\in\{X,Z\}$ be a product-Pauli basis. For any displacement $d\in\mathbb F_2^n$, we define
\begin{align}
F_d^B(t)_a:=2^{-n/2}\langle a+d|e^{tK}|a\rangle_B.
\end{align}
Then, for every integer $\ell\ge0$ and every $t\ge0$, we have
\begin{align}
\|(F_d^B)^{(\ell)}(t)\|_2\le \Gamma^\ell.
\end{align}
\end{lemma}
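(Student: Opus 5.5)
The plan is to reduce the vector norm to an operator norm. For a fixed basis $B\in\{X,Z\}$ and displacement $d$, let $P_d$ denote the product-Pauli permutation operator that flips the bits in $\supp d$ of a $B$-basis vector. Concretely, $P_d=Z^{d}$ when $B=X$, and $P_d=X^{d}$ when $B=Z$, so that $P_d|a\rangle_B=\pm|a+d\rangle_B$ with a sign that does not matter here. Then
\begin{align}
(F_d^B)^{(\ell)}(t)_a=2^{-n/2}\langle a+d|K^\ell e^{tK}|a\rangle_B ,
\end{align}
which is, up to a $\pm$ sign per entry, $2^{-n/2}\langle a|P_dK^\ell e^{tK}|a\rangle_B$. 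The entries are therefore the normalized diagonal of the operator $M_\ell(t):=P_dK^\ell e^{tK}$ in the $B$ basis. Next I will use that the diagonal of any matrix has Euclidean norm at most its Frobenius norm, which gives
\begin{align}
\|(F_d^B)^{(\ell)}(t)\|_2\le 2^{-n/2}\|M_\ell(t)\|_F\le\|M_\ell(t)\|_\infty .
\end{align}
Since $P_d$ is unitary, this is at most $\|K\|_\infty^\ell\,\|e^{tK}\|_\infty$. In fact the cruder bound, diagonal $\ell_2$ norm at most $\sqrt{2^n}\,\max|\cdot|$, also suffices, because each entry is bounded by the operator norm.

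It then remains to prove two bounds: $\|K\|_\infty\le\Gamma$, and the contraction $\|e^{tK}\|_\infty\le1$ for $t\ge0$.

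For the first bound, $\|K\|_\infty\le\|H\|_\infty+\tfrac12\|J\|_\infty$. Each term of $J$ satisfies $\|E_vE_u\|_\infty=1$, so $\|J\|_\infty\le\sum_{u,v\in S_D}|A_{uv}|$. Hence $\|K\|_\infty\le\tfrac12\Gamma_*\le\Gamma$, with room to spare.

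For the contraction, I will use that $J=\sum_\alpha L_\alpha^*L_\alpha$ from the factorization $A_{uv}=\sum_\alpha c_{\alpha u}\overline{c_{\alpha v}}$, so $J\succeq0$ and $K+K^*=-J\preceq0$. Then for any vector $\psi$, the quantity $g(t)=\|e^{tK}\psi\|^2$ satisfies
\begin{align}
g'(t)=\langle e^{tK}\psi,(K+K^*)e^{tK}\psi\rangle\le0 ,
\end{align}
so $\|e^{tK}\|_\infty\le1$. Combining the two bounds gives $\|(F_d^B)^{(\ell)}(t)\|_2\le\Gamma^\ell$.

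The main obstacle is bookkeeping rather than analysis. I have to check that $J$ as defined, with index order $E_vE_u$ weighted by $A_{uv}$, really equals $\sum_\alpha L_\alpha^*L_\alpha$. Writing $L_\alpha^*L_\alpha=\sum_{u,v}\overline{c_{\alpha v}}c_{\alpha u}E_vE_u$ and summing over $\alpha$ gives $\sum_{u,v}A_{uv}E_vE_u$, which matches. I also have to make sure the $2^{-n/2}$ normalization exactly cancels the dimension in the diagonal-versus-Frobenius step, and that the sign ambiguity from $P_d$ does not affect the norm.
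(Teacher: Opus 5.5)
Your proposal is correct and follows essentially the paper's route. Both arguments bound the normalized amplitude vector by $\|K^\ell e^{tK}\|_\infty$: your diagonal-versus-Frobenius step (or the cruder entrywise bound you mention, which makes $P_d$ unnecessary) is the paper's estimate $2^{-n}\sum_a|\langle a+d|M|a\rangle|^2\le 2^{-n}\Tr(M^*M)\le\|M\|_\infty^2$. Both then conclude using $\|K\|_\infty\le\Gamma$ and the contraction $\|e^{tK}\|_\infty\le1$, which follows from $K+K^*=-J\preceq0$ with $J=\sum_\alpha L_\alpha^*L_\alpha$.
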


\begin{proof}
First, we show that the no-jump evolution is a contraction. 
Using the factorization above, we have
\begin{align}
\sum_\alpha L_\alpha^*L_\alpha=\sum_{u,v}\left(\sum_\alpha \overline{c_{\alpha u}}c_{\alpha v}\right)E_uE_v.
\end{align}
After relabeling the dummy indices and using the convention in the definition of $J$, this simplifies to
\begin{align}
J=\sum_{u,v\in S_D}A_{uv}E_vE_u\succeq0.
\end{align}
Therefore, we have $K+K^*=-J\preceq0$.
Let $A(t)=e^{tK}$. Then
\begin{align}
\frac{d}{dt}\bigl(A(t)^*A(t)\bigr)=A(t)^*(K^*+K)A(t)\preceq0.
\end{align}
Since $A(0)^*A(0)=I$, integration in $t$ gives
\begin{align}
A(t)^*A(t)\preceq I,\qquad \|e^{tK}\|_\infty\le1.
\end{align}
Next, we note that
\begin{align}
\begin{aligned}
\|K\|_\infty&\le \|H\|_\infty+\frac{1}{2}\|J\|_\infty \\
&\le \|H\|_\infty+\frac{1}{2}\sum_{u,v\in S_D}|A_{uv}|\,\|E_vE_u\|_\infty \\
&\le \|H\|_\infty+\frac{1}{2}\sum_{u,v\in S_D}|A_{uv}|\le \Gamma.
\end{aligned}
\end{align}
We fix a product basis $B$ and a displacement $d$, and differentiate the amplitude vector to obtain
\begin{align}
(F_d^B)^{(\ell)}(t)_a=2^{-n/2}\langle a+d|K^\ell e^{tK}|a\rangle_B.
\end{align}
For any operator $M$, we have
\begin{align}
\begin{aligned}
\|F_d^B(M)\|_2^2&:=2^{-n}\sum_a |\langle a+d|M|a\rangle_B|^2 \\
&\le 2^{-n}\sum_a \|M|a\rangle_B\|_2^2 \\
&=2^{-n}\Tr(M^*M)\le \|M\|_\infty^2,
\end{aligned}
\end{align}
because $\Tr(M^*M)\le 2^n\|M\|_\infty^2$. 
Taking $M=K^\ell e^{tK}$ and using the contraction just proved, we have
\begin{align}
\begin{split}
\|(F_d^B)^{(\ell)}(t)\|_2&\le \|K^\ell e^{tK}\|_\infty\le \|K\|_\infty^\ell\|e^{tK}\|_\infty\\
&\le \Gamma^\ell,
\end{split}
\end{align}
which finishes the proof.
\end{proof}

\subsection{Hamiltonian projection strength outside the collision set}

We consider a Pauli string $P_s$ which has a symplectic form $(x(s)\mid z(s))$
In the $Z$ basis, the Pauli string $E_s$ changes the basis label by $x(s)$.  
In the $X$ basis, it changes the label by $z(s)$.  
Outside the collision set, the remaining dissipative term at label $s$ is small.  
A Hamiltonian coefficient larger than $\eps$ therefore creates a detectable derivative in at least one of these displacement amplitudes.

\begin{lemma}[A large Hamiltonian coefficient gives a detectable displacement derivative] \label{lem:projection-strength}
Assume $S_{D,\xi}\subseteq C$, let $s\notin\Delta_C$, and suppose $|h_s|>\eps$. 
If $x(s)=d\ne0$, we define
\begin{align}
F_d^Z(t)_a:=2^{-n/2}\langle a+d|e^{tK_0}|a\rangle_Z.
\end{align}
Then, we have
\begin{align}
\|(F_d^Z)'(0)\|_2\ge \eps/2.
\end{align}
Similarly, if $z(s)=e\ne0$, the analogous $X$-basis vector satisfies
\begin{align}
\|(F_e^X)'(0)\|_2\ge \eps/2.
\end{align}
\end{lemma}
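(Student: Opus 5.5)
The derivative at $t=0$ is explicit. Since $e^{tK_0}$ has derivative $K_0$ at $0$, we have $(F_d^Z)'(0)_a=2^{-n/2}\langle a+d|K_0|a\rangle_Z$. So the plan is to compute the $\ell_2$ norm of this vector through the Pauli expansion of $K_0$. Write
\begin{align}
K_0=\sum_{w\in\V}k_wE_w,\qquad k_w=-ih_w-\tfrac12 j_w,
\end{align}
where $j_w=\sum_{u,v:\,u+v=w}A_{uv}\,\omega(v,u)$ are the Pauli coefficients of $J$. Because $J$ is Hermitian and the $E_w$ are Hermitian, each $j_w$ is real, and $h_w$ is real as well.

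Next, group the Pauli expansion by $x$-projection. In the $Z$ basis, $E_w$ maps $|a\rangle$ to a phase times $|a+x(w)\rangle$. Hence only labels $w=(d|z)$ with fixed $x$-part $d$ contribute to the displacement-$d$ amplitude, and
\begin{align}
\langle a+d|E_{(d|z)}|a\rangle=\theta_d\,(-1)^{z\cdot a}
\end{align}
for a unimodular phase $\theta_d$ depending on the phase convention, possibly also on $z$ through a fixed $i^{d\cdot z}$ factor. The characters $a\mapsto(-1)^{z\cdot a}$ are orthonormal under the normalized sum $2^{-n}\sum_a$. Parseval therefore gives
\begin{align}
\|(F_d^Z)'(0)\|_2^2=\sum_{z}|k_{(d|z)}|^2\ge|k_s|^2 .
\end{align}
Thus it suffices to lower bound the single coefficient $|k_s|=|-ih_s-\tfrac12 j_s|$.

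Now use phase orthogonality, which is the key point flagged in the main text. Both $h_s$ and $j_s$ are real, and $-ih_s$ is purely imaginary, so $|k_s|^2=h_s^2+\tfrac14 j_s^2\ge h_s^2>\eps^2$. This already gives $|k_s|>\eps\ge\eps/2$. The hypothesis $s\notin\Delta_C$ with Lemma~\ref{lem:weak-tail} ($|j_s|\le\beta_C(s)\le\eps/100$) is not even needed once reality is established. As a fallback, the triangle inequality $|k_s|\ge|h_s|-\tfrac12\beta_C(s)\ge\eps-\eps/200$ also gives $\eps/2$ via Lemma~\ref{lem:weak-tail}. I would state both arguments so that the proof survives phase-convention subtleties. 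The $X$-basis case is identical with the roles of $x$ and $z$ swapped: $E_w$ displaces $X$-basis labels by $z(w)$ and acts diagonally by characters in $x(w)$.

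The step I expect to be the main obstacle is verifying cleanly that the matrix element $\langle a+d|E_w|a\rangle$ has the claimed form. Specifically, its modulus must be $1$ and its $a$-dependence must be a character times a phase independent of $a$, so that Parseval applies with no cross terms between different $z$. I would handle this by writing $E_{(x|z)}=\lambda\,X^{x}Z^{z}$ with a fixed $|\lambda|=1$. This makes the computation immediate: $X^xZ^z|a\rangle=(-1)^{z\cdot a}|a+x\rangle$. The other point needing care is the reality of $j_s$, which follows from Hermiticity of $J$ and of each $E_w$. If that were in doubt, the fallback triangle-inequality bound is fully robust.
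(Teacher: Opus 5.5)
Your proposal is correct and follows the paper's proof essentially step by step. You identify the derivative with the $K_0$ matrix element, use Parseval over the $z$-part at fixed $x$-part $d$ (the paper calls this Walsh orthogonality), and then use the reality of $j_s$, which follows from Hermiticity of $J$, to get $|g_s|^2=h_s^2+j_s^2/4\ge h_s^2$. Like the paper, you obtain the stronger bound $|h_s|>\eps$ without ever using $s\notin\Delta_C$. Your triangle-inequality fallback via Lemma~\ref{lem:weak-tail} is also valid, but it is not needed.
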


\begin{proof}
We prove the $Z$-basis statement, and the $X$-basis statement follows a similar argument. 
Expanding $K_0=\sum_{x,z}g_{x,z}E_{(x|z)}$, for a computational-basis vector, we have
\begin{align}
E_{(x|z)}|a\rangle_Z=\theta_Z(x,z)(-1)^{z\cdot a}|a+x\rangle_Z
\end{align}
with $|\theta_Z(x,z)|=1$.
Hence, we obtain that
\begin{align}
(F_d^Z)'(0)_a=2^{-n/2}\sum _z g_{d,z}\theta_Z(d,z)(-1)^{z\cdot a}.
\end{align}
Walsh orthogonality then gives
\begin{equation}
\|(F_d^Z)'(0)\|_2^2=\sum_z|g_{d,z}|^2. 
\label{eq:projection-parseval}
\end{equation} 
Because $A=A^\dagger\succeq0$,
\begin{align}
J=\sum_{u,v\in S_D}A_{uv}E_vE_u=\sum_\alpha L_\alpha^*L_\alpha
\end{align}
is Hermitian and positive semidefinite.  
Its Pauli coefficient $j_s=2^{-n}\Tr(E_sJ)$ is therefore real.  
The coefficient of $E_s$ in $K_0=-iH-J/2$ is $g_s=-ih_s-j_s/2$.  
Since $h_s$ is also real, the Hamiltonian and dissipative parts lie in orthogonal imaginary and real directions:
\begin{align}
|g_s|^2=h_s^2+\frac14j_s^2\ge h_s^2.
\end{align}
The $s$-term in Eq.~\eqref{eq:projection-parseval} gives $\|(F_d^Z)'(0)\|_2\ge|h_s|>\eps$, which proves the stated bound. 
\end{proof}

\subsection{Projection from same-Pauli measurements}

The displacement experiment directly records the change of the basis label.  
A $Z$-basis shot draws $U\sim\pi_R$, sets $t=\tau U$, prepares a uniformly random computational-basis state $|a\rangle_Z$, evolves under $e^{t\Lop}$, measures again in the $Z$ basis, and records the displacement $d=a+b$. 
The $X$-basis experiment is defined in the same way.
Repeating these shots finds the $x$- and $z$-projections of every Hamiltonian label outside $\Delta_C$ whose coefficient is larger than $\eps$. 
This stage forms a candidate list as the result of this Lindbladian structure learning state, and the coefficient values are estimated in the next stage.

\begin{theorem}[Same-Pauli measurements find all large Hamiltonian projections outside the collision set] \label{thm:projection-stage}
Assume $S_{D,\xi}\subseteq C$. 
There is a control-free procedure with Pauli-product inputs and outputs using $\widetilde O(\Gamma^2/\eps^2)$ queries and total evolution time $\widetilde O(\Gamma/\eps^2)$ that outputs $\widehat\Pi_x,\widehat\Pi_z\subseteq\mathbb F_2^n$ such that, with probability at least $1-\delta$, every label $s\notin\Delta_C$ with $|h_s|>\eps$ satisfies
\begin{align}
x(s)\in\widehat\Pi_x\cup\{0\},\qquad
z(s)\in\widehat\Pi_z\cup\{0\}.
\end{align}
Each output list has size at most $\widetilde O(\Gamma^2/\eps^2)$, and the typical query time resolution is thus $O(\Gamma^{-1})$
\end{theorem}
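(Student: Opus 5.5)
The plan is to combine three results already established: the no-jump domination \eqref{eq:no-jump-domination}, the displacement-derivative lower bound of \Cref{lem:projection-strength}, and the vector observability statement \Cref{lem:vector-observability}.

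\textbf{Step 1: detection probability in terms of no-jump amplitudes.} Fix a label $s\notin\Delta_C$ with $|h_s|>\eps$, and suppose $d=x(s)\ne0$; the case $z(s)\ne0$ is handled in the $X$ basis in the same way. In a $Z$-basis shot at time $t$ with uniformly random input $|a\rangle_Z$, the no-jump domination gives
\begin{align}
\Prb[b=a+d\mid t]=2^{-n}\sum_a\langle a+d|e^{t\Lop}(|a\rangle\langle a|)|a+d\rangle\ge 2^{-n}\sum_a|\langle a+d|e^{tK_0}|a\rangle|^2=\|F_d^Z(t)\|_2^2 .
\end{align}

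\textbf{Step 2: average over the random time.} \Cref{lem:no-jump-derivative-bound} gives $\|(F_d^Z)^{(\ell)}(t)\|_2\le\Gamma^\ell$, and \Cref{lem:projection-strength} gives $\|(F_d^Z)'(0)\|_2\ge\eps/2$. These are exactly the hypotheses of \Cref{lem:vector-observability} with $\eta=\eps/2$ and $R=C_0\lceil\log(e+\Gamma/\eps)\rceil$. Averaging over $U\sim\pi_R$ therefore yields
\begin{align}
\Prb[\text{displacement } d \text{ observed in one shot}]\ge p_*:=\frac{c_0\eps^2}{\Gamma^2R^6}.
\end{align}

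\textbf{Step 3: sample count, failure probability, and list size.} With $N_2=\lceil C_0p_*^{-1}\log(4M_0/\delta)\rceil$ shots per basis, a fixed displacement $d$ is missed with probability at most $(1-p_*)^{N_2}\le\delta/(4M_0)$. The distinct displacements $x(s)$ over $s\in S_H$ number at most $M_0$, and the same holds for the $z(s)$. A union bound over these at most $2M_0$ targets in the two bases gives overall failure probability at most $\delta$.

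Collecting every nonzero displacement into $\widehat\Pi_x$ (respectively $\widehat\Pi_z$) then gives the claimed covering property. Each list has size at most $N_2=\widetilde O(\Gamma^2/\eps^2)$, since every shot adds at most one element. The total number of queries is $2N_2$. Each shot uses evolution time at most $\tau=1/(2\Gamma)$, so the total evolution time is $\widetilde O(\Gamma/\eps^2)$ and the time resolution is $O(\Gamma^{-1})$.

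\textbf{Main obstacle.} The delicate point is the union bound in Step 3. It requires bounding the number of relevant projections by $M_0$ rather than by the number of labels outside $\Delta_C$, and this works because only labels in $S_H$ can carry $|h_s|>\eps$. A second check is that \Cref{lem:projection-strength} extracts the $\eps$ lower bound from the $s$-term of \eqref{eq:projection-parseval}. That step uses the real/imaginary orthogonality of the Hamiltonian and $J$ contributions, and it does not need the smallness of $\beta_C(s)$. I would verify that the Parseval identity correctly assigns the phase $\theta_Z$ so that this term is not cancelled by other terms.
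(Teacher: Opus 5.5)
Your proposal is correct and follows the paper's proof essentially step by step. No-jump domination lower-bounds the displacement probability by $\|F_d^Z(t)\|_2^2$; the derivative bounds together with $\|(F_d^Z)'(0)\|_2\ge\eps/2$ feed the vector observability lemma to give $p_*=c\eps^2/(\Gamma^2R^6)$; and a union bound over at most $M_0$ target projections per basis finishes the argument. The paper allots $\delta/2$ per basis with $\log(2M_0/\delta)$, which is equivalent to your $\log(4M_0/\delta)$. Your side remark is also right: the phase $\theta_Z(d,z)$ does not depend on $a$, so the Parseval identity holds, and the real/imaginary orthogonality gives $\|(F_d^Z)'(0)\|_2\ge|h_s|$ without using $s\notin\Delta_C$.
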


\begin{proof}
Fix $s\notin\Delta_C$ with $|h_s|>\eps$ and $d=x(s)\ne0$. 
Conditional on the initial string $a$ and time $t$, no-jump domination gives
\begin{align}
\begin{aligned}
\hspace{-3.7mm}\Tr(|a+d\rangle\langle a+d|_Ze^{t\Lop}|a\rangle\langle a|_Z)\ge|\langle a+d|e^{tK_0}|a\rangle_Z|^2.
\end{aligned} 
\end{align}
Averaging over $a$ yields
\begin{align}
\Pr[D=d\mid t]\ge\|F_d^Z(t)\|_2^2.
\end{align}
\Cref{lem:no-jump-derivative-bound} supplies $\|(F_d^Z)^{(m)}(t)\|_2\le\Gamma^m$, and \Cref{lem:projection-strength} gives $\|(F_d^Z)'(0)\|_2\ge\eps/2$. 
Applying \Cref{lem:vector-observability} with parameter $\eps/2$ gives
\begin{align}
\Pr[D=d]\ge \frac{c\eps^2}{\Gamma^2R^6} =:p_*,\ R=O\!\left(\log(e+\Gamma/\eps)\right).
\end{align}
There are at most $M_H\le M_0$ target labels. 
Assuming $M_0\geq 1$, $C p_*^{-1}\log(2(M_0)/\delta)$ independent $Z$-basis shots contain every required nonzero $x$-displacement with failure probability at most $\delta/2$. 
The same argument in the $X$ basis handles the $z$-displacements. 
The claimed complexity and output-size bounds then follow.
\end{proof}

Conditioned on the projection to succeed, we define
\begin{align}
\widehat U_2:=\bigl((\widehat\Pi_x\cup\{0\})\times(\widehat\Pi_z\cup\{0\})\bigr)\setminus\{0\},\ 
\widehat U_3:=\widehat U_2\setminus\Delta_C.
\end{align}
Then, we have
\begin{equation}
\{s\notin\Delta_C:|h_s|>\eps\}\subseteq\widehat U_3. \label{eq:strong-H-in-U3}
\end{equation}
The structure-learning stage therefore produces the three classical objects
\begin{align}
 C,\quad \Delta_C,\quad \widehat U_3,
\end{align}
where the set $C$ localizes the dissipator, the set $\Delta_C$ marks the labels where Hamiltonian and dissipative one-sided effects may overlap, and the set $\widehat U_3$ contains every Hamiltonian label outside $\Delta_C$ whose coefficient is larger than $\eps$.


\section{Lindbladian coefficient learning}
\label{sec:lindbladian-coefficient-learning}

The structure-learning stage has reduced the problem to estimating the Lindbladian coefficients within the finite candidate sets.  
Outside $\Delta_C$, a cross-Pauli measurement isolates the Hamiltonian coefficient up to a small dissipative bias. 
Inside $\Delta_C$, the one-sided Hamiltonian and dissipative terms share the same labels, so we separate them through Pauli-Liouville inversion.
The approaches are summarized in the following table.

\begin{center}
\begin{tabular}{c|c|c}
coefficient & where & method\\ \hline
$h_s$ & $s\notin\Delta_C$ & linear response on $\widehat U_3$\\
$h_s$ & $s\in\Delta_C$ & Pauli-Liouville cancellation\\
$A_{uv}$ & $u,v\in C$ & fixed-sum Pauli-Liouville inversion
\end{tabular}
\end{center}

\subsection{Linear response outside the collision set}

The projection stage may return a strict superset of the Hamiltonian support.
We apply linear response only outside $\Delta_C$, where \Cref{lem:weak-tail} controls the remaining
dissipative contribution.

\paragraph{Frames and visible labels.}To estimate many Hamiltonian candidates with the same shots, we randomize the local preparation and measurement basis. 
We call the frame as a Pauli string $c=(c_1,\ldots,c_n)\in\{X,Y,Z\}^n$.
For every site choose the cyclic order $(c_i,a_i,b_i)$ among $(X,Y,Z)$, $(Y,Z,X)$, and $(Z,X,Y)$.
For a label $s$, define two labels $q_{c,s}$ and $o_{c,s}$ site by site as follows to ensure that $q_{c,s}+o_{c,s}=s$ in the symplectic form:
\begin{align}
\begin{array}{c|c|c}
\text{symplectic }s_i & \text{symplectic }(q_{c,s})_i & \text{symplectic }(o_{c,s})_i\\
\hline
I & I & I\\
a_i & a_i & I\\
b_i & I & b_i\\
c_i & a_i & b_i
\end{array}
\end{align}
Then $E_{q_{c,s}}$ is the product of the $a_i$ axes on the sites where $q_{c,s}$ is nonidentity, and $E_{o_{c,s}}$ is the product of the $b_i$ axes on the sites where $o_{c,s}$ is nonidentity. 
A label $s$ is visible in frame $c$ if the number of sites with local $c_i$ is odd. 
With the sitewise definition above, this is exactly the condition $[s,q_{c,s}]=1$, so the Hamiltonian commutator with $E_{q_{c,s}}$ is nonzero. 
When $s$ is not visible, we set $\sigma_c(s)=1$ by convention.

A label is useful in a frame only when its commutator with the prepared Pauli is nonzero, which is called visibility throughout this work.  
For a uniformly random frame, visibility is the parity of independent local matches, so every nonzero label is visible with constant probability as Ref.~\cite[Proposition C.6]{zhou2026optimal}.

\begin{lemma}[A random frame sees every nonzero Pauli label with constant probability, Proposition 6, Ref.~\cite{zhou2026optimal}]
For a uniformly random frame $C\in\{X,Y,Z\}^n$ and any nonzero Pauli label $s$,
\begin{align}
q(s):=\Pr[s\text{ visible in }C]=\frac{1-(1/3)^{\wt(s)}}{2}\ge \frac{1}{3}.
\end{align}
\end{lemma}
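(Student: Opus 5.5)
The plan is to reduce visibility to the parity of independent Bernoulli indicators and then apply the standard parity formula $\Pr[\text{odd}]=\tfrac12\bigl(1-\E[(-1)^{\text{count}}]\bigr)$.

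\emph{Reduction to a site-wise count.} I would first recall why visibility is a local count. By the sitewise table defining $q_{c,s}$:
\begin{itemize}
\item on sites with $s_i=c_i$ one has $(q_{c,s})_i=a_i$, which anticommutes with $c_i$;
\item on sites with $s_i=a_i$ one has $(q_{c,s})_i=a_i$, which commutes with $s_i$;
\item on sites with $s_i\in\{I,b_i\}$ one has $(q_{c,s})_i=I$, which commutes with everything.
\end{itemize}
Hence
\[
[s,q_{c,s}]\equiv \#\{i: s_i=c_i\}\pmod 2,
\]
consistent with the definition of visibility stated just before the lemma. It therefore suffices to compute the probability that $N:=\#\{i:s_i=c_i\}$ is odd when $C$ is uniform on $\{X,Y,Z\}^n$.

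\emph{Computing the parity.} On identity sites of $s$ the event $s_i=C_i$ never occurs, since $C_i\in\{X,Y,Z\}$. On each of the $w:=\wt(s)$ non-identity sites, the indicator $\one\{C_i=s_i\}$ is Bernoulli with parameter $1/3$. These indicators are independent because the coordinates of $C$ are independent. Then
\[
\E[(-1)^N]=\prod_{i\in\supp s}\E\bigl[(-1)^{\one\{C_i=s_i\}}\bigr]=\Bigl(1-\tfrac23\Bigr)^{w}=(1/3)^{w},
\]
so
\[
q(s)=\Pr[N\text{ odd}]=\frac{1-\E[(-1)^N]}{2}=\frac{1-(1/3)^{w}}{2}.
\]

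\emph{The lower bound.} Since $s\neq0$ we have $w\ge1$, and the expression above is increasing in $w$. Its minimum is therefore attained at $w=1$, where it equals $(1-1/3)/2=1/3$.

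There is no real obstacle here. The only step needing care is the sitewise commutation check in the first paragraph, which ties the combinatorial definition of visibility to $[s,q_{c,s}]=1$. That check is a direct inspection of the three nontrivial rows of the table.
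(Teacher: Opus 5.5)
Your proof is correct and takes essentially the same route the paper indicates; the paper gives only a one-line sketch and defers to the cited reference. In both, visibility is the parity of the independent per-site matches $\one\{C_i=s_i\}$, each Bernoulli$(1/3)$ on the support of $s$, and the parity identity gives $q(s)=\tfrac12\bigl(1-(1/3)^{\wt(s)}\bigr)\ge 1/3$; your sitewise check that this count equals $[s,q_{c,s}]$ modulo $2$ is also right.
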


\paragraph{Dissipative bias outside the collision set.}The linear response contains both the desired Hamiltonian term and a dissipative term.  
Outside $\Delta_C$, every dissipative contribution with the same input-output label contains a non-heavy row.
The PSD row bound therefore makes the total bias small.

\begin{lemma}[The dissipative trace bias is small outside the collision set]\label{lem:dissipative-trace-bound}
Let $s\ne0$ and $q,o\in\V$ satisfy $o+q=s$. 
Then we have
\begin{equation}
\left|2^{-n}\Tr\bigl(E_o\Dop(E_q)\bigr)\right|\le 2\beta_C(s). 
\label{eq:dissipative-trace-bound}
\end{equation} 
Consequently, if $s\notin\Delta_C$ and $S_{D,\xi}\subseteq C$, the right-hand side is at most $\eps/50$.
\end{lemma}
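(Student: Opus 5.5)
The plan is to expand $\Dop$ term by term and estimate each Pauli trace directly. Here $\Dop(X)=\sum_{u,v\in S_D}A_{uv}\bigl(E_uXE_v-\tfrac12\{E_vE_u,X\}\bigr)$ is the dissipative part. By linearity,
\begin{align}
2^{-n}\Tr\bigl(E_o\Dop(E_q)\bigr)=\sum_{u,v\in S_D}A_{uv}\,T_{uv},
\end{align}
where
\begin{align}
T_{uv}:=2^{-n}\Tr\Bigl(E_oE_uE_qE_v-\tfrac12E_oE_vE_uE_q-\tfrac12E_oE_qE_vE_u\Bigr).
\end{align}

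First I will show that $T_{uv}$ vanishes unless $u+v=s$. Each of the three products is, up to a unimodular phase $\omega(\cdot,\cdot)$, the Pauli string with label $o+u+q+v=s+u+v$. By orthogonality $\Tr(E_aE_b)=2^n\one\{a=b\}$, only the identity label has nonzero normalized trace, and there it equals $1$ in modulus. So each product contributes modulus $\one\{u+v=s\}$, and the triangle inequality gives $|T_{uv}|\le(1+\tfrac12+\tfrac12)\one\{u+v=s\}=2\cdot\one\{u+v=s\}$.

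Summing over pairs then gives
\begin{align}
\bigl|2^{-n}\Tr\bigl(E_o\Dop(E_q)\bigr)\bigr|\le2\sum_{\substack{u+v=s\\u,v\in S_D}}|A_{uv}|=2\beta_C(s),
\end{align}
which is Eq.~\eqref{eq:dissipative-trace-bound}. For the consequence, if $s\notin\Delta_C$ and $S_{D,\xi}\subseteq C$, \Cref{lem:weak-tail} gives $\beta_C(s)\le\eps/100$ once $c_\xi$ is small enough. The right-hand side is then at most $\eps/50$.

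There is essentially no obstacle. The one point to verify carefully is that all three operator products in $T_{uv}$ carry the same label $s+u+v$, regardless of how the factors are ordered. This holds because labels add in $\mathbb F_2^{2n}$ and reordering only changes phases. Since we bound only absolute values, the phases never need to be tracked.
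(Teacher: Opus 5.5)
Your proposal is correct and follows the paper's proof essentially step for step. All three products in each $(u,v)$ summand carry label $s+u+v$, so Pauli orthogonality kills every term with $u+v\ne s$. Each surviving normalized trace has modulus one, which gives the factor $1+\tfrac12+\tfrac12=2$. Then \Cref{lem:weak-tail} turns $2\beta_C(s)$ into the bound $\eps/50$.
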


\begin{proof}
Pauli orthogonality eliminates every summand whose Pauli label differs from $o$. 
For the jump term $E_uE_qE_v$ and both anticommutator terms, a nonzero trace against $E_o$ requires $u+v=s$.
Each normalized Pauli trace has modulus at most one. 
The absolute contribution of the summand indexed by $(u,v)$ is therefore at most
\begin{align}
|A_{uv}|\left(1+\frac{1}{2}+\frac{1}{2}\right) = 2|A_{uv}|.
\end{align}
Summing over $u+v=s$ gives Eq.~\eqref{eq:dissipative-trace-bound}, and then \Cref{lem:weak-tail} gives the
final assertion.
\end{proof}

\begin{lemma}[A visible frame isolates one Hamiltonian coefficient up to small bias]\label{lem:linear-response-bias}
Let $s\ne0$ be visible in a frame $c$. Then
\begin{equation}
2^{-n}\Tr\bigl(E_{o_{c,s}}\Lop(E_{q_{c,s}})\bigr)=2\sigma_c(s)h_s +b_{c,s}, 
\label{eq:linear-response}
\end{equation}
where we have $\sigma_c(s)\in\{\pm1\}$ and $|b_{c,s}|\le2\beta_C(s)$.
For $s\notin\Delta_C$ on the event $S_{D,\xi}\subseteq C$, one has $|b_{c,s}|\le\eps/50$.
\end{lemma}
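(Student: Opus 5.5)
The plan is to split $\Lop=\mathcal H+\Dop$, where $\mathcal H(X)=-i[H,X]$ and $\Dop$ is the dissipator. The dissipative piece is already handled by \Cref{lem:dissipative-trace-bound}. Applied with $q=q_{c,s}$ and $o=o_{c,s}$ (so $o+q=s$), it gives
\begin{align}
b_{c,s}:=2^{-n}\Tr\bigl(E_{o_{c,s}}\Dop(E_{q_{c,s}})\bigr),\qquad |b_{c,s}|\le 2\beta_C(s).
\end{align}
When $s\notin\Delta_C$ and $S_{D,\xi}\subseteq C$, the same lemma bounds this by $\eps/50$ via \Cref{lem:weak-tail}. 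So the real work is the Hamiltonian term.

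For the Hamiltonian part, write $q=q_{c,s}$ and $o=o_{c,s}$. Then
\begin{align}
2^{-n}\Tr\bigl(E_o(-i)[H,E_q]\bigr)=-i\sum_{r}h_r\,2^{-n}\Tr\bigl(E_o[E_r,E_q]\bigr).
\end{align}
The commutator $[E_r,E_q]$ vanishes unless $[r,q]=1$, in which case it equals $2E_rE_q=2\omega(r,q)E_{r+q}$. Pauli orthogonality against $E_o$ then forces $r+q=o$, i.e.\ $r=o+q=s$. Only the single term $r=s$ survives, and it survives exactly because visibility gives $[s,q_{c,s}]=1$, as noted in the frame definition. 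The contribution is therefore $-2i\,\omega(s,q)\,h_s\,2^{-n}\Tr(E_oE_{o})\cdot(\text{phase})$, which is $-2i\,\omega(s,q)h_s$ up to the fixed Hermitian phase conventions. I would define $\sigma_c(s):=-i\,\omega(s,q_{c,s})$.

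It remains to show $\sigma_c(s)\in\{\pm1\}$, and I expect this to be the main step needing care. Since $E_s$ and $E_q$ are Hermitian and anticommute, $E_sE_q$ is anti-Hermitian. Hence $\omega(s,q)E_o$ is anti-Hermitian with $E_o$ Hermitian, so $\omega(s,q)\in\{\pm i\}$ and $-i\omega(s,q)\in\{\pm1\}$. The sign is computable from $c$ and $s$ because $q_{c,s},o_{c,s}$ are determined sitewise by the table. Equivalently, one can compute it site by site as a product of local phases: the $c_i$ sites each contribute $c_i a_i=\pm i\, b_i$ for the cyclic order, and the parity condition makes the total phase an odd power of $i$.

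Summing the Hamiltonian and dissipative contributions, by linearity of the trace, gives Eq.~\eqref{eq:linear-response} together with both bias bounds.
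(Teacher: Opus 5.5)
Your proposal is correct and follows the paper's proof essentially line by line. You split off the dissipator via the dissipative-trace lemma, use Pauli orthogonality together with visibility $[s,q_{c,s}]=1$ to isolate the single term $r=s$, and set $\sigma_c(s)=-i\,\omega(s,q_{c,s})\in\{\pm1\}$. Your anti-Hermiticity argument for $\omega(s,q_{c,s})\in\{\pm i\}$ is a useful justification of a step the paper only asserts. The hedge ``up to the fixed Hermitian phase conventions'' is unnecessary: since $E_sE_{q_{c,s}}=\omega(s,q_{c,s})E_{o_{c,s}}$ and $2^{-n}\Tr(E_{o_{c,s}}^2)=1$, the Hamiltonian contribution is exactly $-2i\,\omega(s,q_{c,s})h_s$.
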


\begin{proof}
Pauli orthogonality shows that the Hamiltonian trace vanishes for every label $r\ne s$. 
Visibility gives $[s,q_{c,s}]=1$, so $[E_s,E_{q_{c,s}}]=2E_sE_{q_{c,s}}$. 
Since $s+q_{c,s}=o_{c,s}$, there is a phase $\mu_c(s)\in\{\pm i\}$ such that $E_sE_{q_{c,s}}=\mu_c(s)E_{o_{c,s}}$. 
Thus, we have
\begin{align}
\begin{aligned}
&2^{-n}\Tr\left(E_{o_{c,s}}(-ih_s[E_s,E_{q_{c,s}}])\right)=-2i\mu_c(s)h_s\\
&\qquad=2\sigma_c(s)h_s,
\end{aligned}
\end{align}
where $\sigma_c(s):=-i\mu_c(s)\in\{\pm1\}$. 
The dissipative contribution is $b_{c,s}$, and \Cref{lem:dissipative-trace-bound} gives the claimed bound.
\end{proof}

\paragraph{Linear response estimators.}We note that one shot can be reused for many candidates. 
The shot prepares all $a_i$-axis eigenstates and measures all $b_i$ axes.  
Its local outcomes determine all parities $\prod_{i\in A}s_i\prod_{i\in B}m_i$ for all subsets $A,B$. 
Hence it serves every visible candidate $s$ at once. 
The samples for different $s$ inside the same shot are correlated, but concentration is applied for each fixed $s$ across independent shots, as the union bound does not require independence across different $s$.

For $s\in \mathcal U\subseteq\V\setminus(\Delta_C\cup\{0\})$, we define
\begin{align}
V_s:=\frac{\one\{s\text{ visible in }C\}}{2\sigma_C(s)q(s)}\frac{B_R}{\tau}\operatorname{sgn}(k_R(U))Z_s,
\end{align}
where $Z_s\in\{\pm1\}$ is the parity outcome for the trace experiment $E_{q_{C,s}}\to E_{o_{C,s}}$ at time $\tau U$.

For every visible candidate, the cross-Pauli trace response is linear in $h_s$ and has only a small bias bounded above. 
The same random-frame shot supplies a bounded parity sample for every visible candidate.  
We then apply concentration separately to each label.

\begin{theorem}[Random frames estimate all candidate Hamiltonian coefficients]\label{thm:linear-response}
Assume $S_{D,\xi}\subseteq C$. 
Given a finite $\mathcal U\subseteq\V\setminus(\Delta_C\cup\{0\})$ and $|\mathcal U|\geq 2$, the linear stage estimates every $h_s$, $s\in \mathcal U$, to error at most $\eps/4$ with probability at least $1-\delta$ using 
\begin{align}
\widetilde O\left(\frac{\Gamma^2}{\eps^2}\log \frac{|\mathcal U|}{\delta} \right)\quad\text{and}\quad
\widetilde O\!\left(\frac{\Gamma}{\eps^2}\log\frac{|\mathcal U|}{\delta}\right)
\end{align}
queries and total evolution time.
\end{theorem}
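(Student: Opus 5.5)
The plan is to show that, for each fixed $s\in\mathcal U$, the random variable $V_s$ is a bounded, nearly unbiased estimator of $h_s$. Concentration is then applied label by label, followed by a union bound over $\mathcal U$.

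\textbf{Mean of $V_s$.} First condition on the frame $C=c$ and the time $U$. On the visibility event, the parity $Z_s$ is exactly the trace-rule sample for the pair $E_{q_{c,s}}\to E_{o_{c,s}}$. Because the preparation is a product eigenstate of the $a_i$ axes with uniformly random signs, the signs on sites outside $\supp q_{c,s}$ are simply marginalized out. The trace-rule proposition therefore gives
\begin{align}
\E[Z_s\mid c,U]=f_{c,s}(\tau U),\qquad f_{c,s}(t):=2^{-n}\Tr\bigl(E_{o_{c,s}}e^{t\Lop}(E_{q_{c,s}})\bigr).
\end{align}
Next, average over $U\sim\pi_R$. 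Lemma~\ref{lem:response-derivative-bound} shows that $f_{c,s}$ satisfies the derivative hypotheses of Lemma~\ref{thm:one-sided-derivative}, so this average equals $D_Rf_{c,s}$, which lies within $C\Gamma R^32^{-R}$ of $f_{c,s}'(0)=2^{-n}\Tr(E_{o_{c,s}}\Lop(E_{q_{c,s}}))$. By Lemma~\ref{lem:linear-response-bias}, this derivative equals $2\sigma_c(s)h_s+b_{c,s}$ with $|b_{c,s}|\le\eps/50$, using $s\notin\Delta_C$ and the event $S_{D,\xi}\subseteq C$.

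Dividing by $2\sigma_c(s)q(s)$ and averaging over the frame, the visibility indicator contributes a factor $q(s)$. This gives
\begin{align}
|\E V_s-h_s|\le \frac{1}{2q(s)}\Bigl(\frac{\eps}{50}+C\Gamma R^32^{-R}\Bigr)\le \frac{3}{2}\Bigl(\frac{\eps}{50}+C\Gamma R^32^{-R}\Bigr),
\end{align}
where the last step uses $q(s)\ge1/3$. Choosing $R=C_0\lceil\log(e+\Gamma/\eps)\rceil$ with $C_0$ large makes this bias at most $\eps/8$.

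\textbf{Concentration and complexity.} Each sample is bounded: $|V_s|\le \frac{3}{2}\cdot\frac{B_R}{\tau}\le C\Gamma R^3$. Apply Hoeffding's inequality to $N$ i.i.d.\ shots for each fixed $s$, with target deviation $\eps/8$ and failure probability $\delta/|\mathcal U|$. This requires $N=C\Gamma^2R^6\eps^{-2}\log(|\mathcal U|/\delta)$ shots. The variables $V_s$ for different $s$ within one shot are correlated, but the union bound over $\mathcal U$ needs no independence. Each shot's evolution time is at most $\tau=1/(2\Gamma)$, so the total evolution time is $N\tau=\widetilde O(\Gamma\eps^{-2}\log(|\mathcal U|/\delta))$. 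Since $R^6$ is polylogarithmic, it is absorbed into the $\widetilde O$, and the final error is at most $\eps/8+\eps/8=\eps/4$.

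\textbf{Main obstacle.} The delicate step is confirming that $Z_s$ is exactly the trace-rule sample for the pair $(q_{c,s},o_{c,s})$. This requires three checks. First, the local eigenvalue signs $s_i$ on $\supp q_{c,s}$ must reproduce the eigenvalue of $E_{q_{c,s}}$, with the sign convention of the Hermitian Pauli fixed by the $a_i$ axes. Second, the outcome parity on $\supp o_{c,s}$ must be the measured value of $E_{o_{c,s}}$. Third, the random signs on the remaining sites must drop out in expectation. Any residual phase, in particular $\sigma_c(s)$, is absorbed into the deterministic sign, which is computable from $c$ and $s$ as in Lemma~\ref{lem:linear-response-bias}.
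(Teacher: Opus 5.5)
Your proposal is correct and follows the paper's proof step for step: you condition on the frame and kernel time, combine Lemma~\ref{lem:linear-response-bias} with the kernel bias, let the visibility indicator cancel the $1/q(s)$ normalization, and finish with a per-label Hoeffding bound plus a union bound over $\mathcal U$. The paper simply asserts the parity-as-trace-rule identification that you flag as the main obstacle, and it is indeed routine since sites off the support average to $I/2$; also, once the indicator contributes its factor $q(s)$ the bias prefactor is $\tfrac12$ rather than your $\tfrac{1}{2q(s)}\le\tfrac32$, though your looser bound is still valid and suffices.
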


\begin{proof}
Fix $s\in \mathcal U$. 
For every frame $c$ in which $s$ is visible, let
\begin{align}
f_{c,s}(t):=2^{-n}\Tr\!\left(E_{o_{c,s}}e^{t\Lop}(E_{q_{c,s}})\right).
\end{align}
Lemma~\ref{lem:response-derivative-bound} supplies the derivative bounds required by \Cref{thm:one-sided-derivative}, while \Cref{lem:linear-response-bias} gives
\begin{align}
f'_{c,s}(0)=2\sigma_c(s)h_s +b_{c,s},\qquad |b_{c,s}|\le\eps/50.
\end{align}
Choose $R=C\lceil\log(e+\Gamma/\eps)\rceil$ so that $|D_Rf_{c,s} -f'_{c,s}(0) |\le \eps /24$.
Conditioning on the random frame and kernel time gives
\begin{align}
\begin{aligned}
\E[V_s]-h_s&=\frac{1}{q(s)}\E_C\!\left[\one\{s\text{ visible in }C\}\right.\\
&\qquad\left.\times\frac{b_{C,s}+D_Rf_{C,s}-f'_{C,s}(0)}{2\sigma_C(s)}\right].
\end{aligned}
\end{align}
Since $q(s)\ge1/3$, the visibility normalization cancels after averaging, and
\begin{align}
|\E[V_s]-h_s|\le\eps/100+\eps/48<\eps/24.
\end{align} 
Moreover, we have
\begin{align}
|V_s|\le \frac{B_R}{2q(s)\tau}\le C\Gamma R^3.
\end{align} 
Hoeffding's inequality and a union bound over $\mathcal U$ give sampling error at most $\eps/8$ for every
candidate when
\begin{align}
N\ge C\frac{\Gamma^2R^6}{\eps^2}\log\frac{|\mathcal U|}{\delta}.
\end{align}
The total error is smaller than $\eps/4$. 
Each shot has duration at most $\tau=O(1/\Gamma)$, yielding the stated total-time bound.
\end{proof}

\noindent Run \Cref{thm:linear-response} with $\mathcal U=\widehat U_3$ and denote the resulting estimates by
$\widehat h_s^{\rm lin}$.

\subsection{How Pauli-Liouville coefficients contain \texorpdfstring{$A_{uv}$}{Auv} and \texorpdfstring{$h_s$}{hs}}

The remaining coefficients can be separated after treating the generator itself as a linear map.  
We expand this map in the two-sided Pauli basis $X\mapsto E_aXE_b$.  
In this representation, $A_{uv}$ appears directly as one coordinate $R_{u,v}$.  
The Hamiltonian coefficient $h_s$ is the difference of two one-sided coordinates, and the dissipative anticommutator terms cancel in that difference.

We denote two-sided Pauli superoperators $T_{a,b}(X):=E_aXE_b$ for simplicity.
A Pauli-Liouville coefficient means the coefficient of a linear map in this superoperator basis. 
The maps $T_{a,b}$ form a basis of superoperators, so there are unique coefficients $R_{a,b}$ such that
\begin{equation}
\Lop(X)=\sum_{a,b\in\V}R_{a,b}E_aXE_b. 
\label{eq:PL-expansion}
\end{equation}
Comparing Eq.~\eqref{eq:model-L} with Eq.~\eqref{eq:PL-expansion} gives, for $u,v\ne0$, we have $A_{u,v}=R_{u,v}$.
To see the one-sided coefficients, expand the three parts of the generator in the basis $T_{a,b}(X)=E_aXE_b$. 
The jump term is already two-sided as $A_{uv}E_uXE_v=A_{uv}T_{u,v}(X)$.
The anticommutator term becomes one-sided after multiplying the two Pauli strings $E_vE_u=\omega(v,u)E_{u+v}$.
Thus a pair $u,v$ with $u+v=s$ contributes
\begin{align}
-\frac{1}{2} A_{uv}\omega(v,u)E_sX\quad\text{and}\quad-\frac{1}{2} A_{uv}\omega(v,u)XE_s
\end{align}
to $R_{s,0}$ and $R_{0,s}$. 
The Hamiltonian commutator contributes
\begin{align}
-ih_s[E_s,X]=-ih_sE_sX+ih_sXE_s.
\end{align}
Adding these contributions gives, for $s\ne0$, we have
\begin{align}
\begin{split}
&R_{s,0}= -ih_s-\frac{1}{2}\sum_{u+v=s}A_{uv}\omega(v,u),\\
&R_{0,s}= +ih_s-\frac{1}{2}\sum_{u+v=s}A_{uv}\omega(v,u),
\end{split}
\end{align}
with $h_s=\frac{R_{0,s}-R_{s,0}}{2i}$.

To estimate these coordinates with product-Pauli experiments, group them by the sum label $s=a+b$. 
For a fixed $s$, define the trace response
\begin{align}
g_s(q):=2^{-n}\Tr(E_{q+s}\Lop(E_q)).
\end{align}
For $a+b=s$, define the phase
\begin{align}
\gamma_{a,b}(q):=2^{-n}\Tr(E_{q+s}E_aE_qE_b)\in\{\pm1,\pm i\}.
\end{align}
All coefficients with $a+b=s$ contribute to the same family of trace responses $E_q\mapsto E_{q+s}$.  
Their phases form a Walsh system.  
Averaging against the matching phase therefore selects one coefficient from this fixed-sum block.

\begin{lemma}[Walsh averaging separates a fixed-sum Pauli-Liouville block]
For fixed $s$, we have
\begin{equation}
g_s(q)=\sum_{a+b=s}R_{a,b}\gamma_{a,b}(q). 
 \label{eq:sum-system}
\end{equation}
Moreover, we have
\begin{equation}
4^{-n}\sum_{q\in\V}
 \overline{\gamma_{a,a+s}(q)}\gamma_{a',a'+s}(q)
 =\one\{a=a'\}. \label{eq:gamma-orthog}
\end{equation}
Threfore, we obtain
\begin{equation}
R_{a,b}=4^{-n}\sum_{q\in\V}\overline{\gamma_{a,b}(q)}g_{a+b}(q). 
\label{eq:R-inversion}
\end{equation}
\end{lemma}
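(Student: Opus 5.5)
The plan is a direct computation in the Pauli basis, with the two-sided structure reduced to Walsh characters. For Eq.~\eqref{eq:sum-system}, I will substitute the expansion Eq.~\eqref{eq:PL-expansion} with $X=E_q$ into $g_s(q)$. Each summand $E_aE_qE_b$ is, up to a unimodular phase, the Pauli string with label $a+q+b$. Pauli orthogonality $\Tr(E_cE_d)=2^n\one\{c=d\}$ then kills every term unless $a+q+b=q+s$, that is, $a+b=s$. The surviving terms are exactly $R_{a,b}\,2^{-n}\Tr(E_{q+s}E_aE_qE_b)=R_{a,b}\gamma_{a,b}(q)$. The same observation shows $\gamma_{a,b}(q)\in\{\pm1,\pm i\}$.

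The key step is to factor $\gamma_{a,a+s}(q)$ so that its $a$-dependence is a symplectic character in $q$. First, I will commute $E_q$ past $E_a$ using $E_aE_q=(-1)^{[a,q]}E_qE_a$. Next, I will merge $E_aE_{a+s}=\omega(a,a+s)E_s$. Together these give
\begin{align}
\gamma_{a,a+s}(q)=\chi_a(q)\,\omega(a,a+s)\,c_s(q),\qquad c_s(q):=2^{-n}\Tr(E_{q+s}E_qE_s).
\end{align}
Here $c_s(q)$ is unimodular and independent of $a$. Then
\begin{align}
\overline{\gamma_{a,a+s}(q)}\gamma_{a',a'+s}(q)=\overline{\omega(a,a+s)}\,\omega(a',a'+s)\,|c_s(q)|^2\chi_a(q)\chi_{a'}(q),
\end{align}
where $\chi$ is real, $|c_s(q)|^2=1$, and the $\omega$-prefactor does not depend on $q$. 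Averaging over $q\in\V$ and applying the Walsh orthogonality Eq.~\eqref{eq:walsh-orthog} yields $\one\{a=a'\}$ times the prefactor. When $a=a'$ the prefactor equals $|\omega(a,a+s)|^2=1$, which proves Eq.~\eqref{eq:gamma-orthog}.

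Finally, for Eq.~\eqref{eq:R-inversion} with $s=a+b$, I will multiply Eq.~\eqref{eq:sum-system} by $\overline{\gamma_{a,b}(q)}$ and average over $q$. Parametrizing the block as $a'\mapsto(a',a'+s)$, the orthogonality relation leaves only $R_{a,b}$. The only delicate point is the phase bookkeeping in the factorization: I need to check that no hidden $q$-dependence remains in the $a$-dependent phase. Since $\omega(a,a+s)$ depends only on $a$ and $s$, this check goes through.
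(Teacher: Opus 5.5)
Your proposal is correct and follows the paper's proof essentially step for step. You use Pauli orthogonality to reduce Eq.~\eqref{eq:sum-system} to the terms with $a+b=s$, then the same factorization $\gamma_{a,a+s}(q)=\chi_a(q)\,\omega(a,a+s)\,\omega(q,s)$ (your $c_s(q)$ is the paper's $\omega(q,s)$ and your $\omega(a,a+s)$ is its $c_{a,s}$), and finally Walsh orthogonality together with projection onto the fixed-sum block.
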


\begin{proof}
Substitute the Pauli-Liouville expansion Eq.~\eqref{eq:PL-expansion} into the definition of $g_s$:
\begin{align}
\begin{aligned}
g_s(q)&=2^{-n}\Tr\left(E_{q+s}\sum_{a,b}R_{a,b}E_aE_qE_b\right) \\
&=\sum_{a,b}R_{a,b}\,2^{-n}\Tr(E_{q+s}E_aE_qE_b).
\end{aligned}
\end{align}
The trace of a product of Pauli strings is zero unless the total Pauli label is zero. 
As the total label in the trace is $(q+s)+a+q+b=s+a+b$, only terms with $a+b=s$ survive, and the surviving normalized trace is precisely $\gamma_{a,b}(q)$. 
This proves Eq.~\eqref{eq:sum-system}.

It remains to prove orthogonality. 
Fix the sum label $s$ and write $\gamma_a(q):=\gamma_{a,a+s}(q)$.
Because $a+(a+s)=s$, the product inside the trace defining $\gamma_a(q)$ has total label zero, so $\gamma_a(q)$ is a unit phase. 
We isolate its dependence on $q$. 
Using the commutation character $\chi_a(q)=(-1)^{[a,q]}$, we may move $E_a$ past $E_q$ as $E_aE_q=\chi_a(q)E_qE_a$.
Therefore, we have
\begin{align}
E_aE_qE_{a+s}=\chi_a(q)E_q(E_aE_{a+s}).
\end{align}
The factor $E_aE_{a+s}$ is a fixed unit phase times $E_s$. We write $E_aE_{a+s}=c_{a,s}E_s$ with $|c_{a,s}|=1$.
Also as $E_qE_s=\omega(q,s)E_{q+s}$, we have
\begin{align}
\begin{aligned}
E_{q+s}E_aE_qE_{a+s}&=c_{a,s}\chi_a(q)E_{q+s}E_qE_s \\
&=c_{a,s}\chi_a(q)\omega(q,s)E_{q+s}^2 \\
&=c_{a,s}\chi_a(q)\omega(q,s)I.
\end{aligned}
\end{align}
Therefore
\begin{align} \gamma_a(q)=c_{a,s}\chi_a(q)\omega(q,s).
\end{align}
When we multiply one such phase by the conjugate of another with the same sum label $s$, the factor $\omega(q,s)$ cancels as
\begin{align}
\begin{aligned}
\overline{\gamma_a(q)}\gamma_{a'}(q)&=\overline{c_{a,s}}c_{a',s}\,\chi_a(q)\chi_{a'}(q)\\
&=c_{a,a',s}\,(-1)^{[q,a+a']},
\end{aligned}
\end{align}
where $c_{a,a',s}$ is a unit phase independent of $q$. 
If $a=a'$, then the left side is $|\gamma_a(q)|^2=1$ for every $q$. 
If $a\ne a'$, then $a+a'\ne0$, and the Walsh character $q\mapsto (-1)^{[q,a+a']}$ has zero average over $\V$. 
Hence, we prove the claimed Eq.~\eqref{eq:gamma-orthog}. 
Finally, multiply Eq.~\eqref{eq:sum-system} by $\overline{\gamma_{a,a+s}(q)}$, average over $q\in\V$, and use the orthogonality just proved. 
This selects the single coefficient $R_{a,a+s}$ and gives Eq.~\eqref{eq:R-inversion}.
\end{proof}

Let $N=4^n$ and define the matrix $(A_s)_{q,a}=\gamma_{a,a+s}(q)$. 
Then, we have $A_s^*A_s=NI$ and $A_s^{-1}=N^{-1}A_s^*$.
Every entry of $A_s^{-1}$ has modulus $1/N$, and each row has $N$ entries. 
Therefore, this indicates that $\|A_s^{-1}\|_{\infty\to\infty}=1$.

\subsection{Monte Carlo estimation of one Pauli-Liouville coefficient}

For fixed $(a,b)$ set $s=a+b$ and define
\begin{align}
F_{a,b}(t):=4^{-n}\sum_{q\in\V}\overline{\gamma_{a,b}(q)}2^{-n}\Tr(E_{q+s}e^{t\Lop}(E_q)).
\end{align}
Then by Eq.~\eqref{eq:R-inversion}, we have $F'_{a,b}(0)=R_{a,b}$.
Draw $q\sim\operatorname{Unif}(\V)$, run the trace-rule experiment $E_q\to E_{q+s}$, and multiply the outcome by $\overline{\gamma_{a,b}(q)}$. 
The sample remains bounded by one, and the derivative at zero is $R_{a,b}$.  
The one-sided derivative estimator therefore applies directly.

\begin{proposition}[Random trace measurements estimate one Pauli-Liouville coefficient]
For any $a,b\in\V$, one can estimate $R_{a,b}$ to additive error $\alpha$ with failure probability
$\delta$ using
\begin{align}
\widetilde O\left(\frac{\Gamma^2}{\alpha^2}\log\frac{1}{\delta}\right)\quad\text{and}\quad
\widetilde O\left(\frac{\Gamma}{\alpha^2}\log\frac{1}{\delta}\right)
\end{align}
queries and total time.
\end{proposition}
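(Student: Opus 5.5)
The plan is to reduce the statement to the one-sided derivative estimator of \Cref{thm:one-sided-derivative}, applied to the scalar function $F_{a,b}(t)$. The estimator is a single random trace shot, reweighted by the known phase $\overline{\gamma_{a,b}(q)}$.

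\textbf{Sampling model.} Fix $(a,b)$ and set $s=a+b$. One sample is produced as follows:
\begin{enumerate}
\item Draw $U\sim\pi_R$ and $q\sim\operatorname{Unif}(\V)$.
\item Run the trace shot $E_q\to E_{q+s}$ at time $t=\tau U$, obtaining $Z=rm\in\{\pm1\}$.
\item Output $W:=\overline{\gamma_{a,b}(q)}\,Z$.
\end{enumerate}
By the trace rule \eqref{eq:trace-rule}, conditional on $t$ and $q$ we have $\E[Z]=2^{-n}\Tr(E_{q+s}e^{t\Lop}(E_q))$. Averaging over uniform $q$ then gives $\E[W\mid t]=F_{a,b}(t)$. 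Since $|\gamma_{a,b}(q)|=1$, we also have $|W|\le1$.

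\textbf{Derivative bounds.} Write $F_{a,b}$ as a convex-type combination of Pauli responses, with coefficients $c_q=4^{-n}\overline{\gamma_{a,b}(q)}$ satisfying $\sum_q|c_q|=1$. The second part of Lemma~\ref{lem:response-derivative-bound} then gives $|F_{a,b}^{(\ell)}(t)|\le\Gamma^\ell$. This bound covers the real and imaginary parts separately, since each is such a combination with coefficient mass at most one.

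\textbf{Applying the estimator.} By \eqref{eq:R-inversion}, $F_{a,b}'(0)=R_{a,b}$. We apply \Cref{thm:one-sided-derivative} separately to $\operatorname{Re}F_{a,b}$ and $\operatorname{Im}F_{a,b}$, each with accuracy $\alpha/2$ and failure probability $\delta/2$.
\begin{itemize}
\item The bias $|D_RF-F'(0)|\le C\Gamma R^32^{-R}$ is made at most $\alpha/4$ by choosing $R=C_0\lceil\log(e+\Gamma/\alpha)\rceil$.
\item The reweighted sample $\kappa(U)W$ is bounded by $B_R/\tau\le C\Gamma R^3$, so Hoeffding's inequality requires $N=O(\Gamma^2R^6\alpha^{-2}\log(1/\delta))$ samples.
\end{itemize}
Each sample uses evolution time at most $\tau=1/(2\Gamma)$, so the total time is $N\tau=\widetilde O(\Gamma\alpha^{-2}\log(1/\delta))$.

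\textbf{Main obstacle.} The only point needing care is that the randomness over $q$ and over $U$ compose into an unbiased estimator of $D_RF_{a,b}$. By Fubini, $\E[\kappa(U)W]=\tau^{-1}\int_0^1k_R(u)F_{a,b}(\tau u)\,du$. This holds because $q$ is drawn independently of $U$, and each shot's outcome depends on $q$ only through the prepared and measured Paulis. Everything else is inherited from the earlier lemmas. The complex-valued case is handled by the real/imaginary splitting and a union bound.
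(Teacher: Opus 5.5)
Your proposal is correct and follows the paper's proof essentially step for step. In both, the phase-reweighted random trace shot gives a bounded unbiased sample of $F_{a,b}(t)$, the mass-one combination in Lemma~\ref{lem:response-derivative-bound} supplies the derivative bounds, and \Cref{thm:one-sided-derivative} yields the stated query and time counts; your explicit real/imaginary split and Fubini remark are details the paper handles inside that lemma or leaves implicit.
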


\begin{proof}
Fix $(a,b)$ and put $s=a+b$. 
For each $q\in\V$, the trace-rule experiment with preparation Pauli $E_q$ and measurement Pauli $E_{q+s}$ produces a bounded sample $Z_{q,t}$ satisfying
\begin{align}
\E[Z_{q,t}\mid q]=2^{-n}\Tr\left(E_{q+s}e^{t\Lop}(E_q)\right)
\end{align}
with $|Z_{q,t}|\le1$.
If $q$ is drawn uniformly from $\V$ and the sample is multiplied by $\overline{\gamma_{a,b}(q)}$, then the resulting random variable has expectation $F_{a,b}(t)$ and absolute value at most one. 
By Eq.~\eqref{eq:R-inversion} applied to the derivative of the channel at zero, we have
\begin{align}
\begin{aligned}
F'_{a,b}(0)&=4^{-n}\sum_{q\in\V}\overline{\gamma_{a,b}(q)}2^{-n}\Tr(E_{q+s}\Lop(E_q)\\
&=R_{a,b}.
\end{aligned}
\end{align}
The function $F_{a,b}$ is a normalized average of Pauli trace responses with coefficients $4^{-n}\overline{\gamma_{a,b}(q)}$. 
These coefficients have total absolute mass one. 
Therefore, \Cref{lem:response-derivative-bound} gives $|F_{a,b}^{(\ell)}(t)|\le \Gamma^\ell$ for every $\ell\ge0$. 
Applying \Cref{thm:one-sided-derivative} with target accuracy $\alpha$ estimates $F'_{a,b}(0)=R_{a,b}$ using
\begin{align}
\widetilde O\left(\frac{\Gamma^2}{\alpha^2}\log\frac{1}{\delta}\right)\quad\text{and}\quad
\widetilde O\left(\frac{\Gamma}{\alpha^2}\log\frac{1}{\delta}\right)
\end{align}
queries and total time.
\end{proof}

\subsection{Fixed-sum Pauli-Liouville reconstruction}
\label{subsec:overcomplete-reconstruction}

For one fixed sum label, the same random $q$ and the same trace-rule outcome can be multiplied by different known phases. 
Thus one data set serves every requested coefficient in that block.

\begin{proposition}[One fixed-sum data set estimates every requested coefficient in the block]\label{prop:fixed-sum-batch}
Fix $s\in\V$ and a finite set $I_s\subseteq\V$. 
A control-free product-Pauli procedure estimates all coefficients $\{R_{a,a+s}: a\in I_s\}$ to additive error at most $\alpha$ with failure probability at most $\delta_s$, using
\begin{align}
\widetilde O \! \left(\frac{\Gamma^2}{\alpha^2} \log \frac{|I_s|}{\delta_s}\right)\quad\text{and}\quad
\widetilde O\!\left(\frac{\Gamma}{\alpha^2}\log\frac{|I_s|}{\delta_s}\right)
\end{align}
queries and total evolution time assuming $|I_s|\geq 2$.
\end{proposition}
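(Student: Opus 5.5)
The plan reuses one shared data set for the whole block and takes a union bound over the $|I_s|$ coefficients. I would draw $N$ independent shots. In each shot, draw $U\sim\pi_R$ and $q\sim\mathrm{Unif}(\V)$, then run the trace-rule experiment $E_q\to E_{q+s}$ at time $\tau U$ and record $Z\in\{\pm1\}$. By the trace-rule proposition, $\E[Z\mid q,U]=2^{-n}\Tr(E_{q+s}e^{\tau U\Lop}(E_q))$. For each $a\in I_s$, form the weighted sample
\begin{align}
Y_a:=\kappa(U)\,\overline{\gamma_{a,a+s}(q)}\,Z,\qquad \kappa(U)=\frac{B_R}{\tau}\operatorname{sgn}k_R(U),
\end{align}
and output the empirical mean $\widehat R_{a,a+s}$. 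The same $(U,q,Z)$ appears in every $Y_a$, so only the classical phase changes with $a$. The samples for different $a$ are therefore correlated within a shot. However, each fixed $a$ sees an i.i.d.\ sequence across shots, which is all that Hoeffding plus a union bound requires.

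Fix $a\in I_s$. Conditioning on $U$ and averaging over $q$ gives $\E[\overline{\gamma_{a,a+s}(q)}Z\mid U]=F_{a,a+s}(\tau U)$, with $F_{a,b}$ as defined before the previous proposition. Hence $\E[Y_a]=D_RF_{a,a+s}$. From Eq.~\eqref{eq:R-inversion} we have $F'_{a,a+s}(0)=R_{a,a+s}$. Because $F_{a,a+s}$ is an average of Pauli responses with coefficients of total absolute mass one, Lemma~\ref{lem:response-derivative-bound} gives $|F^{(\ell)}_{a,a+s}|\le\Gamma^\ell$. The bias bound in the proof of Lemma~\ref{thm:one-sided-derivative} then yields $|D_RF_{a,a+s}-R_{a,a+s}|\le C\Gamma R^32^{-R}\le\alpha/2$ for $R=C_0\lceil\log(e+\Gamma/\alpha)\rceil$.

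The sample $Y_a$ is complex, with $|Y_a|\le B_R/\tau\le C\Gamma R^3$. I would apply Hoeffding separately to its real and imaginary parts, each with failure probability $\delta_s/(2|I_s|)$. With
\begin{align}
N=C\frac{\Gamma^2R^6}{\alpha^2}\log\frac{4|I_s|}{\delta_s},
\end{align}
each part has sampling error at most $\alpha/(2\sqrt2)$, so the complex error is at most $\alpha/2$. A union bound over the $2|I_s|$ events gives total failure probability at most $\delta_s$. Since $|I_s|\ge2$, the constant inside the logarithm is absorbed into $\log(|I_s|/\delta_s)$. Each shot lasts at most $\tau=1/(2\Gamma)$, so the total evolution time is $N\tau=\widetilde O\big((\Gamma/\alpha^2)\log(|I_s|/\delta_s)\big)$.

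The argument is mostly bookkeeping, and I expect the main care to be needed at one point. The correlation across different $a$ must not enter the concentration step, and this holds because the concentration is applied per coefficient. The other point to check is that the complex phase weighting keeps the samples bounded by the same $C\Gamma R^3$, so the Legendre-kernel bias and range bounds carry over unchanged.
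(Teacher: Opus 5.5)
Your proposal is correct and follows the paper's own proof. Both use one shared data set of trace shots $E_q\to E_{q+s}$ with $q$ uniform, reweight by the known phases $\overline{\gamma_{a,a+s}(q)}$ so that each $R_{a,a+s}$ is the endpoint derivative of a total-mass-one average $F_{a,a+s}$, then apply the one-sided Legendre estimator of Lemma~\ref{thm:one-sided-derivative} with Hoeffding for each coefficient and a union bound over $I_s$; you additionally spell out the real/imaginary split, the bias--sampling error budget, and why within-shot correlations are harmless, which the paper leaves implicit.
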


\begin{proof}
For $a\in I_s$, define
\begin{align}
\hspace{-2mm}F_{s,a}(t):=4^{-n}\sum_{q\in\V}\overline{\gamma_{a,a+s}(q)}2^{-n}\Tr(E_{q+s}e^{t\Lop}(E_q)).
\end{align}
Eq.~\eqref{eq:R-inversion} gives $F'_{s,a}(0)=R_{a,a+s}$. Draw $q\sim\operatorname{Unif}(\V)$ and perform the trace-rule experiment $E_q\to E_{q+s}$. 
If its bounded signed outcome is $Z_{q,t}$, the same shot supplies $\overline{\gamma_{a,a+s}(q)}Z_{q,t}$ for every $a\in I_s$. 
For each fixed $a$, this random variable has expectation $F_{s,a}(t)$ and modulus at most one. 
\Cref{thm:one-sided-derivative}, followed by a union bound over $I_s$, gives the simultaneous accuracy and the stated resources.
\end{proof}
 
For $s\in\Sigma_C$, define $I_s^D:=\{a\in C:a+s\in C\}$, and for $s\in\Delta_C$, set $I_s:=I_s^D\cup\{0,s\}$ and $I_0:=I_0^D$. 
Applying \Cref{prop:fixed-sum-batch} with $\alpha=\eps/4$ to every $s\in\Sigma_C$ estimates all $R_{u,v}=A_{uv}$ with $u,v\in C$ and both one-sided coefficients $R_{0,s},R_{s,0}$ for $s\in\Delta_C$. 
Since $|C|\le4M_0$, we have $|\Sigma_C|\le|C|^2\le16M_0^2$.
Consequently, the total evolution time and the number of shots for the Pauli-Liouville reconstruction are 
\begin{equation}
T_{\rm PL}=\widetilde O\!\left(\frac{M_0^2\Gamma}{\eps^2}\right),\qquad
N_{\rm PL}=\widetilde O\!\left(\frac{M_0^2\Gamma^2}{\eps^2}\right). \label{eq:batched-PL-time}
\end{equation}
For $s\in\Delta_C$, define
\begin{equation}
\widehat h_s^{\rm PL}:=\operatorname{Re}\!\left(\frac{\widehat R_{0,s}-\widehat R_{s,0}}{2i}\right). 
\label{eq:h-PL-overcomplete}
\end{equation}

\section{The complete reconstruction protocol and guarantee}

\subsection{The main guarantee}

We now combine the support and coefficient stages.
For completeness, we briefly recap the algorithm here in parallel with Algorithm~\ref{alg:lindblad} presented in the main text.
The input is $(M_0,\eps,\delta,\Gamma)$ with $\Gamma\ge\Gamma_*$.  
If $M_0=0$, the generator is zero and the protocol stops.  
Otherwise, we do the following steps
\begin{enumerate}[label=(\arabic*)]
\item Set $\xi=c_\xi\eps^2/M_0$ and construct $C$ by \Cref{thm:diss-row}.
\item Form $\Sigma_C$ and $\Delta_C$, run the projection stage, and construct $\widehat U_3$ as in Eq.~\eqref{eq:strong-H-in-U3}.
\item Run the linear estimator on $\widehat U_3$.
\item Run the fixed-sum estimator on every $s\in\Sigma_C$.
\item For $u,v\in C$, set
\begin{align}
\widehat A_{uv}:=\frac{1}{2}\left(\widehat R_{u,v}+\overline{\widehat R_{v,u}}\right),
\end{align}
and set $\widehat A_{uv}=0$ otherwise. Set
\begin{align}
\widehat h_s:=\begin{cases}
\widehat h_s^{\rm PL}, & s\in\Delta_C,\\
\widehat h_s^{\rm lin}, & s\in\widehat U_3,\\
0, & \text{otherwise}.
\end{cases}
\end{align}
\end{enumerate}

The theorem below guarantees the performance of the Lindbladian learning algorithm. 
Every coefficient that may exceed the target error is included in one of the estimated sets.  
Every omitted coefficient is already small by the PSD row bound or the projection guarantee.

\begin{theorem}[The overall performance guarantee]\label{thm:end-to-end}
Given an unknown Lindbladian in Eq.~\eqref{eq:model-L} with time evolution query access satisfying
\begin{align}
A=A^\dagger \succeq0,\qquad |h_s|\le1,\qquad |A_{uv}|\le1,
\end{align}
with $|S_H|+|S_D|^2\le M_0$ and known $\Gamma\ge\Gamma_*$. 
For every $0<\eps\le1$, the protocol above outputs estimates
satisfying
\begin{align}
\max_{s\ne0}|\widehat h_s-h_s|\le\eps,\qquad
\max_{u,v\ne0}|\widehat A_{uv}-A_{uv}|\le\eps
\end{align}
with probability at least $1-\delta$ using
\begin{equation}
N_\Sigma=\widetilde O\left(\frac{\Gamma^2M_0^2}{\eps^4}\right)\ \ \text{and}\ \ 
T_\Sigma=\widetilde O\left(\frac{M_0^2\Gamma}{\eps^2}\right)
\label{eq:main-time-eps}
\end{equation} 
shots and total evolution time.
For the safe scale $\Gamma=2M_0$, we have $T_\Sigma=\widetilde O(M_0^3/\eps^2)$
\end{theorem}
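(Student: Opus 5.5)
The proof assembles the earlier stage guarantees on a single good event and then checks every coefficient by cases. First I would allocate the failure probability $\delta$ among four randomized components: the row-finding procedure of \Cref{thm:diss-row}, the projection stage of \Cref{thm:projection-stage}, the linear stage of \Cref{thm:linear-response} run on $\mathcal U=\widehat U_3$, and the fixed-sum batches of \Cref{prop:fixed-sum-batch} over all $s\in\Sigma_C$. I would give $\delta/4$ to each, and split the last budget further as $\delta_s=\delta/(4|\Sigma_C|)$ with $|\Sigma_C|\le16M_0^2$. Since the later stages depend on $C$ and $\widehat U_3$, I would argue conditionally: each guarantee holds for every fixed realization of the earlier outputs, and the union bound then applies. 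On the intersection $\mathcal E$ we have $S_{D,\xi}\subseteq C$ with $|C|\le4M_0$, Eq.~\eqref{eq:strong-H-in-U3}, $|\widehat h^{\rm lin}_s-h_s|\le\eps/4$ on $\widehat U_3$, and $|\widehat R_{a,a+s}-R_{a,a+s}|\le\eps/4$ for all $a\in I_s$, $s\in\Sigma_C$.

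On $\mathcal E$ I would verify the error bounds case by case.

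\textbf{Dissipative entries.} For $u,v\in C$, both $R_{u,v}=A_{uv}$ and $R_{v,u}=A_{vu}=\overline{A_{uv}}$ are estimated, since $v\in I^D_{u+v}$ and $u\in I^D_{u+v}$. The symmetrized estimate therefore has error at most $\eps/4$. If $u\notin C$ or $v\notin C$, Lemma~\ref{lem:weak-tail} gives $|A_{uv}|\le\sqrt\xi\le\eps$, so outputting zero is fine.

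\textbf{Hamiltonian labels in $\Delta_C$.} Here $h_s=\operatorname{Re}[(R_{0,s}-R_{s,0})/(2i)]$ exactly, because the anticommutator parts cancel and $h_s$ is real. Taking the real part is $1$-Lipschitz, so the error is at most $(\eps/4+\eps/4)/2=\eps/4$.

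\textbf{Hamiltonian labels in $\widehat U_3$.} The linear stage gives error at most $\eps/4$ directly.

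\textbf{All other labels.} A label outside $\Delta_C\cup\widehat U_3$ has $|h_s|\le\eps$ by Eq.~\eqref{eq:strong-H-in-U3}. Since $\widehat U_3$ is disjoint from $\Delta_C$, the output rule is well defined.

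\textbf{Resource count.} The shots are: Stage 1 with $\widetilde O(\Gamma^2M_0^2/\eps^4)$ shots and time $\widetilde O(M_0/\eps^2)$; Stage 2 with $\widetilde O(\Gamma^2/\eps^2)$ shots and time $\widetilde O(\Gamma/\eps^2)$; the linear stage with $\widetilde O(\Gamma^2/\eps^2\cdot\log|\widehat U_3|)$ shots. Here $|\widehat U_3|\le(|\widehat\Pi_x|+1)(|\widehat\Pi_z|+1)$ is polynomial in $\Gamma/\eps$ and $\log(1/\delta)$, so its logarithm is absorbed into the tilde. The Pauli–Liouville stage contributes Eq.~\eqref{eq:batched-PL-time}. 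Summing, the shot count is dominated by Stage 1. For the time, Stage 1's $M_0/\eps^2$ is at most $\Gamma M_0^2/\eps^2$ provided $\Gamma\gtrsim\xi$; in the trivial case $\Gamma<2\xi$ Stage 1 returns $C=\emptyset$ with no shots. The total time is therefore dominated by Stage 3, giving Eq.~\eqref{eq:main-time-eps}, and substituting $\Gamma=2M_0$ gives $\widetilde O(M_0^3/\eps^2)$.

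\textbf{Main obstacle.} I expect the delicate step to be the Hamiltonian–dissipation interplay outside $\Delta_C$. There the linear-response bias $b_{c,s}$ and the covering guarantee both depend on $S_{D,\xi}\subseteq C$, so the stages must be correctly conditioned on $C$. The constants $c_\xi$ and the kernel degree $R$ also have to keep the total bias below $\eps/4$ uniformly over adaptively chosen candidate sets. I would handle this by treating $C$ and $\widehat U_3$ as fixed inputs when invoking \Cref{thm:linear-response} and \Cref{prop:fixed-sum-batch}, and using Lemma~\ref{lem:weak-tail} with $c_\xi$ small enough that $\beta_C(s)\le\eps/100$.
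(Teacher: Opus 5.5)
Your proof follows the paper's: the same union bound over the four randomized stages, and the same case analysis. That is, Hermitian symmetrization on $C\times C$, the PSD row bound of Lemma~\ref{lem:weak-tail} off $C\times C$, the identity $h_s=(R_{0,s}-R_{s,0})/(2i)$ on $\Delta_C$, the linear stage on $\widehat U_3$, and the covering inclusion Eq.~\eqref{eq:strong-H-in-U3} elsewhere. The resource summation is also the same. Your additions are correct and make explicit what the paper leaves implicit: you condition the later stages on the realized $C$ and $\widehat U_3$, you note $\widehat U_3\cap\Delta_C=\emptyset$, and you check that $\log|\widehat U_3|$ is only polylogarithmic because $|\widehat\Pi_x|,|\widehat\Pi_z|\le N_2$.

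One step of your time accounting is wrong, however. Stage~1 costs $T_D=3N_B\tau_D=\widetilde O(1/\xi)=\widetilde O(M_0/\eps^2)$ regardless of $\Gamma$. The bound $M_0/\eps^2\lesssim\Gamma M_0^2/\eps^2$ holds iff $\Gamma M_0\gtrsim1$. Your condition $\Gamma\ge2\xi$ gives only $\Gamma M_0\ge2c_\xi\eps^2$, hence only $T_D\lesssim\Gamma M_0^2/\eps^4$, which is off by a factor $\eps^{-2}$. In the window $2\xi\le\Gamma\ll1/M_0$, Stage~1 genuinely dominates the evolution time of this protocol. The paper's ``summing these terms'' tacitly makes the same assumption. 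To close the gap, either assume $\Gamma M_0\gtrsim1$ explicitly, or state the general bound as $T_\Sigma=\widetilde O\bigl((M_0+M_0^2\Gamma)/\eps^2\bigr)$. The assumption holds whenever $\Gamma\ge1$ and at the safe scale $\Gamma=2M_0$, so the $\widetilde O(M_0^3/\eps^2)$ claim is unaffected.
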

 
\begin{proof}
Allocate failure probability among candidate generation, projection, linear response, and fixed-sum
estimation. 
On their joint success event, \Cref{thm:diss-row} gives $S_{D,\xi}\subseteq C$ and $|C|\le4M_0$.

For $u,v\in C$, both $R_{u,v}$ and $R_{v,u}$ are estimated to error at most $\eps/4$.
Since $A_{vu}=\overline{A_{uv}}$, Hermitian symmetrization gives $|\widehat A_{uv}-A_{uv}|\le \tfrac12|\widehat R_{u,v}-A_{uv}|+\tfrac12|\widehat R_{v,u}-A_{vu}|\le\eps/4$.
If one of $u$ and $v$ lies outside $C$ and outside $S_D$, the true entry is zero.
If one of $u$ and $v$ lies in $S_D\setminus C$, \Cref{lem:weak-tail} gives $|A_{uv}|\le\sqrt\xi\le\eps/100$ after fixing $c_\xi$. 
Therefore, every Kossakowski entry has error at most $\eps$.

For $s\in\Delta_C$, the identity $h_s=\tfrac{R_{0,s}-R_{s,0}}{2i}$ and the two Pauli-Liouville estimates give $|\widehat h_s^{\rm PL}-h_s|\le\eps/4$. 
For $s\in\widehat U_3$, Theorem~ \ref{thm:linear-response} gives $|\widehat h_s^{\rm lin}-h_s|\le\eps/4$. 
Finally, if $s\notin\Delta_C\cup\widehat U_3$, inclusion Eq.~\eqref{eq:strong-H-in-U3} implies $|h_s|\le\eps$, and the protocol outputs zero. 
Thus the Hamiltonian error is at most $\eps$.

\Cref{thm:projection-stage} contributes $\widetilde O(\Gamma^2M_0^2/\eps^4)$ shots and $\widetilde O(M_0/\eps^2)$ times.
\Cref{thm:linear-response} contribute
$\widetilde O(\Gamma^2/\eps^2)$ shots and
$\widetilde O(\Gamma/\eps^2)$ time. 
Eq.~\eqref{eq:batched-PL-time} gives the fixed-sum cost.
For $M_0\ge1$ and $\eps\le1$, summing these terms yields Eq.~\eqref{eq:main-time-eps}.
\end{proof}

\subsection{Discretized Lindbladian learning protocol}

The previous estimators draw a time from a continuous distribution. 
Similar to Ref.~\cite{zhou2026optimal}, we can discretize the protocol.
However, the approach in Ref.~\cite{zhou2026optimal} cannot be directly applied as the Lindbladian does not have a bounded bandwidth as a Hamiltonian.
We can replace the distribution by finitely many positive times and then snap those times to a hardware lattice.  
Neither step changes the stated query or total-evolution-time scaling.

\paragraph{Replacing the continuous time draw by finitely many positive times.}A finite interpolation rule can reproduce the derivative estimate. 
We use Chebyshev-Lobatto nodes~\cite{trefethen2019approximation} because they give stable endpoint differentiation.  
The value at $t=0$ is known from Pauli algebra, so it is inserted classically and never counted as a channel use.

Set $\tau:=\tfrac{1}{2\Gamma}$, for an integer $R\ge2$, define the Chebyshev-Lobatto nodes on $[0,\tau]$ by
\begin{align}
x_j=\cos\frac{j\pi}{R},\quad
t_j=\frac{\tau}{2}(1-x_j),\quad j=0,1,\ldots,R.
\end{align}
Thus, we have $0=t_0<t_1<\cdots<t_R=\tau$ with $t_1\ge \tfrac{c}{\Gamma R^2}$ for a universal constant $c>0$. 
Let $\ell_j$ be the Lagrange basis polynomial satisfying $\ell_j(t_m)=\mathbf 1\{j=m\}$, and define endpoint derivative weights $w_j:=\ell'_j(0)$ and $W_R:=\sum_{j=0}^R |w_j|$.
Only a logarithmic number of nodes is needed.  
The next lemma bounds the interpolation weights and the resulting derivative bias.

\begin{lemma}[A finite set of positive times estimates the derivative at zero]\label{lem:finite-node-estimator}
There is a universal constant $C$ such that $W_R\le C\Gamma R^3$.
Moreover, if a scalar response $f$ satisfies
\begin{align}
|f^{(m)}(t)|\le \Gamma^m,\quad 0\le t\le \tau,\quad m=0,1,2,\ldots,
\end{align}
then, we have
\begin{align}
\left|\sum_{j=0}^R w_j f(t_j)-f'(0)\right|\le C\Gamma R^3 2^{-R}. 
\label{eq:finite-node-bias}
\end{align}
Consequently, for target derivative accuracy $\alpha$, choosing $R=C\left\lceil \log(e+\Gamma/\alpha)\right\rceil$ gives bias at most $\alpha/4$.
\end{lemma}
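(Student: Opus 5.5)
The plan is to rescale to the unit interval and treat the Chebyshev--Lobatto rule as endpoint differentiation of the degree-$R$ interpolant, in parallel with the proof of Lemma~\ref{thm:one-sided-derivative}. Write $x_j=\cos(j\pi/R)$ for the standard Lobatto nodes on $[-1,1]$, and let $\tilde\ell_j$ be their Lagrange basis polynomials. The affine map $t=\tfrac{\tau}{2}(1-x)$ sends $t=0$ to $x=1$, so
\begin{align}
w_j=\ell_j'(0)=-\frac{2}{\tau}\,\tilde\ell_j'(1)=-4\Gamma\,\tilde\ell_j'(1).
\end{align}
The first claim therefore reduces to the purely numerical bound $\sum_j|\tilde\ell_j'(1)|\le CR^3$ for Lobatto nodes.

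For this bound I will use the explicit first row of the Chebyshev differentiation matrix (Trefethen's formulas). This row is $D_{00}=(2R^2+1)/6$, together with $D_{0j}=2(-1)^j/(1-x_j)$ for $0<j<R$ and $D_{0R}=(-1)^R/2$. Since $1-x_j=2\sin^2(j\pi/2R)\ge c\,j^2/R^2$, summing gives
\begin{align}
\sum_{j\ge1}|D_{0j}|\le C\sum_{j\ge1} R^2/j^2+O(1)=O(R^2).
\end{align}
This yields $W_R\le C\Gamma R^2\le C\Gamma R^3$, which is stronger than needed. As a fallback that avoids the explicit formulas, I can use the Lebesgue constant $\Lambda_R=O(\log R)$ together with Markov's inequality $\|p'\|_{[-1,1]}\le R^2\|p\|$. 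The functional $p\mapsto p'(1)$ has norm at most $R^2$ on polynomials of degree $R$, and its representation in the nodal values is exactly $\sum_j|\tilde\ell_j'(1)|$, computed by testing against the polynomial interpolating $\operatorname{sgn}\tilde\ell_j'(1)$. This gives $W_R\le C\Gamma R^2\Lambda_R$ directly.

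For the bias I will proceed as in Lemma~\ref{thm:one-sided-derivative}. Let $T_R f$ be the degree-$R$ Taylor polynomial of $f$ at $0$. Interpolation is exact on degree-$R$ polynomials, so $\sum_j w_j T_Rf(t_j)=(T_Rf)'(0)=f'(0)$. The error therefore equals $\sum_j w_j\,(f-T_Rf)(t_j)$. The Taylor remainder on $[0,\tau]$ satisfies $|f-T_Rf|\le (\Gamma\tau)^{R+1}/(R+1)!\le 2^{-R}$, so the error is at most $W_R2^{-R}\le C\Gamma R^32^{-R}$.

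Finally, choosing $R=C\lceil\log(e+\Gamma/\alpha)\rceil$ with $C$ large makes $C\Gamma R^3 2^{-R}\le\alpha/4$, since $2^{-R}$ decays polynomially faster than $R^3\Gamma/\alpha$ grows. The only genuine obstacle is the bound on the weight sum $W_R$. I would use the explicit differentiation-matrix row first and keep the Markov/Lebesgue-constant argument as the backup.
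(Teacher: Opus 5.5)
Your proof is correct. The bias argument and the choice of $R$ are the same as the paper's: exactness of the rule on the degree-$R$ Taylor polynomial, the remainder bound $(\Gamma\tau)^{R+1}/(R+1)!\le 2^{-R}$, and the resulting estimate $W_R2^{-R}$. The difference is in how you bound $W_R$.

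The paper uses exactly your fallback. It writes $W_R=\sup_{|y_j|\le1}|p_y'(0)|$ by $\ell_1$--$\ell_\infty$ duality, bounds $\|p_y\|_{L^\infty[0,\tau]}$ by the $O(\log R)$ Lebesgue constant, and applies Markov's inequality on $[0,\tau]$. This gives $W_R\le C\Gamma R^2\log R\le C\Gamma R^3$.

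Your primary route instead pulls back to $[-1,1]$ via $w_j=-4\Gamma\,\tilde\ell_j'(1)$ and reads off the first row of the Chebyshev differentiation matrix. With $1-x_j=2\sin^2(j\pi/2R)\ge 2j^2/R^2$, this gives $W_R\le C\Gamma R^2$. In that sum, remember to include the diagonal term $|D_{00}|=(2R^2+1)/6$ alongside the $j\ge1$ terms you displayed; it is still $O(R^2)$.

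The explicit route is sharper by a $\log R$ factor and fully concrete, but it relies on closed-form entries specific to Chebyshev--Lobatto nodes. The paper's duality argument is coarser but works for any node family with a small Lebesgue constant. Either bound suffices, since the extra polylogarithmic factor is absorbed by $2^{-R}$ and affects the downstream resource counts only through polylogarithmic factors.
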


\begin{proof}
We compute the first positive node. 
Since $t_1=\tfrac{\tau}{2}\left(1-\cos\frac{\pi}{R}\right)$ and $1-\cos x\ge c x^2$ for $0\le x\le\pi/2$, we have
\begin{align}
t_1\ge c\tau R^{-2}=\frac{c}{\Gamma R^2}\end{align}
after adjusting the universal constant.

We next bound the total variation of the derivative weights. 
For any data vector $y=(y_0,\ldots,y_R)$ with $|y_j|\le1$, let $p_y$ be the degree-$R$ interpolating polynomial satisfying $p_y(t_j)=y_j$. 
Since
\begin{align}
p_y'(0)=\sum_{j=0}^R y_j\ell_j'(0)=\sum_{j=0}^R y_jw_j,
\end{align}
duality for the $\ell_1$ norm gives
\begin{align}
W_R=\sum_j|w_j|=\sup_{|y_j|\le1}|p_y'(0)|.
\end{align}
Chebyshev-Lobatto interpolation has Lebesgue constant $O(\log R)$, so
\begin{align}
\|p_y\|_{L^\infty[0,\tau]}\le C\log R\max_j|y_j|\le C\log R.
\end{align}
Markov's inequality on an interval of length $\tau$ states that a degree-$R$ polynomial obeys
\begin{align}
\|p_y'\|_{L^\infty[0,\tau]}\le \frac{CR^2}{\tau}\|p_y\|_{L^\infty[0,\tau]}.
\end{align}
Since $\tau^{-1}=2\Gamma$, this gives
\begin{align}
|p_y'(0)|\le C\Gamma R^2\log R\le C\Gamma R^3.
\end{align}
Taking the supremum over $y$ proves $W_R\le C\Gamma R^3$.

Now let $T_Rf$ be the Taylor polynomial of $f$ at zero of degree $R$. 
The differentiation rule $\sum_jw_jh(t_j)$ for every polynomial $h$ of degree at most $R$ indicates that $\sum_jw_j(T_Rf)(t_j)=(T_Rf)'(0)=f'(0)$.
For $0\le t\le\tau$, Taylor's theorem and $\Gamma\tau=1/2$ give
\begin{align}
|f(t)-T_Rf(t)| \le \frac{\Gamma^{R+1}t^{R+1}}{(R+1)!}\le \frac{2^{-(R+1)}}{(R+1)!}\le 2^{-R}.
\end{align}
Therefore, we have
\begin{align}
\big|\sum_jw_jf(t_j)-f'(0)\big|&=\big|\sum_jw_j\bigl(f(t_j)-(T_Rf)(t_j)\bigr)\big| \nonumber\\
&\le W_R\max_j|f(t_j)-(T_Rf)(t_j)| \nonumber\\
&\le C\Gamma R^3 2^{-R}.
\end{align}
Choosing $R=C\lceil\log(e+\Gamma/\alpha)\rceil$ with a sufficiently large universal constant makes this last expression at most $\alpha/4$.
\end{proof}

\begin{lemma}[A finite set of positive times detects a large vector derivative]\label{lem:finite-node-vector-observability}
Let $F:[0,\tau]\to\mathcal H$ be a finite-dimensional Hilbert-space-valued function satisfying $\|F^{(m)}(t)\|\le \Gamma^m$ for $0\le t\le\tau$ and $m=0,1,2,\ldots$.
If $\|F'(0)\|\ge\eta$ and $R=C\lceil\log(e+\Gamma/\eta)\rceil$, then
\begin{align}
\E_{J}[\|F(t_J)\|^2]=\sum_{j=0}^R\frac{|w_j|}{W_R}\|F(t_j)\|^2\ge\frac{c\eta^2}{\Gamma^2R^6}.
\end{align}
\end{lemma}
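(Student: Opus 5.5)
We mirror the proof of \Cref{lem:vector-observability}, with the continuous Legendre kernel replaced by the discrete Chebyshev--Lobatto differentiation rule of \Cref{lem:finite-node-estimator}. Put $D:=\sum_{j=0}^R w_jF(t_j)\in\mathcal H$. The argument has three steps.

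\textbf{Step 1 (bias bound).} The bias bound \eqref{eq:finite-node-bias} uses only the Taylor remainder and the triangle inequality, so it holds verbatim in $\mathcal H$. The Taylor polynomial $T_RF$ is an $\mathcal H$-valued polynomial of degree $R$. The identity $\sum_jw_jp(t_j)=p'(0)$ holds for every scalar polynomial, hence componentwise for $T_RF$. The remainder obeys $\|F(t)-T_RF(t)\|\le 2^{-R}$ on $[0,\tau]$, since the vector Taylor remainder is bounded by $\sup\|F^{(R+1)}\|\,t^{R+1}/(R+1)!$. Together with $W_R\le C\Gamma R^3$ this gives
\begin{align}
\|D-F'(0)\|\le C\Gamma R^3 2^{-R}.
\end{align}
Choosing $R=C\lceil\log(e+\Gamma/\eta)\rceil$ with a large constant makes the right-hand side at most $\eta/2$. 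The reverse triangle inequality then yields $\|D\|\ge\eta/2$.

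\textbf{Step 2 (Cauchy--Schwarz).} Write $D=W_R\,\E_J[\operatorname{sgn}(w_J)F(t_J)]$, where $J$ is drawn with probability $|w_j|/W_R$. Convexity of $\|\cdot\|^2$ (Jensen) gives
\begin{align}
\eta^2/4\le\|D\|^2\le W_R^2\,\E_J\|F(t_J)\|^2.
\end{align}
Dividing by $W_R^2\le C\Gamma^2R^6$ yields the claimed bound $c\eta^2/(\Gamma^2R^6)$.

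\textbf{Step 3 (the node $t_0=0$).} The node $t_0=0$ appears in the sum, where $F(0)$ is known classically rather than measured. This causes no difficulty for the stated inequality, which is purely analytic and includes $j=0$ as written. If $F(0)$ is nonzero, as for a displacement amplitude with $d\ne0$ where $F(0)=0$ anyway, the inequality is still exactly what is claimed.

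The only genuine obstacle is confirming the bound on $W_R$, which \Cref{lem:finite-node-estimator} already supplies. Beyond that, the remaining work is bookkeeping of constants, so that $C$ in $R$ dominates $\log(\Gamma R^3/\eta)$.
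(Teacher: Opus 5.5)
Your proposal is correct and follows essentially the same route as the paper's proof. You use componentwise exactness of the Lobatto differentiation rule on the vector Taylor polynomial, and combine the $2^{-R}$ remainder with $W_R\le C\Gamma R^3$ to get $\|\sum_j w_jF(t_j)\|\ge\eta/2$. Then Jensen under the weights $|w_j|/W_R$ gives the bound, exactly as in the paper. Your Step 3 remark is harmless since the lemma is purely analytic and includes $j=0$, though its sentence about $F(0)$ being nonzero is muddled.
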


\begin{proof}
Let $T_RF$ be the vector-valued Taylor polynomial of degree $R$ at zero. 
The scalar interpolation identity applies componentwise, so $\sum_j w_j(T_RF)(t_j)=(T_RF)'(0)=F'(0)$.
Taylor's theorem in norm gives, for $0\le t\le\tau$,
\begin{align}
\|F(t)-T_RF(t)\|\le \frac{\Gamma^{R+1}t^{R+1}}{(R+1)!}\le 2^{-R}.
\end{align}
Hence, using \Cref{lem:finite-node-estimator},
\begin{align}
&\left\|\sum_jw_jF(t_j)-F'(0)\right\|\le \sum_j|w_j|\,\|F(t_j)-T_RF(t_j)\| \nonumber\\
&\qquad\qquad\le W_R2^{-R}\le C\Gamma R^3 2^{-R}\le \eta/2
\end{align}
for the stated choice of $R$. 
Therefore, we have $\left\|\sum_jw_jF(t_j)\right\|\ge\eta/2$.
On the other hand, with $J$ drawn with probability $|w_J|/W_R$, we have
\begin{align}
\begin{aligned}
 \left\|\sum_jw_jF(t_j)\right\|^2
 &=W_R^2\left\|\E_J\big[\operatorname{sgn}(w_J)F(t_J)\big]\right\|^2 \\
 &\le W_R^2\E_J\|F(t_J)\|^2,
\end{aligned}
\end{align}
by Jensen's inequality. 
Since $W_R\le C\Gamma R^3$, the claimed lower bound follows.
\end{proof}

The finite-node estimator is implemented without treating the zero node as a positive-time use of the channel. 
Split the interpolation formula as
\begin{align}
 \sum_{j=0}^Rw_jf(t_j)=w_0f(0)+\sum_{j=1}^Rw_jf(t_j).
\end{align}
The first term is inserted deterministically from Pauli algebra. 
For example, in a trace response $f_{P,Q}$, one has $f_{P,Q}(0)=2^{-n}\operatorname{Tr}(PQ)$, for a diagonal Pauli-$\chi$ response
$f_u(t)=\chi_{u,u}(t)$, one has $f_u(0)=\mathbf 1\{u=0\}$, and for a normalized Pauli-Liouville average, the value at $t=0$ is obtained by the same known linear combination of identity-map trace responses. 
Let $W_R^+:=\sum_{j=1}^R |w_j|\le W_R$.
When $W_R^+>0$, draw a positive node $J\in\{1,\ldots,R\}$ with probability $|w_J|/W_R^+$, run the trace-rule experiment at the positive time $t_J$, and output
\begin{align}
W=w_0f(0)+W_R^+\operatorname{sgn}(w_J)Z_J
\end{align}
at $\mathbb E[Z_J]=f(t_J)$ and $|Z_J|\le1$.
Then $\mathbb E[W]=\sum_jw_jf(t_j)$ and $|W|\le |w_0|+W_R^+ = W_R$.
The same Hoeffding argument therefore gives the shot and total evolution time
\begin{align}
N=\widetilde O\left(\frac{\Gamma^2}{\alpha^2}\right), \qquad
T=\widetilde O\left(\frac{\Gamma}{\alpha^2}\right),
\end{align}
where $N$ counts only positive-time uses of the channel. 
The smallest nonzero designed time is not
arbitrarily close to zero; it obeys
\begin{align}
t_1\ge \frac{c}{\Gamma\log^2(e+\Gamma/\alpha)}. \label{eq:min-node-time}
\end{align}
For finite-node projection sampling, the vector-valued lemma may also be conditioned on $J\ge1$.
Indeed, the target displacement from a same-Pauli measurement is nonzero, so its amplitude vector satisfies $F(0)=0$.
Removing the zero node can only increase the average of $\|F(t_J)\|^2$ over the sampled positive nodes.

\Cref{thm:projection-stage} already uses the single positive time $\tau_D$ and requires no
zero-time channel query.
Replace the continuous draw $U\sim\pi_R$ by the positive finite-node draw $J\in\{1,\ldots,R\}$ with probability $|w_J|/W_R^+$. 
If $d$ is the nonzero target displacement associated with a label $s\notin\Delta_C$ satisfying $|h_s|>\eps$, \Cref{lem:no-jump-derivative-bound} and \Cref{lem:projection-strength} give
\begin{align}
\|(F_d^B)'(0)\|\ge \eps/2,\qquad
\|(F_d^B)^{(m)}(t)\|\le\Gamma^m.
\end{align}
Applying \Cref{lem:finite-node-vector-observability} with $\eta=\eps/2$ gives
\begin{align}
\sum_{j=0}^R\frac{|w_j|}{W_R}\|F_d^B(t_j)\|^2\ge\frac{c\eps^2}{\Gamma^2R^6}.
\end{align}
For a nonzero displacement, $F_d^B(0)=0$. 
Removing the zero node and renormalizing can only
increase the weighted average as
\begin{align}
\begin{split}
\sum_{j=1}^R\frac{|w_j|}{W_R^+}\|F_d^B(t_j)\|^2&=\frac{W_R}{W_R^+} \sum_{j=0}^R\frac{|w_j|}{W_R}\|F_d^B(t_j)\|^2\\
&\ge \frac{c\eps^2}{\Gamma^2R^6}.
\end{split}
\end{align}
By no-jump domination, the actual displacement probability in the same-Pauli measurement at a positive node $t_J$ is at least $\|F_d^B(t_J)\|^2$. 
Hence one positive finite-node projection shot hits the target displacement with probability at least $c\eps^2/(\Gamma^2R^6)$, the lower bound used in the covering argument. 
Thus every use of the continuous signed-kernel estimator and every vector-valued observability step in the main proof may be replaced by the finite-node construction without changing the stated asymptotic resources or using zero-time channel queries.

\paragraph{Rounding the designed times to a hardware lattice.}A laboratory may allow only integer multiples of a basic time step $\Delta_t$.  
We therefore round each designed node to a nearby lattice point.  
The only new error is the change in the response caused by this rounding, which is controlled by the first-derivative bound.

Suppose now that the available positive times lie on a lattice
\begin{align}
\Delta_t\mathbb N=\{\Delta_t,2\Delta_t,3\Delta_t,\ldots\}.
\end{align}
Choose lattice points $\widetilde t_j\in\Delta_t\mathbb N\cup\{0\}$ satisfying $|\widetilde t_j-t_j|\le\Delta_t/2$. 
If $\Delta_t\le c/(\Gamma R^2)$, the snapped nodes may be chosen distinct and ordered. The derivative bias caused by snapping is bounded directly from $|f'|\le\Gamma$:
\begin{align}
\begin{aligned}
\left|\sum_j w_j\big(f(\widetilde t_j)-f(t_j)\big)\right|&\le \sum_j |w_j|\,\Gamma |\widetilde t_j-t_j| \\
&\le C\Gamma^2R^3\Delta_t.
\end{aligned} 
\label{eq:snap-bias}
\end{align}
Therefore, to estimate a derivative to accuracy $\alpha$, it is sufficient to impose
\begin{equation}
\Delta_t\le \frac{c\alpha}{\Gamma^2\log^3(e+\Gamma/\alpha)}. 
\label{eq:lattice-sufficient}
\end{equation}
Under Eq.~\eqref{eq:lattice-sufficient}, the snapped finite-node estimator has the same query and total-time scaling as \Cref{lem:finite-node-estimator}, after changing only universal constants. 
The same snapping estimate holds in norm for the vector-valued projection functions, since
\begin{align}
\begin{aligned}
&\left\|\sum_jw_j\big(F(\widetilde t_j)-F(t_j)\big)\right\| \\
&\quad\le \sum_j|w_j|\,\Gamma |\widetilde t_j-t_j|\le C\Gamma^2R^3\Delta_t.
\end{aligned}
\end{align}
Thus the finite-node projection stage also survives the lattice snapping condition. 
The dissipator candidate stage additionally
requires $\Delta_t\le c\tau_D$.  
Since $\tau_D=c_0\tfrac{\xi}{\Gamma^2}=c_0c_\xi\frac{\eps^2}{M_0\Gamma^2}$, a sufficient end-to-end lattice condition is
\begin{equation}
\Delta_t\le c\min\left\{\frac{\eps^2}{M_0\Gamma^2},\frac{\eps}{\Gamma^2\log^3(e+\Gamma/\eps)}\right\}.
\label{eq:lattice-end-to-end}
\end{equation}

For the \Cref{thm:linear-response}, use the lattice time
\begin{align}
\widetilde{\tau}_D:=\Delta_t\left\lceil\frac{\tau_D}{\Delta_t}\right\rceil.
\end{align}
If $\Delta_t\le c\tau_D$, then
\begin{align}
\tau_D\le \widetilde{\tau}_D\le (1+c)\tau_D.
\end{align}
Replacing $\tau_D$ by $\widetilde{\tau}_D$, and $p_\xi=\tau_D\xi$ by $\widetilde p_\xi=\widetilde{\tau}_D\xi$, changes only universal constants in the candidate-generation bounds.

\section{SPAM robustness}
\label{sec:spam-robustness}

State-preparation-and-measurement (SPAM) noise changes the intended input states and measurements.  
We assume that this noise has been calibrated.  
It affects the two data types differently: same-Pauli measurements need reliable displacement records, while trace responses are multiplied by a known reliability factor.

A standard model is independent single-qubit depolarizing SPAM~\cite{romanov2026learning,ivashkov2026ansatz,zhou2026optimal}. 
For an intended single-qubit input state $\rho$, the actual prepared state is
\begin{align}
\rho \mapsto {\cal E}_P(\rho):=r_P\rho+(1-r_P)\frac{I}{2},
\end{align}
and immediately before measurement the state is acted on by
\begin{align}
\rho \mapsto {\cal E}_M(\rho):=r_M\rho+(1-r_M)\frac{I}{2}.
\end{align}
The retention factors $r_P$ and $r_M$ are known from calibration. 
Thus one experimental shot implements
\begin{align}
\rho \mapsto {\cal E}_M^{\otimes n}\circ e^{t\Lop}\circ {\cal E}_P^{\otimes n}(\rho),
\end{align}
followed by the intended terminal Pauli-basis measurement. 
We write $r:=r_Pr_M$ for the combined one-qubit retention factor.

The depolarizing model is a useful example, but the correction only needs a more general diagonal action on Pauli observables.  
We say that a product SPAM layer is calibrated Pauli diagonal when, on each qubit, the preparation and measurement noise ${\cal P}_i(\cdot)$ and ${\cal M}^{*}_i$ are known unital Pauli maps. 
In the Pauli basis,
\begin{align}
\begin{aligned}
{\cal P}_i(P)&=\alpha_i(P)P,\quad
{\cal M}^{*}_i(P)&=\beta_i(P)P,
\end{aligned}
\end{align}
with known real numbers $\alpha_i(P)$ and $\beta_i(P)$, and $P\in\{X,Y,Z\}$. 
Here, the second describes measurement noise after moving it onto the measured observable. 
For a prepared Pauli string $Q$ and a measured Pauli string $P$, set
\begin{align}
\alpha(Q):=\prod_{i:Q_i\ne I}\alpha_i(Q_i),\quad
\beta(P):=\prod_{i:P_i\ne I}\beta_i(P_i).
\end{align}
In the depolarizing model, we have
\begin{align}
\alpha(Q)=r_P^{\wt(Q)},\quad \beta(P)=r_M^{\wt(P)}.
\end{align}
The product $\alpha(Q)\beta(P)$ is the reliability factor for the trace experiment. 
It is the known factor by which SPAM multiplies the ideal trace response. 
We assume that every trace experiment used in the coefficient stages has a nonzero calibrated reliability factor, and define
\begin{equation}
\zeta_2:=\inf_{\text{trace experiments}(P,Q)} |\alpha(Q)\beta(P)|>0. \label{eq:trace-spam-visibility}
\end{equation}

\paragraph{Structure learning under SPAM.}The structure-learning stage uses three flip distributions from same-Pauli measurements to construct $C$.
Rather than deriving a full detector-specific SPAM model, we state the exact calibrated guarantee needed by the proof.

\begin{assumption}[Calibrated SPAM preserves the heavy-row candidate guarantee]\label{ass:spam-candidate}
At the threshold $\xi$ and time $\tau_D$, a calibrated implementation of the \Cref{thm:diss-row} outputs $C$ satisfying
\begin{align}
 S_{D,\xi}\subseteq C,
\qquad |C|\le C_0M_0
 \end{align}
with the same failure probability as \Cref{thm:diss-row}. 
Its shot and total evolution time costs are at
most a constant $K_1\ge1$ times the ideal costs.
\end{assumption}

The projection stage records the difference between the prepared and measured basis strings. 
Its proof only needs a lower bound on the probability that a true target displacement is recorded unchanged.  
We denote this calibrated probability by $\zeta_1$.

\begin{assumption}[Calibrated SPAM preserves each target displacement with probability $\zeta_1$]\label{ass:disp-spam}
In a $Z$- or $X$-basis projection shot, let $D_0$ be the ideal displacement that would have been recorded without SPAM, and let $D_1$ be the recorded displacement after SPAM.
There is a calibrated number $\zeta_1>0$ such that, uniformly over the basis, starting string, evolution time, and every nonzero displacement used in the projection proof,
\begin{align}
\Prb[D_1=d\mid D_0=d]\ge \zeta_1.
\end{align}
The SPAM randomness is independent of the Lindblad evolution and of the classical time draw.
\end{assumption}

\noindent Under Assumption~\ref{ass:disp-spam}, an ideal target-displacement probability $p_*$ becomes at least $\zeta_1p_*$ after SPAM. 
Multiplying the projection repetition count by $\zeta_1^{-1}$ preserves the covering guarantee.

\paragraph{Coefficient learning under SPAM.}Both coefficient-learning stages use the same primitive: prepare a Pauli eigenstate for $Q$, evolve for time $t$, and measure $P$. 
Under calibrated Pauli-diagonal SPAM, this trace response is only multiplied by a known preparation-and-measurement reliability factor.  
Dividing by that factor recovers an unbiased ideal response.

\begin{lemma}[Calibrated Pauli-diagonal SPAM only rescales a trace response]\label{lem:spam-trace-rule}
Under calibrated Pauli-diagonal SPAM, the cross-Pauli measurement targeting $(P,Q)$ satisfies
\begin{equation}
\E_S[rm]=\alpha(Q)\beta(P)\,2^{-n}\Tr\!\left(Pe^{t\Lop}(Q)\right). 
\label{eq:spam-trace-rule}
\end{equation}
Consequently, dividing the signed outcome by $\alpha(Q)\beta(P)$ gives an unbiased sample of the ideal trace response, with sample magnitude inflated by at most $\zeta_2^{-1}$.
\end{lemma}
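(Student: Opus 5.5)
The plan is to redo the proof of the trace-rule proposition, with the SPAM layers inserted on both sides of the channel. The Pauli-diagonal structure lets each layer be moved onto the Pauli string it touches.

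Let $\{\psi\}$ be the product eigenbasis of $Q$, with $Q|\psi\rangle=r(\psi)|\psi\rangle$ and $\rho_\psi=|\psi\rangle\langle\psi|$. Write $\mathcal P:=\bigotimes_i\mathcal P_i$ for the preparation noise and $\mathcal M:=\bigotimes_i\mathcal M_i$ for the measurement noise. The measurement of $P$ is done product-wise: each qubit is measured in its local basis $P_i$, and $m$ is the product of the local outcomes on $\supp P$.

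\textbf{Conditional mean.} For a fixed prepared $\psi$, the conditional mean of $m$ is
\[
\E[m\mid\psi]=\Tr\bigl(P\,\mathcal M(e^{t\Lop}(\mathcal P(\rho_\psi)))\bigr)=\Tr\bigl(\mathcal M^*(P)\,e^{t\Lop}(\mathcal P(\rho_\psi))\bigr).
\]
Here $\mathcal M^*$ is the Heisenberg-picture adjoint of the measurement noise. Since $\mathcal M^*=\bigotimes_i\mathcal M_i^*$ factorizes and each $\mathcal M_i^*$ is unital, we have $\mathcal M_i^*(I)=I$ and $\mathcal M_i^*(P_i)=\beta_i(P_i)P_i$. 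Therefore $\mathcal M^*(P)=\beta(P)P$.

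\textbf{Average over the input sign.} Multiplying by $r(\psi)$, averaging uniformly over $\psi$, and using linearity of $\mathcal P$, $e^{t\Lop}$ and the trace gives
\[
\E_S[rm]=\beta(P)\,2^{-n}\Tr\Bigl(P\,e^{t\Lop}\bigl(\mathcal P(\textstyle\sum_\psi r(\psi)\rho_\psi)\bigr)\Bigr)=\beta(P)\,2^{-n}\Tr\bigl(P\,e^{t\Lop}(\mathcal P(Q))\bigr).
\]
The last step uses the spectral identity $\sum_\psi r(\psi)\rho_\psi=Q$, exactly as in the trace-rule proof. Because $Q$ is a tensor product and $\mathcal P$ factorizes, with $\mathcal P_i(I)=I$ and $\mathcal P_i(Q_i)=\alpha_i(Q_i)Q_i$, we get $\mathcal P(Q)=\alpha(Q)Q$. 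This yields Eq.~\eqref{eq:spam-trace-rule}.

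\textbf{Consequence.} Define the corrected sample $rm/(\alpha(Q)\beta(P))$. It is unbiased for the ideal response $2^{-n}\Tr(Pe^{t\Lop}(Q))$. Since $|rm|\le1$ and $|\alpha(Q)\beta(P)|\ge\zeta_2$ by Eq.~\eqref{eq:trace-spam-visibility}, its magnitude is at most $\zeta_2^{-1}$.

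\textbf{Main point requiring care.} The only delicate step is justifying that the measurement noise acts as a Pauli-diagonal map on the measured observable in the Heisenberg picture. Two facts are needed here.
\begin{itemize}
\item Identity factors are preserved, which follows from unitality, so that $\beta(P)$ only collects factors on $\supp P$.
\item The local-outcome product equals the outcome of the observable $P$, so that the classical post-processing commutes with the product structure.
\end{itemize}
Both follow from the product structure and the definitions. The remaining steps are linear algebra.
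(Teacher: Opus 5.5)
Your proof is correct and follows the paper's argument: both move the measurement noise into the Heisenberg picture as $\mathcal M^*(P)=\beta(P)P$, use linearity and the spectral identity $\sum_\psi r(\psi)\rho_\psi=Q$ to get $\mathcal P(Q)=\alpha(Q)Q$, and bound the corrected sample by $|\alpha(Q)\beta(P)|^{-1}\le\zeta_2^{-1}$. Your remarks that unitality fixes the identity tensor factors and that the product of local outcomes realizes the observable $P$ spell out details the paper leaves implicit.
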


\begin{proof}
Let $\{\rho_\psi\}$ be the ideal product eigenbasis used for the trace-rule experiment targeting $Q$, with eigenvalue signs $r(\psi)$. 
In the ideal experiment, we have $\sum_\psi r(\psi)\rho_\psi=Q$.
With preparation noise, the signed prepared operator becomes $\sum_\psi r(\psi) {\cal P}(\rho_\psi)={\cal P}(Q)=\alpha(Q)Q$. 
For the measurement noise, we move the noise map from the final state to the observable. 
Thus, measuring after the noisy measurement layer is equivalent to measuring ${\cal M}^{*}(P)=\beta(P)P$.
Therefore, the signed observed average is
\begin{align}
\begin{aligned}
\E_S[rm]&=2^{-n}\Tr\!\left({\cal M}^{*}(P)e^{t\Lop}({\cal P}(Q))\right) \\
&=2^{-n}\Tr\!\left(\beta(P)P\,e^{t\Lop}(\alpha(Q)Q)\right) \\
&=\alpha(Q)\beta(P)\,2^{-n}\Tr\!\left(Pe^{t\Lop}(Q)\right).
\end{aligned}
\end{align}
If $\alpha(Q)\beta(P)\ne0$, division by this known scalar gives an unbiased ideal-response
sample. 
The original outcome has magnitude at most one, so the corrected outcome has magnitude at most $|\alpha(Q)\beta(P)|^{-1}\le\zeta_2^{-1}$.
\end{proof}

In the coefficient learning stage, all random samples used in Hoeffding bounds are larger by at most $\zeta_2^{-1}$, so the corresponding sample counts are multiplied by at most $\zeta_2^{-2}$.
In the depolarizing special case, the correction factor for a trace response $(P,Q)$ is $r_P^{-\wt(Q)}r_M^{-\wt(P)}$.
If the relevant preparation and measurement Pauli strings both have weight at most $k$, then $|\alpha(Q)\beta(P)|\ge (r_Pr_M)^k=r^k$, and the worst-case variance overhead is at most $r^{-2k}$.
Wrapping all these blocks together, we obtain the following result.

\paragraph{The overall SPAM guarantee.}After the calibrated corrections above, the ideal expectations and support guarantees are unchanged.
Only the number of repetitions increases: candidate generation pays $K_1$, displacement detection
pays $\zeta_1^{-1}$, and trace-response concentration pays $\zeta_2^{-2}$.

\begin{corollary}[Under the calibrated SPAM assumptions, only repetition costs change]\label{prop:spam-resource}
Assume the hypotheses of \Cref{thm:end-to-end} , calibrated SPAM satisfying \Cref{ass:spam-candidate} and~\Cref{ass:disp-spam}, and a nonzero trace-response reliability factor as in \eqref{eq:trace-spam-visibility}. 
Use the calibrated candidate procedure, multiply projection repetitions by $\zeta_1^{-1}$, and divide every trace-response sample by its calibrated reliability factor.
Then the coefficient guarantee of \Cref{thm:end-to-end} holds with probability at least $1-\delta$, and shot number and total evolution time of
\begin{align}
\begin{aligned}
N_\Sigma^S&=\widetilde O\!\left(K_1\frac{\Gamma^2M_0^2}{\eps^4}+\zeta_1^{-1}\frac{\Gamma^2}{\eps^2}+\zeta_2^{-2}\frac{M_0^2\Gamma^2}{\eps^2}\right),\\
T_\Sigma^S&=\widetilde O\!\left(K_1\frac{M_0}{\eps^2}+\zeta_1^{-1}\frac{\Gamma}{\eps^2}+\zeta_2^{-2}\frac{M_0^2\Gamma}{\eps^2}\right).
\end{aligned}
\end{align}
\end{corollary}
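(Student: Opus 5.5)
The plan is to rerun the proof of \Cref{thm:end-to-end} verbatim on the joint success event. The calibrated assumptions are used only to show that each stage keeps its ideal guarantee, at a cost that is multiplied by a known factor. I would split the failure probability $\delta$ into four parts, one each for candidate generation, projection, linear response, and fixed-sum estimation, and handle each stage separately.

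\textbf{Candidate generation.} \Cref{ass:spam-candidate} directly supplies $S_{D,\xi}\subseteq C$ and $|C|\le C_0M_0$. The only change from the ideal analysis is the constant $4\to C_0$, which affects $|\Sigma_C|\le C_0^2M_0^2$ and $\Delta_C$ by constants only. \Cref{lem:weak-tail} uses only the inclusion $S_{D,\xi}\subseteq C$, so it still holds. The cost here is $K_1$ times the ideal $\widetilde O(\Gamma^2M_0^2/\eps^4)$ shots and $\widetilde O(M_0/\eps^2)$ evolution time.

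\textbf{Projection.} In the proof of \Cref{thm:projection-stage}, the no-jump bound gives $\Pr[D_0=d]\ge p_*$ for each target displacement. By \Cref{ass:disp-spam}, and because the SPAM randomness is independent of the evolution and of the time draw, I would condition on $D_0=d$ to get $\Pr[D_1=d]\ge\zeta_1p_*$. Multiplying $N_2$ by $\zeta_1^{-1}$ then restores the covering bound $(1-\zeta_1p_*)^{N_2}\le\delta/(4M_0)$. Spurious displacements introduced by SPAM can only enlarge $\widehat\Pi_x,\widehat\Pi_z$. This harms nothing, because the later linear stage treats every $s\in\widehat U_3$ correctly. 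The one effect is that $\log|\widehat U_3|$ enters the union bound, and $|\widehat U_3|$ remains polynomial in the shot count, so this adds only logarithmic factors. Inclusion \eqref{eq:strong-H-in-U3} therefore still holds.

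\textbf{Coefficient stages.} By \Cref{lem:spam-trace-rule}, dividing each signed outcome by $\alpha(Q)\beta(P)$ gives an unbiased sample of the ideal response with modulus at most $\zeta_2^{-1}$. In \Cref{thm:one-sided-derivative}, \Cref{thm:linear-response}, and \Cref{prop:fixed-sum-batch}, the expectations, and hence the bias bounds, are unchanged. Only the Hoeffding range grows by $\zeta_2^{-1}$, so $N_3$ is multiplied by $\zeta_2^{-2}$. One check is needed for the random-frame estimator $V_s$: a single shot serves many labels $s$, each with its own $(q_{c,s},o_{c,s})$. The rescaling must use the reliability factor of that specific sub-parity, which is legitimate because the Pauli-diagonal product SPAM factorizes sitewise, so restricting to the supports of $q_{c,s}$ and $o_{c,s}$ gives factor $\alpha(q_{c,s})\beta(o_{c,s})$. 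The fixed-sum stage contributes $\zeta_2^{-2}M_0^2\Gamma^2/\eps^2$ shots and $\zeta_2^{-2}M_0^2\Gamma/\eps^2$ time.

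\textbf{Main obstacle.} I expect the delicate point to be the projection stage. I must verify that the independence clause of \Cref{ass:disp-spam} really allows multiplying probabilities after averaging over the time draw and the starting string $a$. I would argue pointwise in $(a,t)$ first and then integrate. I must also confirm that extra SPAM-induced candidates do not break the error guarantee on $\widehat U_3$. Once these hold, the accuracy argument of \Cref{thm:end-to-end} goes through word for word, and summing the three rescaled cost contributions gives the displayed $N_\Sigma^S$ and $T_\Sigma^S$.
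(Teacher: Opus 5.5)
Your proposal is correct and follows the paper's own argument. \Cref{ass:spam-candidate} keeps $S_{D,\xi}\subseteq C$ at cost $K_1$, \Cref{ass:disp-spam} turns $p_*$ into $\zeta_1 p_*$ so the projection repetitions scale by $\zeta_1^{-1}$, and \Cref{lem:spam-trace-rule} leaves every expectation unchanged while inflating the Hoeffding range by $\zeta_2^{-1}$, after which the accuracy argument of \Cref{thm:end-to-end} carries over. Your extra checks fill in details the paper leaves implicit: the per-parity factor $\alpha(q_{c,s})\beta(o_{c,s})$ in the random-frame shots, and the harmless SPAM-induced enlargement of $\widehat U_3$, which costs only a logarithmic factor.
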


\end{document}